\newtheorem{thm}{Theorem}
\newtheorem{lem}{Lemma}
\newtheorem{cor}{Corollary}
\newtheorem{ass}{Assumption}
\theoremstyle{definition}
\newtheorem{exm}{Example}
\newtheorem{rmk}{Remark}
\DeclareMathOperator*{\argmin}{arg\,min}
\title{Functional Synthetic Control Methods for Metric Space-Valued Outcomes}
\author{Ryo Okano\thanks{Graduate School of Economics, Hitotsubashi University} \thanks{RIKEN Center for Advanced Intelligence Project} \and Daisuke Kurisu\thanks{Center for Spatial Information Science, The University of Tokyo}}
\date{\today}
\begin{document}

\maketitle
\vspace{-2em}

\begin{abstract}
The synthetic control method (SCM) is a widely used tool for evaluating causal effects of policy changes in panel data settings. Recent studies have extended its framework to accommodate complex outcomes that take values in metric spaces, such as distributions, functions, networks, covariance matrices, and compositional data. 
However, due to the lack of linear structure in general metric spaces, 
theoretical guarantees for estimation and inference within these extended frameworks remain underdeveloped. 
In this study, we propose the functional synthetic control (FSC) method as an extension of the SCM for metric space-valued outcomes. To address challenges arising from the nonlinearlity of metric spaces, we leverage isometric embeddings into Hilbert spaces.
Building on this approach, we develop the FSC and augmented FSC estimators for counterfactual outcomes, with the latter being a bias-corrected version of the former.
We then derive their finite-sample error bounds to establish theoretical guarantees for estimation,  and construct prediction sets based on these estimators to conduct inference on causal effects.
We demonstrate the usefulness of the proposed framework through simulation studies and three empirical applications.

\end{abstract}

\section{Introduction}

The synthetic control method (SCM), originally proposed by \citet{abadie2003economic} and \citet{abadie2010synthetic}, is a widely used tool for evaluating the causal effects of policy changes in panel data settings.
It is designed for situations in which a single unit is exposed to a policy intervention while the other units are not, and their outcomes are observed both before and after the intervention.
In such settings, a weighted average of the control units that closely matches the treated unit’s pre-treatment outcomes---referred to as a synthetic control---is constructed to estimate the counterfactual outcomes of the treated unit.
Because of its simple and precise interpretability, the SCM has been widely applied in empirical research in economics and other disciplines \citep{abadie2021using}. 
Moreover, since its original introduction, the SCM has been extended and modified in various ways \citep{robbins2017framework, amjad2018robust, abadie2021penalized, ben2021augmented, ben2022synthetic}.

While causal inference has traditionally focused on scalar outcomes, or more generally, outcomes situated in a Euclidean space, modern data analysis increasingly encounters data with complex structures. Examples of such data include functions \citep{ wang2016functional}, distributions \citep{petersen2022modeling}, networks \citep{severn2022manifold}, covariance matrices \citep{arsigny2007geometric} and compositional data \citep{greenacre2021compositional}. 
Such data are often inherently non-Euclidean and can be described as elements of a metric space that satisfies certain structural conditions \citep{dubey2024metric}. 

Causal inference for outcomes that lie in a metric space is a rapidly evolving area. 
Recent studies include the estimation of average treatment effects under the unconfoundedness assumption \citep{lin2023causal, kurisu2024geodesic, bhattacharjee2025doubly}, the difference-in-differences \citep{opoku2023differences, zhou2025geodesic, boussim2025compositional}, and the regression discontinuity designs \citep{van2025regression, kurisu2025regression}. For the SCM, \cite{gunsilius2023distributional} introduced a framework for univariate distributional outcomes that employs the Wasserstein metric.  \cite{kurisu2025geodesic} proposed a geodesic synthetic control (GSC) method for outcomes residing in general geodesic metric spaces, encompassing distributions, networks, compositional data, and covariance matrices as examples. Importantly, they also developed the augmented GSC and the geodesic synthetic difference-in-differences, which are extensions of GSC, motivated by the augmented SCM introduced by \cite{ben2021augmented} and the synthetic difference-in-differences introduced by  \cite{arkhangelsky2021synthetic} for scalar outcomes, respectively. 

Although the existing SCM framework has been extended by the aforementioned studies, their estimation methods are not yet fully supported by solid theoretical foundations.
In the case of scalar outcomes rather than metric space-valued outcomes, \citet{abadie2010synthetic} demonstrated that when the pre-treatment fit by synthetic control is perfect,  the SCM estimator can be unbiased, or that its bias can be bounded, under certain data-generating assumptions.
\citet{ben2021augmented} further derived finite-sample error bounds for general weighting estimators, showing that the errors are governed by the quality of the pre-treatment fit and the norm of the weights. This result implies that (i) the SCM yields accurate estimates when it achieves good pre-treatment fit, and (ii) augmentation can further improve estimation accuracy by enhancing the pre-treatment fit.
However, for the extended SCM frameworks, comparable theoretical guarantees remain largely underdeveloped.
Moreover, inference and uncertainty quantification procedures, such as the construction of prediction sets for potential outcomes \citep{chernozhukov2021exact}, are also still being developed for the extended frameworks.
These challenges primarily stem from the lack of linear structure in the spaces where the outcomes reside.

In this paper, we propose the functional synthetic control (FSC) method as an extension of the SCM for metric space-valued outcomes. 
To address the challenges arising from the lack of linear structure, we leverage isometric embeddings of the metric spaces into  Hilbert spaces.
This approach enables the use of linear operations, inner products, and basis expansions in the latent Hilbert spaces without inducing metric distortions.
Such embeddings are feasible for various types of metric spaces, including spaces of distributions, networks, covariance matrices, and functions.
Note that causal inference methods based on isometric embeddings have also been explored in \citet{kurisu2025regression} and \citet{bhattacharjee2025doubly}.

Building on this approach, we first develop the FSC estimator as a direct extension of the SCM estimator, and then introduce the augmented FSC estimator as a bias-corrected version of the FSC estimator. For the latter, following \cite{ben2021augmented}, we focus on augmentation using a ridge regression model. We show that the ridge augmented FSC estimator possesses properties analogous to those of the ridge augmented SCM in \cite{ben2021augmented}: it can be expressed as a weighted average of the control units’ outcomes, and both the pre-treatment fit and the norm of the augmented weights are controlled by the ridge regularization parameter.

After proposing the estimators, we derive finite-sample estimation error bounds for general weighting estimators under several data generating processes, including functional versions of an autoregressive model and a linear factor model. These bounds show that,  similar to the case of scalar outcomes, estimation errors are governed by the quality of the pre-treatment fit and the norm of the weights. 
This implies that the FSC yields accurate estimates when it achieves a good pre-treatment fit.
The bounds further demonstrate that augmentation can improve estimation accuracy by enhancing the quality of the pre-treatment fit.
The bounds also show that the augmentation can improve estimation accuracy by enhancing the pre-treatment fit.
Finally, we extend inference procedures developed for the standard SCM to our setting, including the construction of prediction sets for counterfactual outcomes based on conformal inference \citep{chernozhukov2021exact} and the placebo permutation test \citep{abadie2010synthetic}.

Although our theoretical results may appear parallel to those of \citet{ben2021augmented}, their derivation is nontrivial in several important respects.
First, we consider outcomes taking values in generally infinite-dimensional spaces, whereas \citet{ben2021augmented} focus on scalar outcomes in a finite-dimensional space. To address infinite dimensionality, we employ orthogonal basis expansions and truncate the corresponding coefficient vectors when constructing augmented estimators. Our theoretical analysis explicitly accounts for the approximation error introduced by this truncation.
Second, we assume that the outcomes take values in a metric space that is isometrically embedded not in the entire Hilbert space but in a proper subset thereof. As a result, when mapping estimates from the latent Hilbert space back to the original metric space, a projection onto this subset is required. Our theoretical results incorporate the effect of this projection step on the estimation error.

The remainder of the paper is organized as follows. Section \ref{sec:setup} describes the setting and provides examples of outcomes that our framework can accommodate. Section \ref{sec:method} introduces the FSC and augmented FSC methods, and investigates the properties of the augmented estimator. Section \ref{sec:est_error} presents the finite-sample error bounds and the inference procedures. Section \ref{sec:simulations} reports simulation studies, and Section \ref{sec:empirical} illustrates the proposed framework with three empirical applications: the effect of an abortion legislation on fertility patterns in East Germany, the impact of the collapse of the Soviet Union on mortality, and the effect of the Brexit announcement on UK services trade. 
Finally, Section \ref{sec:conclusion} concludes with a brief discussion. 
Auxiliary results and theoretical proofs are provided in the Appendix.
Code to implement the methods, simulations, and empirical illustrations described in this paper can be found at
\url{https://github.com/RyoOkano21/FSC}. 

\section{Setup}
\label{sec:setup}
\subsection{Outcomes in Metric Space and Causal Effects}
Consider a canonical panel data setting for the synthetic control method (SCM), in which outcomes are observed for $N$ units over $T$ periods. The units are indexed by $i=1, \ldots, N$, and the time periods are indexed by $t=1, \ldots, T$.
In what follows, we fix both $N$ and $T$.
Assume that there exists a time period $T_0$ such that (i) no units receive treatment for $1 \le t \le T_0$, and (ii) only unit $i = 1$ receives treatment for $T_0 + 1 \le t \le T$.

In this study, we assume that the outcomes take values in a metric space $(\mathcal{M}, d)$ that can be isometrically embedded into a separable Hilbert space, in particular, an $L^2$ space or a Euclidean space. 
Specifically, let $\mathcal{H}$ denote either the $L^2$ space $L^2(\mathcal{I})$ of real-valued squared integrable functions on a compact interval $\mathcal{I}$, or the 
Euclidean space $\mathbb{R}^d$. 
We denote the equipped inner product by $\langle \cdot, \cdot \rangle_{\mathcal{H}}$ and define the corresponding norm  by $\|h\|_{\mathcal{H}} = \sqrt{\langle h, h \rangle}_{\mathcal{H}}$.  
When $\mathcal{H}$ is the $L^2$ space, the $L^2$ inner product $\langle f_1, f_2 \rangle_{\mathcal{H}} = \int_{\mathcal{I}}f_1(x)f_2(x)dx$ is equipped. 
When $\mathcal{H}$ is the Euclidean space, the standard inner product is equipped. 
We assume that there exists an isometry from $\mathcal{M}$ to $\mathcal{H}$, that is, a 
map $\Psi: \mathcal{M} \to \mathcal{H}$ such that $d(x, y) = \|\Psi(x) - \Psi(y)\|_{\mathcal{H}}$ for all $x, y \in \mathcal{M}$. Such a map $\Psi$ is automatically a bijection from $\mathcal{M}$ onto its image $\Psi(\mathcal{M}) \subset \mathcal{H}$. 
We further assume that the range $\Psi(\mathcal{M})$ is closed and convex in $\mathcal{H}$. Under these assumptions, 
the metric space $\mathcal{M}$ is a unique geodesic metric space \citep{kurisu2024geodesic}. 
Specifically, for $\alpha, \beta \in \mathcal{M}$, the unique geodesic between them, $\gamma_{\alpha, \beta}: [0, 1] \to \mathcal{M}$, is given by 
\begin{equation}
    \gamma_{\alpha, \beta}(s) = \Psi^{-1}((1-s)\Psi(\alpha) + s\Psi(\beta)), \quad s \in [0, 1]. 
    \label{eq:geodesic_form}
\end{equation}
Examples of metric spaces satisfying the above assumptions are presented in Section \ref{subsec:examples}.

For unit $i = 1, \ldots, N$ at time period $t = 1, \ldots, T$, we denote the potential outcomes under treatment and control by $\nu_{it}^I \in \mathcal{M}$ and $\nu_{it}^N \in \mathcal{M}$, respectively.
The observed outcome $\nu_{it}$ is 
\begin{equation}
    \nu_{it} = 
    \begin{cases}
        \nu_{it}^N & \text{if}\,\ i \ge 2 \,\ \text{or} \,\ t \le T_0, \\ 
        \nu_{it}^I & \text{if}\,\ i =1 \,\ \text{and} \,\ T_0+1 \le t \le T.
    \end{cases}
\end{equation}
We denote the image $\Psi(\mathcal{M})$ by $\mathcal{Y}$, and define the transformed outcomes in the space $\mathcal{Y}$ as
$Y_{it}^I = \Psi(\nu_{it}^I), Y_{it}^N = \Psi(\nu_{it}^N)$ and $Y_{it} = \Psi(\nu_{it})$.

Our estimands of interest are the counterfactual objects $\nu_{1t}^N, t=T_0+1, \ldots ,T$ for the first unit in the post-treatment periods. Once estimates of the counterfactual objects are obtained, 
one can estimate the individual treatment effects of the first unit, which can be flexibly defined according to the user’s specific interests. For example, following \cite{kurisu2025geodesic}, one can define the causal effects as the geodesics between the treatment and control potential outcomes, that is, 
\begin{equation}
    \tau_t
    =
    \gamma_{\nu_{1t}^N, \nu_{1t}^I}
    =
    \gamma_{\nu_{1t}^N, \nu_{1t}}, \quad t=T_0+1, \ldots , T.
\end{equation}
Since a geodesic connecting two points in a geodesic metric space embodies both the shortest distance and directional information between them, $\tau_t$ provide natural extensions for quantifying the treatment effects in geodesic metric spaces \citep{kurisu2024geodesic}.
Another possible option is to define  the differences between the transformed potential outcomes,  
\[
Y_{1t}^I-Y_{1t}^N = Y_{1t} - Y_{1t}^N, \quad t=T_0+1, \ldots, T,
\]
as the causal effects.
For example, when $\mathcal{M}$ is the space of one-dimensional probability distributions equipped with the 2-Wasserstein distance, one can set $\Psi(\nu)=F_\nu^{-1}(\cdot)$, where $F_\nu^{-1}$ is the quantile function of $\nu \in \mathcal{M}$  (see Example \ref{exm:distribution} in the next subsection). In this case, $Y_{1t}^{I} - Y_{1t}^N$ is defined as the difference between the corresponding quantile functions. 
Furthermore, one can consider the magnitudes of the treatment effects, which are defined as the lengths of the geodesics $\tau_t$, 
\[
d_t = d(\nu_{1t}^I, \nu_{1t}^N) = \|Y_{1t}^I-Y_{1t}^N\|_\mathcal{H},
\]
as the causal estimands.

\subsection{Examples}
\label{subsec:examples}
Examples of metric spaces that our approach can handle and that frequently arise in applications are given below. We will revisit some of these examples in the empirical analysis.

\begin{exm}[Functional data]
\label{exm:function}
Data consisting of functions is referred to as functional data
\citep{ramsay2005functional, hsing2015theoretical, wang2016functional}.
They are typically assumed to be situated in the space $L^2(\mathcal{I})$ of real-valued square integrable functions on a compact interval $\mathcal{I}$. This space is equipped with the inner product $\langle f_1, f_2 \rangle_{L^2}=\int_{\mathcal{I}} f_1(x)f_2(x)dx$, which induces the $L^2$ metric, given by $\|f_1-f_2\|_{L^2} = \sqrt{\int_\mathcal{I}\{f_1(x) - f_2(x)\}^2dx}$. 
\end{exm}

\begin{exm}[One-dimensional probability distributions]
\label{exm:distribution}
    Distributional data arise when each data point is regarded as a probability distribution
    \citep{petersen2022modeling, brito2022analysis}.
    Let $\mathcal{W}$ denote the space of probability distributions on a real line, with finite second moments, equipped with the 2-Wasserstein distance
    $d_{W}(\mu_1, \mu_2) = 
    \sqrt{\int_0^1 \{F_{\mu_1}^{-1}(u) - F_{\mu_2}^{-1}(u)\}^2du}$.
   Here, $F_{\mu}^{-1}$ denotes the quantile function of a probability distribution $\mu$.
    The metric space $(\mathcal{W}, d_W)$ is referred to as the Wasserstein space \citep{bigot2020statistical, panaretos2020invitation}, and distributional data are typically assumed to reside within this space. The mapping $\Psi(\mu) = F_{\mu}^{-1}$ is clearly an isometry from $\mathcal{W}$ to the $L^2$ space $L^2([0, 1])$. Moreover, the image $\Psi(\mathcal{W})$ is characterized as the set of all square-integrable, almost everywhere increasing functions on $[0, 1]$ \citep[Remark 2.2,][]{bigot2017geodesic}, and it
    is closed and convex in $L^2([0, 1])$ \citep[Proposition 2.1,][]{bigot2017geodesic}. 
\end{exm}

\begin{exm}[Symmetric positive semidefinite matrices]
\label{exm:covmat}
Let $\text{Sym}_m^{+}$ denote the space of $m \times m$ symmetric positive semidefinite matrices, and $\text{Sym}_m^{++}$ the space of $m \times m$ symmetric positive definite matrices.
These spaces have been studied under various metrics \citep{arsigny2007geometric, dryden2009non, pigoli2014distances, lin2019riemannian}.
The Frobenius metric $d_F$ on $\text{Sym}_m^+$ is defined as $d_F(A,B) = [\text{tr}\{(A-B)'(A-B)\}]^{1/2}$. In this case, the space $\text{Sym}_m^+$ obviously forms a closed convex subset of the Euclidean space $\mathbb{R}^{m^2}$.
For any constant $p > 0$, the power metric $d_{F, p}$ on $\text{Sym}_m^+$ is defined as  $d_{F, p}(A, B) = d_F(A^p, B^p)$. In this case, the mapping $\Psi(A) = A^p$ is an isometry from $\text{Sym}_m^+$ to the Euclidean space $\mathbb{R}^{m^2}$, and the image $\Psi(\text{Sym}_m^+) = \text{Sym}_m^+$ is closed and convex in $\mathbb{R}^{m^2}$.
The Log-Euclidean metric $d_{\mathrm{LE}}$ on $\text{Sym}_m^{++}$ is defined as $d_{\mathrm{LE}}(A,B) = d_F(\log(A),\log(B))$, where $\log(A)$ denotes the matrix logarithm of the matrix $A$.
In this case, the mapping $\Psi(A) =\log(A)$ is an isometry from $\text{Sym}_m^{++}$ to the Euclidean space $\mathbb{R}^{m^2}$, and the image $\Psi(\text{Sym}_m^{++}) = \{B \in \mathbb{R}^{m^2}: B = B'\}$ is closed and convex in $\mathbb{R}^{m^2}$. 
\end{exm}

\begin{exm}[Networks]
Consider a simple, undirected, and weighted network with a set of nodes $\{v_1, \ldots, v_m\}$ and a set of bounded edge weights $ \{w_{pq}: p, q=1, \ldots, m\}$, where  $0 \le w_{pq} \le W$. Such a network can be uniquely represented by its graph Laplacian matrix $L = (l_{pq}) \in \mathbb{R}^{m^2}$, defined as 
\[
l_{pq} = 
\begin{cases}
    -w_{pq} & \text{if}\,\, p \neq q, \\
    \sum_{r \neq p}w_{pr} & \text{if}\,\, p = q, 
\end{cases}
 \quad 
    p, q = 1, \ldots, m.
\]
The space of graph Laplacians is given by
\begin{equation}
    \mathcal{L}_m = \{L = (l_{pq}): L = L', L1_m = 0_m, -W \le l_{pq} \le 0 \,\, \text{for}\,\, p \neq q
    \},
\end{equation}
where $1_m$ and $0_m$ are the $m$-vectors of ones and zeros, respectively. This space provides a natural framework for characterizing network structures \citep{kola:14, severn2022manifold, zhou2022network}. Equipped with the Frobenius metric $d_F$, the space of graph Laplacians $\mathcal{L}_m$ forms a closed and convex subset of the Euclidean space $\mathbb{R}^{m^2}$  \citep[Proposition 1,][]{zhou2022network}.
\end{exm}

\begin{exm}[Compositional data]
Compositional data are non-negative multivariate data that convey relative rather than absolute information \citep{aitchison1982statistical, pawlowsky2011compositional, greenacre2021compositional}.
If researchers are interested in the space of compositional data with strictly positive components, that is,
    \begin{equation}
        \Delta_+^{d-1}
        =
        \left\{y \in \mathbb{R}^d: \sum_{j=1}^d y_j = 1, 
        y_j > 0 \,\ \text{for}\,\  j=1, \dots, d \right\},
    \end{equation}
    then they can endow $\Delta_+^{d-1}$ with the Aitchison metric $d_A$.  This distance is defined by $ d_A(x, y) = \|\Psi(x) - \Psi(y)\|_2$, where $\|\cdot\|_2$ is the Euclidean norm on $\mathbb{R}^d$, and 
    \begin{align}
    \Psi(x) &= \left(\log {\frac{x_1}  {g(x)}},\dots, \log {\frac{x_d} {g(x)}}\right),\quad g(x) = \left(\prod_{j=1}^d x_j\right)^{1/d}.
    \end{align}
    The map $\Psi$ is clearly an isometry from $\Delta_+^{d-1}$ to $\mathbb{R}^d$. Moreover, it can be shown that $\Psi(\Delta_+^{d-1}) = \{y \in \mathbb{R}^d:\sum_{j=1}^d y_j=0\}$, which is a closed convex subset of $\mathbb{R}^d$. 
\end{exm}

\begin{rmk}[Isometric embedding of metric spaces into Hilbert spaces]
There are several general theories concerning the embedding of metric spaces.  
A classical result, due to \cite{Schoenberg1935RemarksTM} and \cite{schoenberg1938metric}, states that a metric space \( (\mathcal{M}, d) \) can be isometrically embedded into a Hilbert space if and only if it has 2-negative type. That is,  
for any \( n \ge 2 \), \( x_1, \ldots, x_n \in \mathcal{M} \) and \( \alpha_1, \ldots, \alpha_n \in \mathbb{R} \) with \( \sum_{i=1}^n \alpha_i = 0 \), it holds that $\sum_{i=1}^n \sum_{j=1}^n \alpha_i \alpha_j d^2(x_i, x_j) \le 0$.
\end{rmk}

\section{Functional Synthetic Control Method and Its Augmentation}
\label{sec:method}
\subsection{Functional Synthetic Control Method}
The standard SCM estimates the missing potential outcomes for the treated unit as weighted averages of the control units' outcomes \citep{abadie2003economic, abadie2010synthetic}. 
The weights are chosen to balance the pre-treatment outcomes.
We extend the standard SCM to our setting as follows.
For simplicity, we omit auxiliary covariates here; their incorporation into our framework is discussed in Section \ref{sec:covariates} in the Appendix.

Let $\Delta^{N-1}$ denote the standard simplex in $\mathbb{R}^{N-1}:$ 
\[
\Delta^{N-1}
=
\left\{\gamma = (\gamma_2, \ldots, \gamma_N) \in \mathbb{R}^{N-1}: \sum_{i=2}^N \gamma_i = 1, \gamma_i \ge 0\,\ \text{for}\,\  i=2, \ldots, N\right\}.
\]
The functional synthetic control (FSC) method chooses weights $\hat{\gamma}^{\text{scm}} = (\hat{\gamma}_i^{\text{scm}})_{i=2}^N \in \Delta^{N-1}$ to balance the pre-treatment outcomes within the space $\mathcal{H}$, i.e., 
\begin{equation}
\hat{\gamma}^{\text{scm}}
\in
\argmin_{\gamma \in \Delta^{N-1}} \sum_{t=1}^{T_0} \left\| Y_{1t} - \sum_{i=2}^N \gamma_i Y_{it} \right\|_{\mathcal{H}}^2.
\label{eq:optim_fscm}
\end{equation}
For $t=T_0+1, \ldots, T$, the 
FSC estimator for the transformed counterfactual outcome $Y_{1t}^N$ is defined as 
\begin{equation}
    \hat{Y}_{1t}^{N, \mathrm{scm}}
    \in
    \sum_{i=2}^N \hat{\gamma}_i^{\mathrm{scm}}Y_{it}.
    \label{eq:fscm_est}
\end{equation}
Note that since the set $\mathcal{Y}$ is convex and $\hat{Y}_{1t}^{N, \mathrm{scm}}$ is a convex combination of $Y_{it}$'s, 
the estimator $ \hat{Y}_{1t}^{N, \mathrm{scm}}$ always lies in $\mathcal{Y}$.
The FSC estimator for the counterfactual outcome $\nu_{1t}^N$ is then defined as 
\begin{equation}
     \hat{\nu}_{1t}^{N, \mathrm{scm}}
    =
    \Psi^{-1}(\hat{Y}_{1t}^{N, \mathrm{scm}}).
    \label{eq:fscm_y}
\end{equation}

\begin{rmk}[Connection with the geodesic synthetic control estimator]
    \cite{kurisu2025geodesic} propose the geodesic synthetic control (GSC) estimator $\hat{\nu}_{1t}^{N, \mathrm{gsc}}$ for $\nu_{1t}^N$. This estimator is defined as a weighted Fr\'{e}chet mean \citep{Fréchet1948} of the control units' outcomes: $\hat{\nu}_{1t}^{N, \mathrm{gsc}} = \argmin_{\nu \in \mathcal{M}}\sum_{i=2}^N \hat{\gamma}_i^{\mathrm{gsc}}d^2(\nu, \nu_{it})$, where the weights are obtained by
    \begin{align}
        \hat{\gamma}^{\mathrm{gsc}} =    \argmin_{\gamma \in \Delta^{N-1}} \sum_{t=1}^{T_0}d^2(\nu_{1t}, \nu_{2:N, t}^{(\gamma)}), \quad 
        \nu_{2:N, t}^{(\gamma)}
        =
        \argmin_{\nu \in \mathcal{M}}
        \sum_{i=2}^N \gamma_i d^2(\nu, \nu_{it}).
    \end{align}
    In our setting, the FSC estimator $\hat{\nu}_{1t}^{N, \mathrm{scm}}$ and  the GSC estimator $\hat{\nu}_{1t}^{N, \mathrm{gsc}}$ are identical. 
    To see this, observe that for any $\gamma \in \Delta^{N-1}$, we have $\nu_{2:N, t}^{(\gamma)} = \Psi^{-1}(\sum_{i=2}^N \gamma_i Y_{it})$, which implies $\sum_{t=1}^{T_0 }d^2(\nu_{1t}, \nu_{2:N, t}^{(\gamma)}) = \sum_{t=1}^{T_0}\|Y_{1t} - \sum_{i=2}^N\gamma_i Y_{it}\|_{\mathcal{H}}^2$. Therefore, the weights $\hat{\gamma}^{\mathrm{scm}}$ and $\hat{\gamma}^{\mathrm{gsc}}$ are identical, and thus the estimators $\hat{\nu}_{1t}^{N, \mathrm{scm}}$ and $\hat{\nu}_{1t}^{N, \mathrm{gsc}}$ coincide.
\end{rmk}

\subsection{Augmented Functional Synthetic Control Method}
In many applications, the pre-treatment fit achieved by the weights $\hat{\gamma}^{\mathrm{scm}}$ is imperfect,  i.e., $Y_{1t} \neq \sum_{i=2}^N \hat{\gamma}_i Y_{it}$ for some $t \in \{1, \ldots, T_0\}$.
In such cases, the FSC estimator $\hat{\nu}_{1t}^{N, \mathrm{scm}}$  may be biased for the true potential outcome $\nu_{1t}^N$. 
To address this issue, we propose the augmented FSC method, a de-biased version of the FSC method following the approach of \cite{ben2021augmented}.

\paragraph{General form.}
Fix a post-treatment period $t \in \{T_0+1, \ldots, T\}$. For $i= 1, \ldots, N$, let $\hat{m}_{it} \in \mathcal{H}$ be an estimate of the control potential outcome $Y_{it}^N$ based on a regression model and the pre-treatment outcomes. We define the augmented FSC estimator for the transformed counterfactual $Y_{1t}^{N}$ as 
\begin{equation}
    \hat{Y}_{1t}^{N, \mathrm{aug}}
    =
   \hat{Y}_{1t}^{N, \mathrm{scm}}
    +
    \underbrace{\left(\hat{m}_{1t} - \sum_{i=2}^N \hat{\gamma}_i^{\mathrm{scm}}\hat{m}_{it}\right)}_{\text{estimate of the bias of $\hat{Y}_{1t}^{N, \mathrm{scm}}$} },
    \label{eq:aug_est}
\end{equation}
where $\hat{Y}_{1t}^{N, \mathrm{scm}}$ is the FSC estimator in \eqref{eq:fscm_est}, and
$\hat{\gamma}_i^{\mathrm{scm}}$'s are the weights obtained in \eqref{eq:optim_fscm}.
Unlike the FSC estimator, the augmented FSC estimator $\hat{Y}_{1t}^{N, \mathrm{aug}}$ does not necessarily lie in the space $\mathcal{Y}$. This motivates us to modify $\hat{Y}_{1t}^{N, \mathrm{aug}}$ by  projecting it  onto $\mathcal{Y}$:
\begin{equation}
    \tilde{Y}_{1t}^{N, \mathrm{aug}}
    =
    \argmin_{y \in \mathcal{Y}}\| y - \hat{Y}_{1t}^{N, \mathrm{aug}}\|_{\mathcal{H}},
    \label{eq:aug_fscm_modif}
\end{equation}
where the existence and uniqueness of the minimizer in \eqref{eq:aug_fscm_modif} are guaranteed by Lemma \ref{lem:proj} in Appendix \ref{sec:projection}.
The augmented FSC estimator for the counterfactual outcome $\nu_{1t}^N$ is then defined as 
\begin{equation}
     \hat{\nu}_{1t}^{N, \mathrm{aug}}
    =
    \Psi^{-1}(\tilde{Y}_{1t}^{N, \mathrm{aug}}).
    \label{eq:afscm_y}
\end{equation}

\begin{rmk}[Connection with the augmented geodesic synthetic control estimator]
As a bias-corrected version of the GSC estimator, \cite{kurisu2025geodesic} propose the augmented GSC estimator. For $i=1, \ldots, N$, let $\hat{\mu}_{it} \in \mathcal{M}$ denote an estimate of the control outcome $\nu_{1t}^N$ based on a regression model and the pre-treatment outcomes. 
The augmented  GSC estimator for $\nu_{1t}^N$ is defined as 
\begin{equation}
    \hat{\nu}_{1t}^{N, \mathrm{agsc}}
    =
    \Gamma_{\hat{\mu}_{0t}, \hat{\mu}_{1t}}(\hat{\nu}_{1t}^{N, \mathrm{gsc}}),
\end{equation}
where $\hat{\mu}_{0t} = \argmin_{\nu \in \mathcal{M}}\sum_{i=2}^N \hat{\gamma}_i^{\mathrm{gsc}}d^2(\nu, \hat{\mu}_{it})$, and $\Gamma_{\hat{\mu}_{0t}, \hat{\mu}_{1t}}(\cdot)$ denotes the geodesic transport map determined by $\hat{\mu}_{0t}$ and $\hat{\mu}_{1t}$ \citep{zhu2025geodesic}. The augmented GSC estimator and the augmented FSC estimator coincide when the geodesic transport map is appropriately specified. Specifically, suppose that $\Psi(\hat{\nu}_{1t}^{N, \mathrm{gsc}}) + \Psi(\hat{\mu}_{1t})-\Psi(\hat{\mu}_{0t}) \in \mathcal{Y}$. If we set $\Gamma_{\hat{\mu}_{0t}, \hat{\mu}_{1t}}(\hat{\nu}_{1t}^{N, \text{gsc}})=\Psi^{-1}(\Psi(\hat{\nu}_{1t}^{N, \text{gsc}}) + \Psi(\hat{\mu}_{1t})-\Psi(\hat{\mu}_{0t}))$, then the two estimators coincide.
\end{rmk}

\begin{rmk}[Modification via rearrangement method]
\label{rmk:rearrangement}
As shown in Example \ref{exm:distribution}, if $\mathcal{M}$ is the Wasserstein space, then $\mathcal{H}$ is the $L^2$ space $L^2([0, 1])$, and $\mathcal{Y}$ is the set of all square-integrable, almost everywhere increasing functions on $[0, 1]$. 
In this case, instead of using the projection in \eqref{eq:aug_fscm_modif}, one can apply the rearrangement method \citep{chernozhukov2009improving} to modify the estimator $\hat{Y}_{1t}^{N, \text{aug}}$. 
According to Proposition 1 in \cite{chernozhukov2009improving}, the increasing rearrangement of $\hat{Y}_{1t}^{N, \text{aug}}$, denoted by $\check{Y}_{1t}^{N, \text{aug}}$, weakly reduces the estimation error: $\|Y_{1t}^N - \check{Y}_{1t}^{N, \text{aug}}\|_{\mathcal{H}} \le \|Y_{1t}^N - \hat{Y}_{1t}^{N, \text{aug}}\|_{\mathcal{H}}$.
\end{rmk}

\paragraph{Ridge augmented FSC method.}
While the definition in \eqref{eq:aug_est} is general, the choice of the estimates $\hat{m}_{it}$ is crucial for understanding the  properties of the method.
In this study, we focus on a specific case where the estimates are obtained using a ridge-regularized linear model, following the approach of \cite{ben2021augmented}.
In what follows, to treat different cases in a unified manner, we represent the Hilbert space $\mathcal{H}$ as the space of real-valued squared integrable functions on a measure space $(\mathcal{X, \mathcal{A}, \mu})$. 
When $\mathcal{H}$ is the $L^2$ space $L^2(\mathcal{I})$, we take $\mathcal{X}$ as the interval $\mathcal{I}$ and $\mu$ as the Lebesgue measure on $\mathcal{I}$. When $\mathcal{H}$ is the Euclidean space $\mathbb{R}^d$, we take $\mathcal{X}$ as the finite set $\{1, \ldots, d\}$ and $\mu$ as the counting measure. 

As a preliminary, we center the pre-treatment outcomes $Y_{it}$. 
For $t=1, \ldots, T_0$, let $\bar{Y}_{t} = (N-1)^{-1}\sum_{i=2}^N Y_{it}$ be the mean of the control units' outcomes. Define 
\[
X_{it} = Y_{it} - \bar{Y}_t, \quad i=1, \ldots, N.
\]
Then, the values $X_{it}, i=2, \ldots, N$ are centered in the sense that $\sum_{i=2}^N X_{it} = 0$.

Fix a post-treatment period $t \in \{T_0+1, \ldots, T\}$ and consider the following estimate of $Y_{it}^N$ based on a linear regression model: 
\begin{equation}
    \hat{m}_{it}(x) = \eta_0(x) + \sum_{s=1}^{T_0}  \langle \eta_s(x, \cdot), X_{is}\rangle_{\mathcal{H}}, \quad x \in \mathcal{X},
    \label{eq:lin_est}
\end{equation}
where ${\eta}_0: \mathcal{X} \to \mathbb{R}$ and $\eta_s: \mathcal{X}^2 \to \mathbb{R}, s=1, \ldots, T_0$ are the intercept and coefficient functions.
Since $\mathcal{H}$ is a separable Hilbert space, it has a countable orthonormal basis $\{\varphi_k\}_{k=1}^\infty$. 
When $\mathcal{H} = L^2(\mathcal{I})$, typical examples include the Fourier basis, B-splines, and wavelets.
If $\mathcal{H} = \mathbb{R}^d$, we take $\varphi_k$ to be the $k$-th standard basis vector for $1 \le k \le d$, and the zero vector for $k > d$. 
Using the orthonormal basis 
$\{\varphi_k\}_{k=1}^\infty$,  for each $x$ and $s$, we expand the function $\eta_s(x, \cdot)$ as $\eta_s(x, y) = \sum_{k=1}^\infty {\theta}_{sk}(x)\varphi_k(y)$.  This yields $\langle {\eta}_s(x, \cdot), X_{is}\rangle_{\mathcal{H}} = \sum_{k=1}^\infty {\theta}_{s k}(x)r_{i,s,k}$, where 
$
r_{i,s,k} = \langle \varphi_k, X_{is} \rangle_{\mathcal{H}}.
$
 Hence, for a large integer $K$,  the estimate \eqref{eq:lin_est} can be approximated as  
 \[
 {\eta}_0(x) + \sum_{s=1}^{T_0}\sum_{k=1}^K {\theta}_{sk}(x)r_{i,s,k} = {\eta}_0(x) + \sum_{s=1}^{T_0} {\theta}_{s}(x)'r_{i s},
 \]
 where ${\theta}_s(x) = ({\theta}_{s1}(x), \ldots, {\theta}_{sK}(x))'$ and 
$r_{is} = (r_{i,s,1}, \ldots, r_{i,s,K})'$.
For each $x \in \mathcal{X}$,
we estimate the intercept ${\eta}_0(x)$ and  coefficients ${\theta}_s(x), s=1, \ldots, T_0$ via ridge regression. 
Specifically, our estimate of $Y_{it}^N$ is 
\begin{equation}
    \hat{m}_{it}^{\text{rid}}(x) = \hat{\eta}_0(x) + \sum_{s=1}^{T_0}\hat{\theta}_s(x)'r_{is}, \quad x \in \mathcal{X},
    \label{eq:ridge_estimate_approx}
\end{equation}
where \begin{align}
     \left\{\hat{\eta}_0(x), \hat{\theta}_1(x), \ldots, \hat{\theta}_{T_0}(x)\right\}  
    & =
    \argmin_{\eta_0, \theta_1, \ldots, \theta_{T_0}} \sum_{i=2}^N\left\{Y_{it}(x) - \left(\eta_0 + \sum_{s=1}^{T_0}\theta_s'r_{is}\right)\right\}^2 + \lambda \sum_{s=1}^{T_0} \|\theta_s\|_2^2. 
\end{align}
Here, 
$\lambda > 0$ is a regularization parameter.
The ridge augmented FSC estimator for $Y_{1t}^N$ is then 
\begin{equation}
    \hat{Y}_{1t}^{N, \text{aug}}
    =
    \hat{Y}_{1t}^{N, \text{scm}}
    +
    \left\{\hat{m}_{1t}^{\text{rid}} - \sum_{i=2}^N \hat{\gamma}_i^{\text{scm}}\hat{m}_{it}^{\text{rid}}\right\},
    \label{eq:est_ridge_y}
\end{equation}
and its modification $\tilde{Y}_{1t}^{N, \text{aug}}$ is defined as in \eqref{eq:aug_fscm_modif}. 
In addition, the ridge augmented FSC estimator $\hat{\nu}_{1t}^{N, \text{aug}}$  for $\nu_{1t}^{N}$ is defined as in
\eqref{eq:afscm_y}.

The next lemma shows that (i) the estimator \eqref{eq:est_ridge_y} is itself a weighting estimator whose weights adjust the FSC weights $\hat{\gamma}^{\text{scm}}$, and (ii) the augmented weights can be characterized as the solution to a penalized SCM problem, where the penalty term penalizes deviations from the FSC weights $\hat{\gamma}^{\text{scm}}$. This result is analogous to Lemma 1 in \cite{ben2021augmented}. 
For $i=1, \ldots, N$, define a $KT_0$-column vector $r_{i\cdot}$ and an $(N-1) \times (KT_0)$ matrix $r_{0\cdot}$ by 
\begin{equation*}
r_{i\cdot} = 
\begin{pmatrix}
    r_{i 1} \\ 
    \vdots \\ 
    r_{iT_0}
\end{pmatrix}, \quad 
    r_{0\cdot} = 
\begin{pmatrix}
    r_{2\cdot}' \\ 
    \vdots \\ 
    r_{N\cdot}'
\end{pmatrix}, \quad 
\end{equation*}
respectively. For any positive integer $p$, let $I_p$ denote the identity matrix of size $p$.

\begin{lem}
For any $t=T_0+1, \ldots, T$, the ridge augmented FSC estimator \eqref{eq:est_ridge_y} is expressed as 
    \begin{equation}
        \hat{Y}_{1t}^{N, \mathrm{aug}} = \sum_{i=2}^N \hat{\gamma}_i^{\mathrm{aug}}Y_{it},
    \end{equation}
    where the weights $\hat{\gamma}^{\mathrm{aug}} = (\hat{\gamma}_i^{\mathrm{aug}})_{i=2}^N$ are given by 
    \begin{equation}
        \hat{\gamma}_i^{\mathrm{aug}} = \hat{\gamma}_i^{\mathrm{scm}} + (r_{1\cdot} - r_{0\cdot}'\hat{\gamma}^{\mathrm{scm}})'(r_{0\cdot}'r_{0\cdot} + \lambda I_{KT_0})^{-1} r_{i \cdot}.
        \label{eq:weight_aug}
    \end{equation}
    Moreover, the weights $\hat{\gamma}^{\mathrm{aug}}$ are the solution to the following constrained optimization problem: 
    \begin{equation}
    \min_{\gamma \in \mathbb{R}^{N-1}} \|r_{1\cdot} - r_{0\cdot}'\gamma \|_2^2 + \lambda \|\gamma - \hat{\gamma}^{\mathrm{scm}}\|_2^2\quad  
        \mathrm{subject \,\ to \,}  \sum_{i=2}^{N}\gamma_i = 1.
        \label{eq:const_optim_prob}
    \end{equation}
    \label{lem:closed_form_RFASCM}
\end{lem}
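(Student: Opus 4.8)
The plan is to prove the two assertions by direct computation, the crucial structural fact being that the truncated coefficient vectors of the control units are centered, $\sum_{i=2}^N r_{i\cdot} = 0$, equivalently $r_{0\cdot}' 1_{N-1} = 0$ where $1_{N-1}$ is the all-ones vector; this is inherited from $\sum_{i=2}^N X_{is} = 0$ through $r_{i,s,k} = \langle \varphi_k, X_{is}\rangle_{\mathcal{H}}$. This centering is exactly what decouples the unpenalized intercept in the ridge fit and what makes the simplex sum-constraint automatically satisfied by the augmented weights. \emph{Step 1 (closed form of the ridge fit):} fix $x \in \mathcal{X}$ and stack $\theta(x) = (\theta_1(x)', \ldots, \theta_{T_0}(x)')'$, so the objective underlying \eqref{eq:ridge_estimate_approx} becomes $\sum_{i=2}^N\{Y_{it}(x) - \eta_0 - \theta(x)'r_{i\cdot}\}^2 + \lambda\|\theta(x)\|_2^2$, a strictly convex quadratic. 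Its first-order conditions, simplified using $\sum_{i=2}^N r_{i\cdot} = 0$, give $\hat\eta_0(x) = \bar Y_t(x)$ and $(r_{0\cdot}'r_{0\cdot} + \lambda I_{KT_0})\hat\theta(x) = r_{0\cdot}'Y_{\cdot t}(x)$ with $Y_{\cdot t}(x) = (Y_{2t}(x), \ldots, Y_{Nt}(x))'$, hence $\hat\theta(x) = (r_{0\cdot}'r_{0\cdot} + \lambda I_{KT_0})^{-1} r_{0\cdot}' Y_{\cdot t}(x)$.

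\emph{Step 2 (reading off the weights):} insert $\hat m^{\mathrm{rid}}_{it}(x) = \hat\eta_0(x) + \hat\theta(x)'r_{i\cdot}$ into the bias-correction bracket of \eqref{eq:est_ridge_y}; since $\sum_{i=2}^N\hat\gamma_i^{\mathrm{scm}} = 1$, the intercept terms cancel and the bracket equals $\hat\theta(x)'(r_{1\cdot} - r_{0\cdot}'\hat\gamma^{\mathrm{scm}})$. Substituting the closed form of $\hat\theta(x)$ and using the symmetry of $(r_{0\cdot}'r_{0\cdot} + \lambda I_{KT_0})^{-1}$, this becomes $\sum_{i=2}^N Y_{it}(x)\, r_{i\cdot}'(r_{0\cdot}'r_{0\cdot} + \lambda I_{KT_0})^{-1}(r_{1\cdot} - r_{0\cdot}'\hat\gamma^{\mathrm{scm}})$. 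Adding $\hat Y_{1t}^{N,\mathrm{scm}}(x) = \sum_{i=2}^N\hat\gamma_i^{\mathrm{scm}}Y_{it}(x)$ and collecting the coefficient of each $Y_{it}(x)$ reproduces precisely the weight formula \eqref{eq:weight_aug}. As the coefficients do not depend on $x$, the pointwise identity lifts to $\hat Y_{1t}^{N,\mathrm{aug}} = \sum_{i=2}^N\hat\gamma_i^{\mathrm{aug}}Y_{it}$ in $\mathcal{H}$.

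\emph{Step 3 (variational characterization):} the objective in \eqref{eq:const_optim_prob} has Hessian $2(r_{0\cdot}r_{0\cdot}' + \lambda I_{N-1})$, which is positive definite for $\lambda > 0$, so the problem has a unique minimizer determined by the KKT conditions, and it suffices to verify that $\hat\gamma^{\mathrm{aug}}$ satisfies them with a zero Lagrange multiplier. Feasibility $1_{N-1}'\hat\gamma^{\mathrm{aug}} = 1$ follows from $r_{0\cdot}'1_{N-1} = 0$ applied to the vector form $\hat\gamma^{\mathrm{aug}} = \hat\gamma^{\mathrm{scm}} + r_{0\cdot}(r_{0\cdot}'r_{0\cdot} + \lambda I_{KT_0})^{-1}(r_{1\cdot} - r_{0\cdot}'\hat\gamma^{\mathrm{scm}})$ together with $1_{N-1}'\hat\gamma^{\mathrm{scm}} = 1$. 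For stationarity, write $v = r_{1\cdot} - r_{0\cdot}'\hat\gamma^{\mathrm{scm}}$ and $A = r_{0\cdot}'r_{0\cdot} + \lambda I_{KT_0}$; the identity $r_{0\cdot}'r_{0\cdot}A^{-1} = I_{KT_0} - \lambda A^{-1}$ gives $r_{1\cdot} - r_{0\cdot}'\hat\gamma^{\mathrm{aug}} = \lambda A^{-1}v$, hence $r_{0\cdot}(r_{1\cdot} - r_{0\cdot}'\hat\gamma^{\mathrm{aug}}) = \lambda(\hat\gamma^{\mathrm{aug}} - \hat\gamma^{\mathrm{scm}})$, so the gradient of the objective at $\hat\gamma^{\mathrm{aug}}$, namely $-2r_{0\cdot}(r_{1\cdot} - r_{0\cdot}'\hat\gamma^{\mathrm{aug}}) + 2\lambda(\hat\gamma^{\mathrm{aug}} - \hat\gamma^{\mathrm{scm}})$, vanishes; the stationarity condition then holds with multiplier $0$, and optimality follows from convexity.

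\emph{Expected obstacle:} there is no deep difficulty here; the argument is essentially bookkeeping once the centering identity $r_{0\cdot}'1_{N-1} = 0$ is used, and it closely parallels the proof of Lemma 1 in \cite{ben2021augmented}, with the scalar outcome replaced by the pointwise value $Y_{it}(x)$ for each $x \in \mathcal{X}$ and the finite regressor vector replaced by the truncated coefficient vector $r_{i\cdot}$. The only point requiring care is keeping the two Gram matrices straight — the $(KT_0)\times(KT_0)$ matrix $r_{0\cdot}'r_{0\cdot} + \lambda I_{KT_0}$ appearing inside the weights versus the $(N-1)\times(N-1)$ matrix $r_{0\cdot}r_{0\cdot}' + \lambda I_{N-1}$ governing strict convexity of \eqref{eq:const_optim_prob} — and invoking the identity linking $r_{0\cdot}'r_{0\cdot}$ with $A^{-1}$ when passing between them.
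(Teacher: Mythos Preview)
Your proof is correct and follows essentially the same approach as the paper: derive the closed-form ridge solution using the centering $\sum_{i=2}^N r_{i\cdot}=0$, substitute into the bias-correction term to read off the weights, and then check the optimization characterization. The only difference is that you actually carry out the KKT verification for \eqref{eq:const_optim_prob} (zero multiplier, gradient vanishing, feasibility via $r_{0\cdot}'1_{N-1}=0$), whereas the paper simply asserts that the second claim ``can be verified by directly solving the constrained optimization.''
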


 From \eqref{eq:weight_aug}, we see that the augmented weights \( \hat{\gamma}^{\text{aug}} \) can take negative values, in contrast to the FSC weights \( \hat{\gamma}^{\text{scm}} \). 
When the pre-treatment fit achieved by \( \hat{\gamma}^{\text{scm}} \) is perfect, i.e., $Y_{1t} = \sum_{i=2}^N \hat{\gamma}_i^{\text{scm}} Y_{it}$ for all $t=1, \ldots, T_0$, we have $r_{1\cdot} = r_{0\cdot}'\hat{\gamma}^{\text{scm}}$, and thus \( \hat{\gamma}^{\text{scm}} \) and \( \hat{\gamma}^{\text{aug}} \) coincide.  
In contrast, when the pre-treatment fit is imperfect, the ridge augmented FSC method may assign negative weights in estimating the counterfactual outcomes \( Y_{1t}^N \).
In other words, it allows extrapolation outside the convex hull of the control units' outcomes \citep{abadie2021using}.
The degree of extrapolation is governed by the regularization parameter \( \lambda \): as \( \lambda \) increases, the adjustment term in \eqref{eq:weight_aug} diminishes, and \( \hat{\gamma}^{\text{aug}} \) approaches \( \hat{\gamma}^{\text{scm}} \).

We next analyze how the pre-treatment fit achieved by the augmented weights,
\[
\sqrt{\sum_{t=1}^{T_0}\left\|Y_{1t} - \sum_{i=2}^N \hat{\gamma}_i^{\mathrm{aug}}Y_{it}\right\|_{\mathcal{H}}^2},
\]
and the norm of the augmented weights,  $\|\hat{\gamma}^{\text{aug}}\|_2$, depend on the parameter $\lambda$. This result will be used in the subsequent estimation error analysis.
To make the dependence on the number of the orthonormal vectors $K$ explicit,  we denote $r_{0\cdot}$ and $\hat{\gamma}^{\text{aug}}$ by $r_{0\cdot}^{(K)}$ and $\hat{\gamma}^{\text{aug}(K)}$, respectively. 
For any positive integer $K$, let $m(K)$ be the rank of the matrix $r_{0\cdot}^{(K)}$, and let $d_{\text{max}}^{(K)}, d_{\text{min}}^{(K)}$ be the maximum and minimum singular values of $r_{0\cdot}^{(K)}$, respectively. Note that since $r_{0\cdot}^{(K)}$ is an $(N-1) \times (KT_0)$ matrix, we have $m(K) \le N-1$ for all $K$.

\begin{ass}
    There exist constants $C_1 > 0$ and $c_1 > 0$ such that $d_{\mathrm{max}}^{(K)} \le C_1$ and $ d_{\mathrm{min}}^{(K)} \ge c_1$ hold for any positive integer $K$.
    \label{ass:singular_values}
\end{ass}

Under this assumption, we obtain the following result. 

\begin{lem}
Suppose Assumption \ref{ass:singular_values} holds. 
Then,  for any positive integer $K$,  the ridge augmented FSC weights $\hat{\gamma}^{\mathrm{aug}(K)}$ with regularization parameter $\lambda > 0$ satisfy the following:
\begin{enumerate}
    \item 
    \begin{equation}
        \sqrt{\sum_{t=1}^{T_0} \left\|Y_{1t} - \sum_{i=2}^N \hat{\gamma}_i^{\mathrm{aug}(K)}Y_{it}\right\|_{\mathcal{H}}^2}
        \le 
        \frac{\sqrt{m(K)}\lambda}{(d_{\mathrm{min}}^{(K)})^2+\lambda} \sqrt{\sum_{t=1}^{T_0} \left\|Y_{1t} - \sum_{i=2}^N \hat{\gamma}_i^{\mathrm{scm}}Y_{it}\right\|_{\mathcal{H}}^2}
        +
        R_{1}^{(K)},
        \label{eq:prefit_bound_aug}
    \end{equation}
     where $R_1^{(K)} \to 0$ as $K \to \infty$.
    \item 
    \begin{equation}
         \|\hat{\gamma}^{\mathrm{aug}(K)}\|_2 \le 
        \|\hat{\gamma}^{\mathrm{scm}}\|_2 +
        \frac{\sqrt{m(K)}d_{\mathrm{max}}^{(K)}}{(d_{\mathrm{min}}^{(K)})^2 + \lambda}\sqrt{\sum_{t=1}^{T_0} \left\|Y_{1t} - \sum_{i=2}^N \hat{\gamma}_i^{\mathrm{scm}}Y_{it}\right\|_{\mathcal{H}}^2} + R_2^{(K)},
        \label{eq:norm_bound_aug}
    \end{equation}
    where $R_2^{(K)} \to 0$ as $K \to \infty$.
\end{enumerate}
    \label{lem:aug_weight_prop}
\end{lem}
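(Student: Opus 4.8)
The plan is to push everything through the orthonormal-basis coordinates and reduce each claim to finite-dimensional linear algebra plus a truncation remainder. First I would observe that centering is harmless: since $\sum_{i=2}^N\hat{\gamma}_i^{\mathrm{aug}(K)}=1=\sum_{i=2}^N\hat{\gamma}_i^{\mathrm{scm}}$ (the former by the constraint in \eqref{eq:const_optim_prob}), one has $\sum_{t=1}^{T_0}\|Y_{1t}-\sum_i\gamma_iY_{it}\|_{\mathcal H}^2=\sum_{t=1}^{T_0}\|X_{1t}-\sum_i\gamma_iX_{it}\|_{\mathcal H}^2$ for $\gamma\in\{\hat{\gamma}^{\mathrm{scm}},\hat{\gamma}^{\mathrm{aug}(K)}\}$. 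Expanding $X_{it}=\sum_k r_{i,t,k}\varphi_k$ with $r_{i,t,k}=\langle\varphi_k,X_{it}\rangle_{\mathcal H}$ and applying Parseval coordinatewise, I would split the squared augmented imbalance into a low-frequency block and a tail block:
\[
\sum_{t=1}^{T_0}\Bigl\|X_{1t}-\sum_i\hat{\gamma}_i^{\mathrm{aug}(K)}X_{it}\Bigr\|_{\mathcal H}^2=\bigl\|r_{1\cdot}^{(K)}-(r_{0\cdot}^{(K)})'\hat{\gamma}^{\mathrm{aug}(K)}\bigr\|_2^2+\sum_{t=1}^{T_0}\sum_{k>K}\Bigl(r_{1,t,k}-\sum_i\hat{\gamma}_i^{\mathrm{aug}(K)}r_{i,t,k}\Bigr)^2 .
\]
Here the stacked vector $\delta^{(K)}:=r_{1\cdot}^{(K)}-(r_{0\cdot}^{(K)})'\hat{\gamma}^{\mathrm{scm}}$ is exactly the first-$K$-coefficient vector of the SCM residuals $X_{1t}-\sum_i\hat{\gamma}_i^{\mathrm{scm}}X_{it}$, so $\|\delta^{(K)}\|_2\le\sqrt{\sum_{t=1}^{T_0}\|Y_{1t}-\sum_i\hat{\gamma}_i^{\mathrm{scm}}Y_{it}\|_{\mathcal H}^2}$. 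The weight-norm claim I would handle by $\|\hat{\gamma}^{\mathrm{aug}(K)}\|_2\le\|\hat{\gamma}^{\mathrm{scm}}\|_2+\|\hat{\gamma}^{\mathrm{aug}(K)}-\hat{\gamma}^{\mathrm{scm}}\|_2$.

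For the low-frequency block I would substitute \eqref{eq:weight_aug}, i.e.\ $\hat{\gamma}^{\mathrm{aug}(K)}-\hat{\gamma}^{\mathrm{scm}}=r_{0\cdot}^{(K)}\bigl((r_{0\cdot}^{(K)})'r_{0\cdot}^{(K)}+\lambda I\bigr)^{-1}\delta^{(K)}$, use the reduced SVD $r_{0\cdot}^{(K)}=U_K\Sigma_KV_K'$ with $\Sigma_K=\mathrm{diag}(d_1,\dots,d_{m(K)})$, $d_j\in[d_{\min}^{(K)},d_{\max}^{(K)}]$, and apply the identity $I-M(M+\lambda I)^{-1}=\lambda(M+\lambda I)^{-1}$. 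This gives the orthogonal decomposition
\[
r_{1\cdot}^{(K)}-(r_{0\cdot}^{(K)})'\hat{\gamma}^{\mathrm{aug}(K)}=\lambda V_K(\Sigma_K^2+\lambda I)^{-1}V_K'\delta^{(K)}+(I-V_KV_K')\delta^{(K)},\qquad \hat{\gamma}^{\mathrm{aug}(K)}-\hat{\gamma}^{\mathrm{scm}}=U_K\Sigma_K(\Sigma_K^2+\lambda I)^{-1}V_K'\delta^{(K)} .
\]
Since $U_K,V_K$ have orthonormal columns, the shrinkage summand $\lambda V_K(\Sigma_K^2+\lambda I)^{-1}V_K'\delta^{(K)}$ has $\ell^2$-norm at most $\tfrac{\lambda}{(d_{\min}^{(K)})^2+\lambda}\|\delta^{(K)}\|_2\le\tfrac{\sqrt{m(K)}\,\lambda}{(d_{\min}^{(K)})^2+\lambda}\|\delta^{(K)}\|_2$, and $\|\hat{\gamma}^{\mathrm{aug}(K)}-\hat{\gamma}^{\mathrm{scm}}\|_2\le\tfrac{d_{\max}^{(K)}}{(d_{\min}^{(K)})^2+\lambda}\|\delta^{(K)}\|_2$ (from $d_j/(d_j^2+\lambda)\le d_{\max}^{(K)}/((d_{\min}^{(K)})^2+\lambda)$; also $d_j/(d_j^2+\lambda)\le 1/(2\sqrt\lambda)$, which I will reuse). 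Combined with the bound on $\|\delta^{(K)}\|_2$ above, these already yield the leading terms of \eqref{eq:prefit_bound_aug} and \eqref{eq:norm_bound_aug}; in fact one may take $R_2^{(K)}\equiv 0$, or a vanishing quantity if one passes through $\|V_K'\delta^{(K)}\|_2$ more loosely.

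It remains to show the remainders vanish. $R_1^{(K)}$ collects two contributions. The tail block: by $(a+b)^2\le 2a^2+2b^2$ it is at most $2\sum_t\sum_{k>K}(r_{1,t,k}-\sum_i\hat{\gamma}_i^{\mathrm{scm}}r_{i,t,k})^2+2(N-1)\|\hat{\gamma}^{\mathrm{aug}(K)}-\hat{\gamma}^{\mathrm{scm}}\|_2^2\max_i\sum_t\sum_{k>K}r_{i,t,k}^2$, and here the first sum and $\max_i\sum_t\sum_{k>K}r_{i,t,k}^2=\max_i\sum_t(\|X_{it}\|_{\mathcal H}^2-\sum_{k\le K}r_{i,t,k}^2)$ are tails of convergent Parseval series, hence $\to 0$, while $\|\hat{\gamma}^{\mathrm{aug}(K)}-\hat{\gamma}^{\mathrm{scm}}\|_2\le\tfrac{1}{2\sqrt\lambda}\|\delta^{(K)}\|_2$ is bounded uniformly in $K$. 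The second contribution is $\|(I-V_KV_K')\delta^{(K)}\|_2$, the norm of the part of the truncated SCM residual lying outside the column space of $r_{0\cdot}^{(K)}$; I expect this to be the main obstacle. The natural approach is to rewrite it as $\|(I-P_K)P_{\le K}\xi\|$, where $P_{\le K}$ is the truncation, $P_K$ the projection onto the span of the controls' first-$K$-coefficient vectors, and $\xi$ the component of the treated unit's centered pre-treatment trajectory orthogonal, in $\bigoplus_{t=1}^{T_0}\mathcal H$, to the span of the control trajectories; one then argues, via completeness of $\{\varphi_k\}$ and Assumption \ref{ass:singular_values} (which forces $m(K)=N-1$ for all large $K$, i.e.\ the controls' truncated coefficient matrices eventually have full row rank), that this quantity is asymptotically negligible. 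Once both contributions are controlled, \eqref{eq:prefit_bound_aug} and \eqref{eq:norm_bound_aug} follow from the triangle inequality together with $\sqrt{a^2+b^2}\le a+b$.
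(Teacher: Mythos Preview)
Your overall strategy matches the paper's: center, pass to basis coordinates, split into a first-$K$ block and a Parseval tail, and handle the first-$K$ block via the SVD of $r_{0\cdot}^{(K)}$. For part~(b) your argument is correct and in fact sharper than the paper's: you bound by the operator norm $\max_j d_j/(d_j^2+\lambda)$ rather than the Frobenius norm, which is why $R_2^{(K)}\equiv 0$ already suffices. The paper instead carries the (nonpositive) discrepancy between $\|\delta^{(K)}\|_2$ and the full SCM pre-treatment fit as its remainder.

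The substantive difference, and the genuine gap, is in part~(a). You correctly retain the summand $(I-V_KV_K')\delta^{(K)}$ in the low-frequency residual; the paper's proof drops it by writing $(r_{0\cdot}'r_{0\cdot}+\lambda I_{KT_0})^{-1}=V_KE_KV_K'$, which fails whenever $m(K)<KT_0$ (on $\mathrm{ran}(V_K)^\perp$ the operator $\lambda(r_{0\cdot}'r_{0\cdot}+\lambda I)^{-1}$ acts as the identity, not zero). However, your claim that $\|(I-P_K)P_{\le K}\xi\|_2$ is asymptotically negligible is not right: its limit is $\|\xi\|$, not $0$. Indeed $\|P_{\le K}\xi\|_2\to\|\xi\|$ by Parseval, while $P_KP_{\le K}\xi\to 0$ because $\langle P_{\le K}r_{i\cdot},P_{\le K}\xi\rangle=-\langle r_{i\cdot},(I-P_{\le K})\xi\rangle\to 0$ (using $\langle r_{i\cdot},\xi\rangle=0$) and Assumption~\ref{ass:singular_values} keeps the control Gram matrix uniformly well-conditioned; hence $\|(I-P_K)P_{\le K}\xi\|_2^2=\|P_{\le K}\xi\|_2^2-\|P_KP_{\le K}\xi\|_2^2\to\|\xi\|^2$. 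Since $\xi\neq 0$ whenever the treated pre-treatment trajectory does not lie in the linear span of the control trajectories in $\bigoplus_{t=1}^{T_0}\mathcal H$, this piece cannot be pushed into a remainder that vanishes with $K$. You have therefore located a step that the paper's argument glosses over, but the resolution you sketch does not go through.
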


From \eqref{eq:prefit_bound_aug} and \eqref{eq:norm_bound_aug}, we observe that the bounds include remainder terms $R_1^{(K)}$ and $ R_2^{(K)}$. These terms arise from truncating the number of orthogonal vectors in $\mathcal{H}$, and they vanish as $K \to \infty$.
While Lemma \ref{lem:aug_weight_prop} is similar to Lemma 3 in \cite{ben2021augmented}, the presence of the remainder terms is specific to our setting.
The main term of the bound in \eqref{eq:prefit_bound_aug} decreases as  $\lambda$ decreases, and in particular, it vanishes as $\lambda \to 0$. This implies that the ridge augmented FSC method can achieve an almost perfect fit when $\lambda$ is close to zero (assuming $K$ is large).
In contrast, the main term of the bound in \eqref{eq:norm_bound_aug} increases as $\lambda$ decreases, indicating that the norm of the augmented weights can become large when $\lambda$ is small.

\section{Finite-Sample Error Bounds and Inference}
\label{sec:est_error}
This section mainly consists of two parts. Section \ref{subsec:error_auto} and \ref{subsec:error_factor} provide finite-sample error bounds for the proposed estimators, and Section \ref{subsec:inference} presents inference procedures.
To derive the error bounds, we consider two data-generating processes for transformed potential outcomes $Y_{it}^N$: (i) an autoregressive model, in which the control potential outcomes for the post-treatment periods are linear in their lagged outcomes, and (ii) a latent factor model, in which the control potential outcomes are linear in a set of latent factors. These assumptions about the data-generating processes are standard in the SCM literature \citep{abadie2010synthetic, ben2021augmented, ben2022synthetic}. 

\subsection{Error Bounds under Autoregressive Model} 
\label{subsec:error_auto}
In this and next subsections, for simplicity, we restrict attention to the case with a single post-treatment time period, i.e., $T = T_0 + 1$.
To clarify our discussion, we introduce a general weighting estimator for $Y_{1T}^N$, $\hat{Y}_{1T}^N = \sum_{i=2}^N \hat{\gamma}_i Y_{iT}$, where the weights $(\hat{\gamma}_i)_{i=2}^N$ are not dependent on the post-treatment outcomes $Y_{1T}, \ldots, Y_{NT}$, and satisfy $\sum_{i=2}^N \hat{\gamma}_i=1$. The ridge augmented FSC estimator $\hat{Y}_{1T}^{N, \text{aug}}$ takes this form, as shown in Lemma \ref{lem:closed_form_RFASCM}. 
We also define its modification,  
$
\tilde{Y}_{1T}^{N}
    =
    \argmin_{y \in \mathcal{Y}} \| y - \hat{Y}_{1T}^{N}\|_{\mathcal{H}},
$
and construct an estimator for $\nu_{1T}^N$ as 
$\hat{\nu}_{1T}^N = \Psi^{-1}(\tilde{Y}_{1T}^{N})$. In what follows, we focus on bounding the estimation error $d(\nu_{1T}^N, \hat{\nu}_{1T}^N)$.

We begin by assuming the following data-generating process.
\begin{ass}[Autoregressive model]
\label{ass_auto}
    For each unit $i=1, \ldots, N$, we assume that the post-treatment control potential outcome $Y_{iT}^N \in \mathcal{Y}$ is generated as
    \begin{equation}
        Y_{iT}^{N}(x)
        =
        \sum_{t=1}^{T_0} \langle\beta_t(x, \cdot), Y_{it}^{N}  \rangle_{\mathcal{H}}
         + \varepsilon_{iT}(x), \quad x \in \mathcal{X}.
         \label{eq:autregressive}
    \end{equation}
    Here, $\beta_t(\cdot, \cdot): \mathcal{X}^2 \to \mathbb{R}, t=1, \ldots, T_0$ are coefficient functions, and $\varepsilon_{iT} \in \mathcal{H}, i=1, \ldots, N$ are independent mean-zero errors. Furthermore, we assume that there exists a constant $\sigma > 0$ such that $\|\varepsilon_{iT}\|_{\mathcal{H}} \le \sigma$ holds almost surely for all $i=1, \ldots, N$.
\end{ass}

Under this assumption, we derive the following finite-sample error bound. 
We define the norm of each coefficient function \( \beta_t \) as
\[
\|\beta_t\|_{\mathcal{H} \times \mathcal{H}} = \sqrt{ \int_{\mathcal{X}} \int_{\mathcal{X}} \beta_t(x, y)^2 \, d\mu(x) \, d\mu(y) }.
\]

\begin{thm}
    Suppose Assumption \ref{ass_auto} holds. Then, for any $\delta > 0$, the generic  estimator $\hat{\nu}_{1T}^N$ satisfies  
    \begin{align}
        d(\nu_{1T}^N, \hat{\nu}_{1T}^N)
        &\le \sqrt{\sum_{t=1}^{T_0} \|\beta_t\|_{\mathcal{H} \times \mathcal{H}}^2} \sqrt{\sum_{t=1}^{T_0} \left\|Y_{1t} - \sum_{i=2}^N \hat{\gamma}_iY_{it}\right\|_{\mathcal{H}}^2} 
        +
        \delta \sigma(1 +\|\hat{\gamma}\|_2)
        \label{eq:est_error_auto}
    \end{align}
    with probability at least $1 - 2e^{-\delta^2/2}$. 
    \label{thm:est_error_auto}
\end{thm}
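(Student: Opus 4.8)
The plan is to compare $\hat Y_{1T}^N=\sum_{i=2}^N\hat\gamma_i Y_{iT}$ with the true transformed counterfactual $Y_{1T}^N$, using the autoregressive representation \eqref{eq:autregressive} to rewrite the difference, and then account for the projection step $\tilde Y_{1T}^N$ by a nonexpansiveness argument. First I would observe that, since $\nu_{1T}^N=\Psi^{-1}(Y_{1T}^N)\in\mathcal M$ and $\hat\nu_{1T}^N=\Psi^{-1}(\tilde Y_{1T}^N)$ with $\Psi$ an isometry, we have $d(\nu_{1T}^N,\hat\nu_{1T}^N)=\|Y_{1T}^N-\tilde Y_{1T}^N\|_{\mathcal H}$. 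Because $Y_{1T}^N\in\mathcal Y$ and $\tilde Y_{1T}^N$ is the metric projection of $\hat Y_{1T}^N$ onto the closed convex set $\mathcal Y$, the projection is nonexpansive and fixes points of $\mathcal Y$, so $\|Y_{1T}^N-\tilde Y_{1T}^N\|_{\mathcal H}\le \|Y_{1T}^N-\hat Y_{1T}^N\|_{\mathcal H}$ (invoke Lemma \ref{lem:proj} for existence/uniqueness of the projection). Thus it suffices to bound $\|Y_{1T}^N-\sum_{i=2}^N\hat\gamma_i Y_{iT}^N\|_{\mathcal H}$, noting that $Y_{iT}=Y_{iT}^N$ for $i\ge 2$ and $Y_{1T}^N$ is the object we want.

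Next I would substitute \eqref{eq:autregressive} pointwise in $x$. Using $\sum_{i=2}^N\hat\gamma_i=1$, the intercept-free AR structure gives, for each $x\in\mathcal X$,
\[
Y_{1T}^N(x)-\sum_{i=2}^N\hat\gamma_i Y_{iT}^N(x)
= \sum_{t=1}^{T_0}\Big\langle \beta_t(x,\cdot),\,Y_{1t}^N-\sum_{i=2}^N\hat\gamma_i Y_{it}^N\Big\rangle_{\mathcal H}
+\Big(\varepsilon_{1T}(x)-\sum_{i=2}^N\hat\gamma_i\varepsilon_{iT}(x)\Big).
\]
Here $Y_{it}^N=Y_{it}$ for all $i$ and all pre-treatment $t\le T_0$ (no treatment before $T_0$), so the bracketed increment is exactly the pre-treatment imbalance $Y_{1t}-\sum_{i=2}^N\hat\gamma_i Y_{it}$ appearing in the theorem. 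Taking $\|\cdot\|_{\mathcal H}$ and applying the triangle inequality, I would bound the signal term by Cauchy–Schwarz twice: first $\|\sum_t\langle\beta_t(x,\cdot),u_t\rangle\|_{\mathcal H}$ — integrate the square over $x$, use $|\langle\beta_t(x,\cdot),u_t\rangle|\le\|\beta_t(x,\cdot)\|_{\mathcal H}\|u_t\|_{\mathcal H}$, then Cauchy–Schwarz in $t$ — to obtain the product $\sqrt{\sum_t\|\beta_t\|_{\mathcal H\times\mathcal H}^2}\cdot\sqrt{\sum_t\|Y_{1t}-\sum_i\hat\gamma_i Y_{it}\|_{\mathcal H}^2}$, which is the first term of \eqref{eq:est_error_auto}.

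It remains to control the noise term $\|\varepsilon_{1T}-\sum_{i=2}^N\hat\gamma_i\varepsilon_{iT}\|_{\mathcal H}$ with high probability; this is the main obstacle, since the $\varepsilon_{iT}$ are Hilbert-space–valued rather than scalar. My approach is to use that the $\varepsilon_{iT}$ are independent, mean zero, and bounded: $\|\varepsilon_{iT}\|_{\mathcal H}\le\sigma$ a.s. By the triangle inequality $\|\varepsilon_{1T}-\sum_i\hat\gamma_i\varepsilon_{iT}\|_{\mathcal H}\le\|\varepsilon_{1T}\|_{\mathcal H}+\|\sum_i\hat\gamma_i\varepsilon_{iT}\|_{\mathcal H}$; the weights are non-random (independent of the post-treatment outcomes by hypothesis), so $\sum_i\hat\gamma_i\varepsilon_{iT}$ is a sum of independent mean-zero bounded Hilbert-space random elements with $\|\hat\gamma_i\varepsilon_{iT}\|_{\mathcal H}\le|\hat\gamma_i|\sigma$. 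I would apply a vector-valued Azuma/Hoeffding-type concentration inequality (e.g. the Pinelis–Yurinskii bound for Hilbert-space martingales, or Hoeffding in Hilbert space), which yields $\Pr(\|\sum_i\hat\gamma_i\varepsilon_{iT}\|_{\mathcal H}\ge\delta\sigma\|\hat\gamma\|_2)\le e^{-\delta^2/2}$, and similarly $\Pr(\|\varepsilon_{1T}\|_{\mathcal H}\ge\delta\sigma)\le e^{-\delta^2/2}$ (treating $\varepsilon_{1T}$ as a single bounded term; or folding the constant weight $1$ into the same bound). A union bound over these two events gives the factor $1+\|\hat\gamma\|_2$ and the probability $1-2e^{-\delta^2/2}$, completing the proof. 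The only delicate point is invoking the correct Hilbert-space concentration inequality with the stated constants; an auxiliary lemma in the Appendix stating such a bound is the natural place to discharge this.
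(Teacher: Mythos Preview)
Your proposal is correct and follows essentially the same route as the paper: reduce $d(\nu_{1T}^N,\hat\nu_{1T}^N)$ to $\|Y_{1T}^N-\hat Y_{1T}^N\|_{\mathcal H}$ via the projection lemma, decompose using the autoregressive model, bound the signal term by Cauchy--Schwarz, and control the noise via a Pinelis-type Hilbert-space Hoeffding inequality. One small point on constants: the paper applies Pinelis's inequality directly to the single sum $\varepsilon_{1T}-\sum_{i=2}^N\hat\gamma_i\varepsilon_{iT}$ (with variance proxy $\sigma^2(1+\|\hat\gamma\|_2^2)\le\sigma^2(1+\|\hat\gamma\|_2)^2$), which yields exactly $2e^{-\delta^2/2}$; your split-then-union approach would pick up an extra factor of~$2$ since each application of Pinelis carries a leading~$2$, so you should go with your parenthetical ``folding'' option to match the stated constant.
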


From \eqref{eq:est_error_auto}, we observe that the estimation error of the generic estimator $\hat{\nu}_{1T}^N$ is governed by the pre-treatment fit in the space $\mathcal{H}$,  $\sqrt{\sum_{t=1}^{T_0} \left\|Y_{1t} - \sum_{i=2}^N \hat{\gamma}_i Y_{it}\right\|_{\mathcal{H}}^2}$,  and the norm of the weights $\|\hat{\gamma}\|_2$.
This finding is analogous to Proposition 1 in \citet{ben2021augmented}.
Note that, in constructing the estimator $\hat{\nu}_{1T}^N = \Psi^{-1}(\tilde{Y}_{1T}^N)$, we project the weighting estimator $\hat{Y}_{1T}^N$ onto the space $\mathcal{Y}$. 
Our result accounts for the effect of this projection step on the overall estimation error.

Combining Theorem \ref{thm:est_error_auto}  with Lemma \ref{lem:aug_weight_prop}, we obtain the following bound for the ridge augmented FSC estimator. We denote this estimator by $\hat{\nu}_{1T}^{N, \text{aug}(K)}$ to make the dependence on the number of the orthonormal vectors $K$ explicit. 

\begin{cor}
    Suppose Assumptions \ref{ass:singular_values} and  \ref{ass_auto} hold. Then, for any positive integer $K$ and $\delta > 0$, the ridge augmented FSC estimator $\hat{\nu}_{1T}^{N, \mathrm{aug}}$ with regularization parameter $\lambda > 0$ satisfies 
    \begin{align}
          d(\nu_{1T}^N, \hat{\nu}_{1T}^{N, \mathrm{aug}(K)}) 
         & \le \frac{\sqrt{m(K)}\lambda}{({d_{\mathrm{min}}^{(K)})^2 + \lambda}}\sqrt{\sum_{t=1}^{T_0} \|\beta_t\|_{\mathcal{H} \times \mathcal{H}}^2} \sqrt{\sum_{t=1}^{T_0} \left\|Y_{1t} - \sum_{i=2}^N \hat{\gamma}_i^{\mathrm{scm}}Y_{it}\right\|_{\mathcal{H}}^2} \\ 
        & \phantom{\le}+ \delta \sigma \left\{ 1+
        \|\hat{\gamma}^{\mathrm{scm}}\|_2 +
        \frac{\sqrt{m(K)}d_{\mathrm{max}}^{(K)}}{(d_{\mathrm{min}}^{(K)})^2 + \lambda}\sqrt{\sum_{t=1}^{T_0} \left\|Y_{1t} - \sum_{i=2}^N \hat{\gamma}_i^{\mathrm{scm}}Y_{it}\right\|_{\mathcal{H}}^2}\right\} 
        + R_3^{(K)}
        \label{eq:est_error_auto_2}
    \end{align}
    with probability at least $1 - 2e^{-\delta^2/2}$. Here, $R_3^{(K)} \to 0$ as $K \to \infty$. 
    \label{cor:error_auto}
\end{cor}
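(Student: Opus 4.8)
The plan is to apply Theorem~\ref{thm:est_error_auto} with the augmented weight vector in place of the generic one, and then control the two data-dependent quantities in its bound using Lemma~\ref{lem:aug_weight_prop}. First I would check that $\hat\gamma^{\mathrm{aug}(K)}$ is an admissible weight vector in the sense required by Theorem~\ref{thm:est_error_auto}: by Lemma~\ref{lem:closed_form_RFASCM}, $\hat Y_{1T}^{N,\mathrm{aug}}$ equals $\sum_{i=2}^N\hat\gamma_i^{\mathrm{aug}(K)}Y_{iT}$, the weights satisfy $\sum_{i=2}^N\hat\gamma_i^{\mathrm{aug}(K)}=1$ (the equality constraint in \eqref{eq:const_optim_prob}), and the closed form \eqref{eq:weight_aug} shows they are functions of the pre-treatment quantities $r_{1\cdot}$, $r_{0\cdot}^{(K)}$ and $\hat\gamma^{\mathrm{scm}}$ only, hence do not involve the post-treatment outcomes $Y_{1T},\ldots,Y_{NT}$. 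Since $\hat\nu_{1T}^{N,\mathrm{aug}(K)}$ is obtained from $\hat Y_{1T}^{N,\mathrm{aug}}$ by exactly the projection-onto-$\mathcal{Y}$-then-$\Psi^{-1}$ construction used for the generic estimator in Theorem~\ref{thm:est_error_auto}, that theorem applies verbatim with $\hat\gamma=\hat\gamma^{\mathrm{aug}(K)}$, giving, for every $\delta>0$ and with probability at least $1-2e^{-\delta^2/2}$,
\begin{equation*}
d(\nu_{1T}^N,\hat\nu_{1T}^{N,\mathrm{aug}(K)})
\le
\sqrt{\textstyle\sum_{t=1}^{T_0}\|\beta_t\|_{\mathcal{H}\times\mathcal{H}}^2}\;
\sqrt{\textstyle\sum_{t=1}^{T_0}\bigl\|Y_{1t}-\sum_{i=2}^N\hat\gamma_i^{\mathrm{aug}(K)}Y_{it}\bigr\|_{\mathcal{H}}^2}
+\delta\sigma\bigl(1+\|\hat\gamma^{\mathrm{aug}(K)}\|_2\bigr).
\end{equation*}

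Next I would invoke Lemma~\ref{lem:aug_weight_prop}, valid under Assumption~\ref{ass:singular_values}. Substituting the pre-treatment-fit bound \eqref{eq:prefit_bound_aug} into the first summand rewrites it as $\frac{\sqrt{m(K)}\lambda}{(d_{\mathrm{min}}^{(K)})^2+\lambda}\sqrt{\sum_{t=1}^{T_0}\|\beta_t\|_{\mathcal{H}\times\mathcal{H}}^2}\sqrt{\sum_{t=1}^{T_0}\|Y_{1t}-\sum_{i=2}^N\hat\gamma_i^{\mathrm{scm}}Y_{it}\|_{\mathcal{H}}^2}$ plus the leftover $\sqrt{\sum_{t=1}^{T_0}\|\beta_t\|_{\mathcal{H}\times\mathcal{H}}^2}\,R_1^{(K)}$; substituting the weight-norm bound \eqref{eq:norm_bound_aug} into $\delta\sigma(1+\|\hat\gamma^{\mathrm{aug}(K)}\|_2)$ rewrites it as $\delta\sigma\bigl(1+\|\hat\gamma^{\mathrm{scm}}\|_2+\frac{\sqrt{m(K)}d_{\mathrm{max}}^{(K)}}{(d_{\mathrm{min}}^{(K)})^2+\lambda}\sqrt{\sum_{t=1}^{T_0}\|Y_{1t}-\sum_{i=2}^N\hat\gamma_i^{\mathrm{scm}}Y_{it}\|_{\mathcal{H}}^2}\bigr)$ plus the leftover $\delta\sigma R_2^{(K)}$. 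Collecting the two leading blocks reproduces exactly the right-hand side of \eqref{eq:est_error_auto_2} with
\[
R_3^{(K)}:=\sqrt{\textstyle\sum_{t=1}^{T_0}\|\beta_t\|_{\mathcal{H}\times\mathcal{H}}^2}\,R_1^{(K)}+\delta\sigma\,R_2^{(K)},
\]
and since Lemma~\ref{lem:aug_weight_prop} gives $R_1^{(K)}\to0$ and $R_2^{(K)}\to0$ as $K\to\infty$ while $\sqrt{\sum_{t=1}^{T_0}\|\beta_t\|_{\mathcal{H}\times\mathcal{H}}^2}$, $\delta$, and $\sigma$ do not depend on $K$, we conclude $R_3^{(K)}\to0$, which finishes the argument.

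Analytically there is no real difficulty here: once Theorem~\ref{thm:est_error_auto} and Lemma~\ref{lem:aug_weight_prop} are in hand, the corollary is a substitution. The only genuine checks are the admissibility of $\hat\gamma^{\mathrm{aug}(K)}$ in the first paragraph (feasibility, and measurability with respect to the pre-treatment data only, so that the high-probability event of Theorem~\ref{thm:est_error_auto} transfers to these weights) and the bookkeeping of the two truncation remainders, which must be grouped into the single $K$-vanishing term $R_3^{(K)}$; note that $R_3^{(K)}$ is permitted to depend on $\delta$ and $\sigma$, consistent with the statement. I do not expect any obstacle beyond keeping the dependence on $K$ explicit throughout.
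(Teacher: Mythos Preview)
Your proposal is correct and follows essentially the same approach as the paper's own proof: apply Theorem~\ref{thm:est_error_auto} with $\hat\gamma=\hat\gamma^{\mathrm{aug}(K)}$, substitute the two bounds from Lemma~\ref{lem:aug_weight_prop}, and collect the remainders into $R_3^{(K)}=\sqrt{\sum_{t=1}^{T_0}\|\beta_t\|_{\mathcal{H}\times\mathcal{H}}^2}\,R_1^{(K)}+\delta\sigma R_2^{(K)}$, which is exactly the definition the paper gives. Your explicit admissibility check (sum-to-one, dependence only on pre-treatment data, and the projection construction matching the generic estimator) is a useful addition that the paper leaves implicit.
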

From \eqref{eq:est_error_auto_2}, we see that the regularization parameter $\lambda$ controls a trade-off in the main part of the bound: the first term, which corresponds to the pretreatment fit, increases with $\lambda$, while the second term, which corresponds to the norm of the weights,  decreases with $\lambda$. This observation is parallel to the case of scalar outcomes (cf.  Proposition 1 in \cite{ben2021augmented}).

Let us compare the estimation error of the ridge augmented FSC estimator with that of the FSC estimator. As $\lambda \to \infty$ in \eqref{eq:est_error_auto_2}, we recover a finite-sample error bound for the FSC estimator, whose main part is given by
\begin{equation}
    \sqrt{m(K)}\sqrt{\sum_{t=1}^{T_0} \|\beta_t\|_{\mathcal{H} \times \mathcal{H}}^2} \sqrt{\sum_{t=1}^{T_0} \left\|Y_{1t} - \sum_{i=2}^N \hat{\gamma}_i^{\mathrm{scm}}Y_{it}\right\|_{\mathcal{H}}^2}
    +
    \delta \sigma( 1+  \|\hat{\gamma}^{\mathrm{scm}}\|_2).
    \label{eq:est_error_auto_fscm}
\end{equation}
If the noise level $\sigma$ satisfies
\[
\sigma < \frac{(d_{\mathrm{min}}^{(K)})^2}{\delta d_{\mathrm{max}}^{(K)}}\sqrt{\sum_{t=1}^{T_0} \|\beta_t\|_{\mathcal{H} \times \mathcal{H}}^2},
\]
then the main part of the bound in 
\eqref{eq:est_error_auto_2} is smaller than that in  \eqref{eq:est_error_auto_fscm}. This implies that, when the noise level is sufficiently low, augmentation improves the estimation accuracy.

\subsection{Error Bounds under Latent Factor Model}
\label{subsec:error_factor}
We next assume the following data-generating process.

\begin{ass}[Latent factor model]
\label{ass_latent}
    Suppose that there are $J$ unknown, latent  factors at each time $t=1, \ldots, T$, denoted by $\mu_t = (\mu_{jt})_{j=1}^J$, where each $\mu_{jt}$ is in $\mathcal{H}$.
    For each $x \in \mathcal{X}$,
    we define the vectors of pre-treatment factors, $\mu_t(x) \in \mathbb{R}^J, t=1, \ldots, T_0$, and the matrix $\mu(x) \in \mathbb{R}^{T_0 \times J}$ as 
    \[
    \mu_t(x) = \begin{pmatrix}
    \mu_{1t}(x) \\ 
    \vdots \\ 
    \mu_{Jt}(x)
    \end{pmatrix}, \quad 
    \mu(x) = 
    \begin{pmatrix}
        \mu_1(x)' \\ 
        \vdots \\ 
        \mu_{T_0}(x)'
    \end{pmatrix}.
    \]
    We assume that there exists a constant $M_1 > 0$  such that $|\mu_{jt}(x)| \le M_1$ for all $j, t, x$. Furthermore, 
    we assume that there exists a constant $M_2 > 0$ such that, for any $x$, the minimum eigenvalue of the matrix  $\mu(x)'\mu(x)$, denoted by $\xi_{\text{min}}(x)$, satisfies   $\xi_{\text{min}}(x) \ge M_2$.  
    In addition,  suppose that each unit $i$ has a vector of unknown factor loadings $\phi_i = (\phi_{ij})_{j=1}^J \in \mathbb{R}^J$.  We assume that for each unit $i = 1, \ldots, N$, the control potential outcome at time period $t=1, \ldots, T$ is generated as 
    \begin{equation}
        Y_{it}^{N}
        =
        \sum_{j=1}^{J}  \phi_{ij} \mu_{jt}
         + \varepsilon_{it},
         \label{eq:latent_factor}
    \end{equation}
    where $\varepsilon_{it} \in \mathcal{H}, i=1, \ldots, N, t=1, \ldots, T$ are independent, zero-mean errors.  We further assume that there exists a constant \( \sigma > 0 \) such that \( \| \varepsilon_{it} \|_{\mathcal{H}} \leq \sigma \) almost surely for every \( i \) and \( t \).
\end{ass}

Under this assumption, we derive the following finite-sample error bound for the generic estimator $\hat{\nu}_{1T}^N$. 
For any vector $x$, let $\|x\|_1$ denote its $\ell^1$ norm.  
\begin{thm}
    Suppose Assumption \ref{ass_latent} holds. Then, for any $\delta > 0$, the generic  estimator $\hat{\nu}_{1T}^N$ satisfies
    \begin{align}
        d(\nu_{1T}^N, \hat{\nu}_{1T}^N)
        &\le \frac{M_1^2J^{3/2}}{M_2\sqrt{T_0}} \sqrt{\sum_{t=1}^{T_0} \left\|Y_{1t} - \sum_{i=2}^N \hat{\gamma}_iY_{it}\right\|_{\mathcal{H}}^2} 
        +
        \frac{2\sigma M_1^2 J^{3/2}}{M_2}\|\hat{\gamma}\|_1
        +
        \delta \sigma(1+\|\hat{\gamma}\|_2)
        \label{eq:est_error_factor}
    \end{align}
    with probability at least $1 - 2e^{-\delta^2/2}$. 
    \label{thm:est_error_factor}
\end{thm}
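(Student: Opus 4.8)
The plan is to first reduce the claim to a bound in the ambient Hilbert space, exactly as in the proof of Theorem~\ref{thm:est_error_auto}. Since $\Psi$ is an isometry, $d(\nu_{1T}^N,\hat\nu_{1T}^N)=\|Y_{1T}^N-\tilde Y_{1T}^N\|_{\mathcal H}$; and since $\mathcal Y$ is closed and convex and $Y_{1T}^N\in\mathcal Y$, the metric projection onto $\mathcal Y$ (well defined by Lemma~\ref{lem:proj}) is nonexpansive and fixes $Y_{1T}^N$, so $\|Y_{1T}^N-\tilde Y_{1T}^N\|_{\mathcal H}\le\|Y_{1T}^N-\hat Y_{1T}^N\|_{\mathcal H}$. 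Using $\sum_{i=2}^N\hat\gamma_i=1$ and $Y_{iT}=Y_{iT}^N$ for $i\ge 2$, I substitute the factor model \eqref{eq:latent_factor} to obtain
\[
Y_{1T}^N-\hat Y_{1T}^N=\sum_{j=1}^J\Big(\phi_{1j}-\sum_{i=2}^N\hat\gamma_i\phi_{ij}\Big)\mu_{jT}+\Big(\varepsilon_{1T}-\sum_{i=2}^N\hat\gamma_i\varepsilon_{iT}\Big),
\]
and I bound the two summands separately.

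The noise summand $\varepsilon_{1T}-\sum_{i\ge 2}\hat\gamma_i\varepsilon_{iT}$ is a fixed linear combination (the weights are independent of the post-treatment errors) of independent, mean-zero elements of $\mathcal H$ with $\mathcal H$-norm at most $\sigma$ almost surely. The same vector-valued bounded-differences (Azuma/McDiarmid-type) argument used in the proof of Theorem~\ref{thm:est_error_auto} then shows its $\mathcal H$-norm is at most $\delta\sigma(1+\|\hat\gamma\|_2)$ with probability at least $1-2e^{-\delta^2/2}$, which is the last term of \eqref{eq:est_error_factor}.

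The factor-loading summand is the crux. Write $\Delta\phi=(\phi_{1j}-\sum_{i=2}^N\hat\gamma_i\phi_{ij})_{j=1}^J\in\mathbb R^J$. For each $x\in\mathcal X$, stacking the pre-treatment factor model over $t=1,\dots,T_0$ yields $\mu(x)\Delta\phi=v(x)-w(x)$ in $\mathbb R^{T_0}$, where $v_t(x)=Y_{1t}(x)-\sum_{i\ge2}\hat\gamma_i Y_{it}(x)$ is the pointwise pre-treatment imbalance and $w_t(x)=\varepsilon_{1t}(x)-\sum_{i\ge2}\hat\gamma_i\varepsilon_{it}(x)$. Since $\mu(x)'\mu(x)\succeq M_2 I_J$, the matrix $\mu(x)$ has full column rank and $\Delta\phi=(\mu(x)'\mu(x))^{-1}\mu(x)'(v(x)-w(x))$; bounding $\|(\mu(x)'\mu(x))^{-1}\|$ by $M_2^{-1}$ and the entries of $\mu(x)$ by $M_1$ gives a pointwise bound on $\|\Delta\phi\|_2$ in terms of $\|v(x)-w(x)\|_2$, after which $|\sum_j\Delta\phi_j\mu_{jT}(x)|\le M_1\sqrt J\,\|\Delta\phi\|_2$ by Cauchy--Schwarz and $|\mu_{jT}(x)|\le M_1$. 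Integrating the resulting inequality over $x$ against $\mu$, and using $\int_{\mathcal X}\|v(x)-w(x)\|_2^2\,d\mu(x)=\sum_{t=1}^{T_0}\|(Y_{1t}-\sum_{i\ge2}\hat\gamma_i Y_{it})-(\varepsilon_{1t}-\sum_{i\ge2}\hat\gamma_i\varepsilon_{it})\|_{\mathcal H}^2$, converts everything into $\mathcal H$-norms. Finally I split off the pre-treatment noise by the $\ell^2$ triangle inequality over $t$: the genuine imbalance $\sqrt{\sum_{t=1}^{T_0}\|Y_{1t}-\sum_{i\ge2}\hat\gamma_i Y_{it}\|_{\mathcal H}^2}$ picks up the coefficient $M_1^2 J^{3/2}/(M_2\sqrt{T_0})$, while the deterministic bound $\|\varepsilon_{1t}-\sum_{i\ge2}\hat\gamma_i\varepsilon_{it}\|_{\mathcal H}\le\sigma(1+\|\hat\gamma\|_1)\le 2\sigma\|\hat\gamma\|_1$ (using $\|\hat\gamma\|_1\ge|\sum_{i\ge2}\hat\gamma_i|=1$), summed over $t=1,\dots,T_0$, contributes $2\sigma\sqrt{T_0}\,\|\hat\gamma\|_1$, which multiplied by the same coefficient produces the middle term $2\sigma M_1^2 J^{3/2}\|\hat\gamma\|_1/M_2$.

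Collecting the three pieces via subadditivity of $\|\cdot\|_{\mathcal H}$ gives \eqref{eq:est_error_factor}. The main obstacle is the factor-loading step: one must invert the pre-treatment factor structure uniformly in $x\in\mathcal X$ so that the per-coordinate least-squares bound---where Assumption~\ref{ass_latent}'s minimum-eigenvalue and boundedness conditions enter---converts cleanly into $\mathcal H$-norm quantities with exactly the stated constants, while keeping the pre-treatment noise propagation bookkept separately so that it appears as a $\|\hat\gamma\|_1$ term rather than inflating the imbalance coefficient. By contrast, the reduction via the metric projection and the concentration bound for the post-treatment noise are essentially identical to the autoregressive case in Theorem~\ref{thm:est_error_auto}.
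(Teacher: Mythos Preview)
Your proposal is correct and follows essentially the same route as the paper's proof: reduce to $\|Y_{1T}^N-\hat Y_{1T}^N\|_{\mathcal H}$ via the isometry and Lemma~\ref{lem:proj}, invert the pre-treatment factor structure pointwise in $x$ (the paper packages this as Lemma~\ref{lem:factor_expression}), bound the resulting expression by Cauchy--Schwarz and Frobenius sub-multiplicativity, integrate over $\mathcal X$, and handle the post-treatment noise by the same concentration argument as in Theorem~\ref{thm:est_error_auto}. The only cosmetic differences are that the paper splits the imbalance and pre-treatment-noise pieces into separate terms $\Delta_2$ and $\Delta_3$ before bounding (you keep $v(x)-w(x)$ together and split at the end), and that the paper reaches $2\sigma\|\hat\gamma\|_1$ for the pre-treatment noise by rewriting $\varepsilon_{1t}-\sum_i\hat\gamma_i\varepsilon_{it}=\sum_i\hat\gamma_i(\varepsilon_{1t}-\varepsilon_{it})$ rather than via your $\sigma(1+\|\hat\gamma\|_1)\le 2\sigma\|\hat\gamma\|_1$; both are equivalent.
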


From \eqref{eq:est_error_factor}, we see that the estimation error of the generic estimator $\hat{\nu}_{1T}^N$ is governed by the pre-treatment fit $\sqrt{\sum_{t=1}^{T_0} \left\|Y_{1t} - \sum_{i=2}^N \hat{\gamma}_i Y_{it}\right\|_{\mathcal{H}}^2}$ and the norms of the weights $\|\hat{\gamma}\|_1, \|\hat{\gamma}\|_2$.
This result is analogous to Theorem A.3 in \cite{ben2021augmented}.
As in the autoregressive case, our result accounts for the effect of the projection step involved in constructing the generic estimator $\hat{\nu}_{1T}^N$.

Combining Theorem \ref{thm:est_error_factor} with Lemma \ref{lem:aug_weight_prop}, we  obtain the following bound for the ridge augmented FSC estimator.

\begin{cor}
    Suppose Assumptions \ref{ass:singular_values} and  \ref{ass_auto} hold. Then,  for any positive integer $K$ and $\delta > 0$, the ridge augmented FSC estimator $\hat{\nu}_{1T}^{N, \mathrm{aug}}$ with regularization parameter $\lambda > 0$ satisfies 
\begin{align}
    &d(\nu_{1T}^N, \hat{\nu}_{1T}^{N, \mathrm{aug}(K)}) \\
    &\le \frac{M_1^2J^{3/2}\sqrt{m(K)}\lambda}{M_2\sqrt{T_0}\{(d_{\mathrm{min}}^{(K)})^2 + \lambda\}} \sqrt{\sum_{t=1}^{T_0} \left\|Y_{1t} - \sum_{i=2}^N \hat{\gamma}_i^{\mathrm{scm}}Y_{it}\right\|_{\mathcal{H}}^2} \\ 
    &\phantom{\le}+
    \sigma\left(\frac{2\sqrt{N-1}M_1^2J^{3/2}}{M_2}+\delta\right) \left\{
        \|\hat{\gamma}^{\mathrm{scm}}\|_2 +
        \frac{\sqrt{m(K)}d_{\mathrm{max}}^{(K)}}{(d_{\mathrm{min}}^{(K)})^2 + \lambda}\sqrt{\sum_{t=1}^{T_0} \left\|Y_{1t} - \sum_{i=2}^N \hat{\gamma}_i^{\mathrm{scm}}Y_{it}\right\|_{\mathcal{H}}^2}\right\} \\ 
        &\phantom{\le}+
        \delta \sigma + R_4^{(K)}
        \label{eq:est_error_factor_2}
\end{align}
with probability at least $1 - 2e^{-\delta^2/2}$.
Here, $R_4^{(K)} \to 0$ as $K \to  \infty$.
\label{cor:est_error_factor}
\end{cor}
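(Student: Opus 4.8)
The plan is to derive \eqref{eq:est_error_factor_2} by instantiating the generic latent‑factor bound of Theorem \ref{thm:est_error_factor} at the ridge augmented FSC weights and then controlling the three data‑dependent quantities on the right‑hand side of \eqref{eq:est_error_factor} — the pre‑treatment fit, $\|\hat{\gamma}\|_1$, and $\|\hat{\gamma}\|_2$ — by means of Lemma \ref{lem:aug_weight_prop}. The first step is an admissibility check. By Lemma \ref{lem:closed_form_RFASCM}, $\hat{Y}_{1T}^{N,\mathrm{aug}(K)}=\sum_{i=2}^N\hat{\gamma}_i^{\mathrm{aug}(K)}Y_{iT}$ with $\hat{\gamma}^{\mathrm{aug}(K)}$ solving \eqref{eq:const_optim_prob}; in particular $\sum_{i=2}^N\hat{\gamma}_i^{\mathrm{aug}(K)}=1$, and since $\hat{\gamma}^{\mathrm{aug}(K)}$ is built only from the pre‑treatment residual vectors $r_{i\cdot}$ and $\hat{\gamma}^{\mathrm{scm}}$, it does not depend on the post‑treatment outcomes $Y_{1T},\dots,Y_{NT}$. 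Moreover the construction of $\hat{\nu}_{1T}^{N,\mathrm{aug}}$ in \eqref{eq:aug_fscm_modif}–\eqref{eq:afscm_y} is exactly the projection‑then‑$\Psi^{-1}$ recipe defining the generic $\hat{\nu}_{1T}^N$, so no extra handling of the projection step is needed. Hence $\hat{\nu}_{1T}^{N,\mathrm{aug}(K)}$ is a valid instance of $\hat{\nu}_{1T}^N$, and Theorem \ref{thm:est_error_factor} (which presupposes the latent factor model, Assumption \ref{ass_latent}) yields, on an event of probability at least $1-2e^{-\delta^2/2}$, the inequality \eqref{eq:est_error_factor} with $\hat{\gamma}$ replaced by $\hat{\gamma}^{\mathrm{aug}(K)}$.

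Next I would simplify and substitute. Because $\hat{\gamma}^{\mathrm{aug}(K)}\in\mathbb{R}^{N-1}$, the bound $\|\hat{\gamma}^{\mathrm{aug}(K)}\|_1\le\sqrt{N-1}\,\|\hat{\gamma}^{\mathrm{aug}(K)}\|_2$ merges the $\ell^1$ and $\ell^2$ contributions in \eqref{eq:est_error_factor} into $\sigma\bigl(2\sqrt{N-1}M_1^2J^{3/2}/M_2+\delta\bigr)\|\hat{\gamma}^{\mathrm{aug}(K)}\|_2$, leaving the additive $\delta\sigma$ from the $\delta\sigma(1+\|\hat{\gamma}\|_2)$ term as a standalone piece. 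I then insert the two estimates of Lemma \ref{lem:aug_weight_prop}: \eqref{eq:prefit_bound_aug} for $\sqrt{\sum_{t=1}^{T_0}\|Y_{1t}-\sum_{i=2}^N\hat{\gamma}_i^{\mathrm{aug}(K)}Y_{it}\|_{\mathcal H}^2}$ and \eqref{eq:norm_bound_aug} for $\|\hat{\gamma}^{\mathrm{aug}(K)}\|_2$. Collecting terms reproduces exactly the displayed main parts of \eqref{eq:est_error_factor_2} — the pre‑treatment‑fit term with prefactor $M_1^2J^{3/2}\sqrt{m(K)}/(M_2\sqrt{T_0})$, the bracketed $\{\|\hat{\gamma}^{\mathrm{scm}}\|_2+\cdots\}$ term with prefactor $\sigma\bigl(2\sqrt{N-1}M_1^2J^{3/2}/M_2+\delta\bigr)$, and the standalone $\delta\sigma$.

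Finally, disposing of the residuals: the two substitutions introduce the extra terms $\tfrac{M_1^2J^{3/2}}{M_2\sqrt{T_0}}R_1^{(K)}$ and $\sigma\bigl(2\sqrt{N-1}M_1^2J^{3/2}/M_2+\delta\bigr)R_2^{(K)}$; setting $R_4^{(K)}$ equal to their sum and using $R_1^{(K)}\to0$, $R_2^{(K)}\to0$ from Lemma \ref{lem:aug_weight_prop} — the prefactors being fixed constants independent of $K$ — gives $R_4^{(K)}\to0$ as $K\to\infty$, which completes the argument. The proof is essentially bookkeeping once the ingredients are in place; the only step that genuinely requires care is the admissibility verification in the first paragraph, since the concentration guarantee of Theorem \ref{thm:est_error_factor} is valid only for weights that sum to one and are measurable with respect to the pre‑treatment data. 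I would also note that no union bound over $K$ is needed, because \eqref{eq:est_error_factor_2} is asserted for each fixed $K$ and the probability $1-2e^{-\delta^2/2}$ is inherited verbatim from Theorem \ref{thm:est_error_factor}.
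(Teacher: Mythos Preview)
Your proposal is correct and follows essentially the same route as the paper's own proof: apply Theorem \ref{thm:est_error_factor} to the augmented weights, absorb the $\ell^1$ term via $\|\hat{\gamma}^{\mathrm{aug}(K)}\|_1\le\sqrt{N-1}\,\|\hat{\gamma}^{\mathrm{aug}(K)}\|_2$, substitute the two bounds from Lemma \ref{lem:aug_weight_prop}, and collect the remainders into $R_4^{(K)}=\tfrac{M_1^2J^{3/2}}{M_2\sqrt{T_0}}R_1^{(K)}+\sigma\bigl(\tfrac{2\sqrt{N-1}M_1^2J^{3/2}}{M_2}+\delta\bigr)R_2^{(K)}$. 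Your admissibility check and the observation that no union bound over $K$ is needed are useful additions that the paper leaves implicit; note also that the hypothesis in the corollary statement should read Assumption \ref{ass_latent} rather than Assumption \ref{ass_auto}, as you implicitly recognized.
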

From \eqref{eq:est_error_factor_2}, we see that,  similarly to the autoregressive model, the regularization parameter $\lambda$ controls a trade-off in the main part of the bound: the first term, which corresponds to the pretreatment fit, increases with $\lambda$, while the second term, which corresponds to the norms of the weights,  decreases with $\lambda$.

Finally, let us compare the estimation error of the ridge augmented FSC estimator with that of the FSC estimator. As $\lambda \to \infty$ in \eqref{eq:est_error_factor_2}, we recover a finite-sample error bound for the FSC estimator, whose main part is given by
\begin{equation}
    \frac{M_1^2J^{3/2}\sqrt{m(K)}}{M_2\sqrt{T_0}} \sqrt{\sum_{t=1}^{T_0} \left\|Y_{1t} - \sum_{i=2}^N \hat{\gamma}_i^{\mathrm{scm}}Y_{it}\right\|_{\mathcal{H}}^2}
    +
    \sigma\left(\frac{2\sqrt{N-1}M_1^2J^{3/2}}{M_2}+\delta\right)\|\hat{\gamma}^{\text{scm}}\|_2 + \delta \sigma.
    \label{eq:est_error_factor_fscm}
\end{equation}
If the noise level $\sigma$ satisfies
\[
\sigma < \frac{(d_{\text{min}}^{(K)})^2M_1^2J^{3/2}}{d_{\text{max}}^{(K)}\sqrt{T_0}(2\sqrt{N-1}M_1^2J^{3/2} + M_2\delta)}, 
\]
then the main part of the bound in \eqref{eq:est_error_factor_2} is smaller than that in \eqref{eq:est_error_factor_fscm}. This implies that, similarly to the autoregressive model, when the noise level is sufficiently low, augmentation improves the estimation accuracy.

\begin{rmk}[Selection of regularization parameter]
\label{rmk:regulatization}
In practice, following \cite{ben2021augmented}, we adopt a cross-validation approach for selecting the regularization hyperparameter $\lambda$. Given $\lambda > 0$, for each $t = 1, \ldots, T_0$, let $\hat{Y}_{1t}^{(-t)} = \sum_{i=2}^N \hat{\gamma}_{i(-t)}^{\text{aug}} Y_{it}$ denote the estimate of $Y_{1t}$, where $\hat{\gamma}_{i(-t)}^{\text{aug}}$ denotes the augmented weights calculated without using the observations from time period $t$.
We then compute the leave-one-out cross-validation (CV) error over the pre-treatment periods:
    \begin{equation}
        \text{CV}(\lambda) = \sum_{t=1}^{T_0} \|Y_{1t} - \hat{Y}_{1t}^{(-t)}\|_{\mathcal{H}}^2.
    \end{equation}
    The hyperparameter $\lambda$ is selected by minimizing the cross-validation error:
    \[
        \hat{\lambda}_{\text{cv}} = \argmin_{\lambda > 0} \text{CV}(\lambda).
    \]
\end{rmk}

\subsection{Inference}
\label{subsec:inference}
In this subsection, we extend inference procedures for the standard SCM to our setting. Specifically, we focus on two procedures:
(i) construction of prediction sets for counterfactural outcomes based on conformal inference approach \citep{chernozhukov2021exact}, and
(ii) the placebo permutation test for the presence of a causal effect \citep{abadie2010synthetic}. 
We illustrate how these procedures can be implemented in our framework.

\paragraph{Conformal inference for counterfactual outcomes.}
Fix a post-treatment period $t$.
Let $\hat{\gamma} = (\hat{\gamma}_i)_{i=2}^N$ denote the FSC weights in \eqref{eq:optim_fscm} or the augmented weights in \eqref{eq:weight_aug}. Our goal is to construct, for each $x \in \mathcal{X}$, a $(1 - \alpha)$ prediction interval $\hat{C}_{1-\alpha}(x)$ for the counterfactual outcome $Y_{1t}^N(x)$ based on the estimate $\hat{Y}_{1t}^N = \sum_{i=2}^N \hat{\gamma}_i Y_{it}$.
Here, $\alpha \in (0, 1)$ is a pre-specified level assumed to satisfy $\alpha > 1/(T_0 + 1)$.

To this end, consider the sharp null hypothesis $H_0: Y_{1t}^N(x) = y_0$ for a given $y_0 \in \mathbb{R}$. A $p$-value $p(y_0)$ for the hypothesis $H_0$ can be computed by comparing the post-treatment residual $y_0 - \sum_{i=2}^N\hat{\gamma}_iY_{it}(x)$ with the pre-treatment residuals $Y_{1s}(x) - \sum_{i=2}^N \hat{\gamma}_i Y_{is}(x), s=1, \ldots, T_0$:
\begin{align}
    p(y_0) &= 
    \frac{1}{T_0+1} \sum_{s=1}^{T_0} 
     1\!\left\{
         \left|y_0 - \sum_{i=2}^N \hat{\gamma}_i Y_{it}(x)\right| 
         \le 
         \left|Y_{1s}^N(x) - \sum_{i=2}^N \hat{\gamma}_i Y_{is}(x)\right|
     \right\} \\ 
     &\phantom{=} +
     \frac{1}{T_0+1} 1\!\left\{
         \left|y_0 - \sum_{i=2}^N \hat{\gamma}_i Y_{it}(x)\right| 
         \le 
         \left|Y_{1t}^N(x) - \sum_{i=2}^N \hat{\gamma}_i Y_{it}(x)\right|\right\} \\ 
         &=
         \frac{1}{T_0+1} \sum_{s=1}^{T_0} 
     1\!\left\{
         \left|y_0 - \sum_{i=2}^N \hat{\gamma}_i Y_{it}(x)\right| 
         \le 
         \left|Y_{1s}(x) - \sum_{i=2}^N \hat{\gamma}_i Y_{is}(x)\right|
     \right\}
     +
     \frac{1}{T_0+1}.
\end{align}
Here, $1\{\cdot\}$ denotes the indicator function. We invert this test to construct the  prediction set $\hat{C}_{1-\alpha}(x)$ for $Y_{1t}^N(x)$: 
\[
\hat{C}_{1 - \alpha}(x)
    =
    \{y \in \mathbb{R}: p(y) \ge \alpha\}.
\]
For any $y \in \mathbb{R}$, 
\begin{align*}
    p(y) \ge \alpha
    &\iff 
    \frac{1}{T_0}\sum_{s=1}^{T_0} 
     1\!\left\{
         \left|y - \sum_{i=2}^N \hat{\gamma}_i Y_{it}(x)\right| 
         \le 
         \left|Y_{1s}(x) - \sum_{i=2}^N \hat{\gamma}_i Y_{is}(x)\right|
     \right\} \ge \frac{(T_0+1)\alpha-1}{T_0} \\ 
     & \iff 
     \left|y - \sum_{i=2}^N\hat{\gamma}_i Y_{it}(x)\right| \le q_{\alpha}(x),
\end{align*}
where $q_{\alpha}(x)$ is the $[1 - \{(T_0+1)\alpha - 1\}/{T_0}]$-quantile of the pre-treatment residuals $|Y_{1s}(x) - \sum_{i=2}^N \hat{\gamma}_i Y_{is}(x)|, s=1, \ldots, T_0$.
Hence, the set $\hat{C}_{1-\alpha}(x)$ is an  interval centered at the estimate $\hat{Y}_{1t}^N(x) =  \sum_{i=2}^N \hat{\gamma}_i Y_{it}(x)$:
\[
\hat{C}_{1 - \alpha}(x)
=
\left[\hat{Y}_{1t}^N(x) -q_{\alpha}(x), \hat{Y}_{1t}^N(x) +q_{\alpha}(x)\right].
\]

From the above procedures, we obtain a pointwise prediction band $\hat{C}_{1 - \alpha}(\cdot)$ for the counterfactual outcome $Y_{1t}^N(\cdot)$. A corresponding prediction band for the causal effect $Y_{1t}^I - Y_{1t}^N$ is then given by $Y_{1t}(\cdot) - \hat{C}_{1-\alpha}(\cdot)$. 
In order to guarantee that the band $\hat{C}_{1 - \alpha}(\cdot)$ contain the estimate $\hat{Y}_{1t}^N(\cdot)$, we use the same weights $\hat{\gamma} = (\hat{\gamma}_i)_{i=2}^N$ to construct the residuals under any sharp null hypothesis. 
This contrasts with the procedures in \cite{chernozhukov2021exact} and \cite{ben2021augmented} for scalar outcomes, where the weights may vary depending on the sharp null hypothesis being considered.

For the case of scalar outcomes, \cite{chernozhukov2021exact} and \cite{ben2021augmented} show that, under several conditions, prediction sets based on the conformal inference approach for the SCM and the augmented SCM are asymptotically valid as the number of pre-treatment periods $T_0$ goes to infinity.
We expect that our procedure also enjoys asymptotic validity, although a rigorous justification is left for future work.

\paragraph{Placebo permutation test.}
Fix a post-treatment time period $t$.
Let $\hat{Y}_{1t}^N$ be the FSC  or augmented FSC estimate for $Y_{1t}^N$. In addition to this estimate, we compute 
$\hat{Y}_{it}^{N}$ for $i=2, \ldots, N$ by applying the same algorithm as used to obtain $\hat{Y}_{1t}^N$, pretending that unit $i$ is the treatment unit. Then the $p$-value for testing the null hypothesis of no causal effect, $H_0: Y_{1t}^I- Y_{1t}^N = 0$, or equivalently, $H_0: d(\nu_{1t}^I, \nu_{1t}^N) =  0$, is given by 
\begin{align}
    &\frac{1}{N}\sum_{i=1}^N 1\left\{\|Y_{1t} - \hat{Y}_{1t}^{N}\|_{\mathcal{H}} \le \|Y_{it} - \hat{Y}_{it}^{N}\|_{\mathcal{H}}\right\}  
    =
    \frac{1}{N}\sum_{i=2}^N 1\left\{\|Y_{1t} - \hat{Y}_{1t}^{N}\|_{\mathcal{H}} \le \|Y_{it} - \hat{Y}_{it}^{N}\|_{\mathcal{H}}\right\}
    +
    \frac{1}{N}.
\end{align}
Analogous procedure is proposed by \cite{gunsilius2023distributional} for the case of distributional outcomes, and by \cite{kurisu2025geodesic} for the case of outcomes in a geodesic metric space. 

\section{Simulations}
\label{sec:simulations}
In this section, we conduct simulation studies to evaluate the finite-sample performance of the proposed and related methods under both autoregressive and latent factor models.
In these simulations, we assume that the outcomes take  values in the $L^2$ space $L^2([0, 1])$ (see Example \ref{exm:function}), and set $N=50, T=10$ and $T_0=9$. 

For the autoregressive model, we first generate the pre-treatment outcomes $Y_{it}, i=1, \ldots, 50, t=1, \ldots, 9$ as 
\begin{equation}
    Y_{it}(x) =  \sum_{\ell = 1}^{10}\ell^{-1.2}U_{\ell t} g_{\ell}(x), \quad x \in [0, 1],
\end{equation}
where $U_{\ell t}$'s are independent random variables uniformly distributed on $[-\sqrt{3}/100, \sqrt{3}/100]$, and 
$g_{1}(x) = 1, g_{\ell}(x) = \sqrt{2}\cos((\ell-1)\pi x)$ for $\ell \ge 2$. The generated pre-treatment outcomes are fixed throughout the remaining simulations for the autoregressive model. 
We then generate post-treatment control outcomes $Y_{i, 10}^N, i=1, \ldots, 50$ as 
\begin{equation}
    Y_{i, 10}^N(x) = \langle \beta_1(x, \cdot), Y_{i9} \rangle_{L^2} + \langle \beta_2(x, \cdot), Y_{i8} \rangle_{L^2} + \langle \beta_3(x, \cdot), Y_{i7} \rangle_{L^2} +
    \varepsilon_i(x), \quad x \in [0, 1],
\end{equation}
where $\beta_1(x, y) = 0.6f(y | x, 0.1), \beta_2(x, y) = 0.3f(y | x, 0.1), \beta_3(x, y) = 0.1f(y | x, 0.1)$, with $f(\cdot | \mu, \sigma)$ denoting the density function of the normal distribution with mean $\mu$ and standard deviation $\sigma$. The error term $\varepsilon_i$ is generated as 
$
\varepsilon_i(x) = \varepsilon_{1i} + \varepsilon_{2i}x^{1/2} + \varepsilon_{3i} x^{1/3} + \varepsilon_{4i} x^{1/4},
$
where $\varepsilon_{i1}, \varepsilon_{i2}, \varepsilon_{i3}, \varepsilon_{i4}$ are independent random variables uniformly distributed on $[-C, C]$. We consider three cases for the value of $C$: $C = 0.05$ (low noise level), $C = 0.2$ (medium noise level), and $C = 1$ (high noise level).

For the latent factor model, we set the number of factors to $J=5$. The factors $\mu_{jt}$, $j = 1, \ldots, 5, t = 1, \ldots, 10$ are specified as $\mu_{11}(x) = 1$ and $\mu_{jt}(x) = \sqrt{2}\cos((j + t)\pi x)$ for $(j, t) \neq (1, 1)$.
The factor loadings $\phi_{ij}, i = 1, \ldots, 50, j = 1, \ldots, 5$ are generated from the normal distribution with mean $0$ and standard deviation $0.1$. The control potential outcomes $Y_{it}^N, i = 1, \ldots, 50, t = 1, \ldots, 10$ are then generated as
\[
Y_{it}^N(x) = \sum_{j=1}^5 \phi_{ij} \mu_{jt}(x) + \varepsilon_{it}(x),
\]
where the error term $\varepsilon_{it}$ is generated as
$
\varepsilon_{it}(x) = \varepsilon_{1it} + \varepsilon_{2it}x^{1/2} + \varepsilon_{3it} x^{1/3} + \varepsilon_{4it} x^{1/4},
$
and  $\varepsilon_{i1t}, \varepsilon_{i2t}, \varepsilon_{i3t}, \varepsilon_{i4t}$ are independent random variables uniformly distributed on $[-C, C]$. We consider three cases for the value of $C$: $C = 0.02$ (low noise level), $C = 0.1$ (medium noise level), and $C = 0.5$ (high noise level).

For each model, we estimate the counterfactual outcome $Y_{1, 10}^N$ using the following  methods: the FSC estimator; the ridge augmented FSC estimators with penalty hyperparameters $\lambda = 100 \hat{\lambda}_{\text{cv}}, \hat{\lambda}_{\text{cv}}, 0.01\hat{\lambda}_{\text{cv}}$, where $\hat{\lambda}_{\text{cv}}$ denotes the optimal value of $\lambda$ selected by the cross validation (see Remark \ref{rmk:regulatization}); and the augmented GSC estimator proposed by \cite{kurisu2025geodesic}.
In our setting, the augmented GSC estimator is defined as
\[
\hat{Y}_{1, {10}}^{N, \text{agsc}}
=
\hat{Y}_{1, 10}^{N, \text{scm}} + \hat{m}_{1, 10} - \sum_{i=2}^{50} \hat{\gamma}_i^{\text{scm}}\hat{m}_{i, 10},
\]
where $\hat{m}_{i, 10}$ denotes a regression estimator of $Y_{i, 10}^N$ which can be implemented by using the geodesic optimal transport (GOT) regression \citep{zhu2025geodesic}. Since the GOT regression reduces to the standard linear regression in our setting, we have 
$
\hat{m}_{i, 10} = \sum_{t=1}^{9} \hat{\alpha}_t Y_{it}, i=1, \ldots, 50,
$
where the coefficients $\hat{\alpha}_1, \ldots, \hat{\alpha}_9 \in \mathbb{R}$ are obtained by solving
\[
\min_{\alpha_1, \ldots \alpha_9 \in \mathbb{R}} \sum_{i=2}^{50}\left\|Y_{i, 10} - \sum_{t=1}^9 \alpha_t Y_{it}\right\|_{L^2}.
\]
For the choice of the basis functions $\{\varphi_k\}_{k=1}^K$ in the ridge augmented FSC, we use cubic B-splines and set $K=50$.
The number of Monte Carlo repetitions for each setup is set to $500$. The performance of the estimates $\hat{Y}_{1, 10}^N$ is evaluated using the $L^2$-distance $\|Y_{1, 10}^N - \hat{Y}_{1, 10}^N\|_{L^2}$. To compute the integrals over $[0,1]$, we approximate them by finite summations over $100$ grid points: $0.01, 0.02, \ldots, 0.99$.

\begin{figure}[t]
    \centering
    \includegraphics[width=0.9\linewidth]{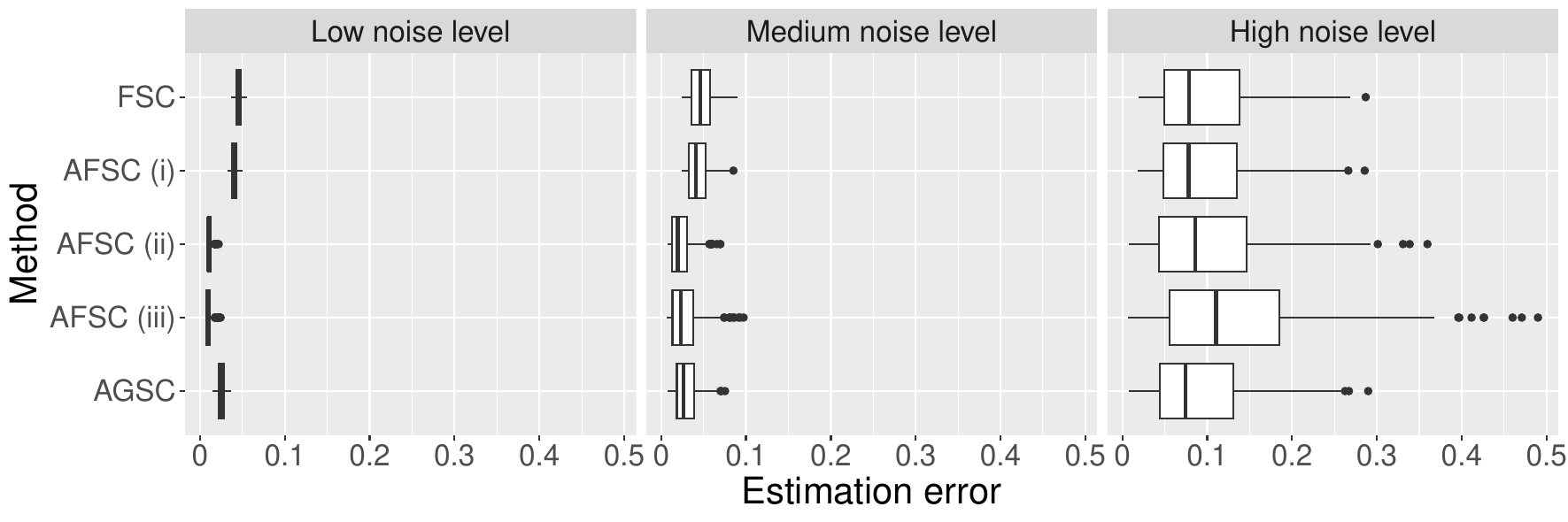}
    \caption{Boxplots of estimation errors under the autoregressive model with low noise (left), medium noise (middle), and high noise (right). AFSC (i), AFSC (ii), and AFSC (iii) denote the ridge augmented FSC with penalty hyperparameters $\lambda = 100\hat{\lambda}_{\text{cv}},\ \hat{\lambda}_{\text{cv}},\ \text{and}\ 0.01\hat{\lambda}_{\text{cv}}$, where $\hat{\lambda}_{\text{cv}}$ is the value of $\lambda$ selected by cross validation.}
    \label{fig:simulation_auto}
\end{figure}

\begin{figure}[t]
    \centering
    \includegraphics[width=0.9\linewidth]{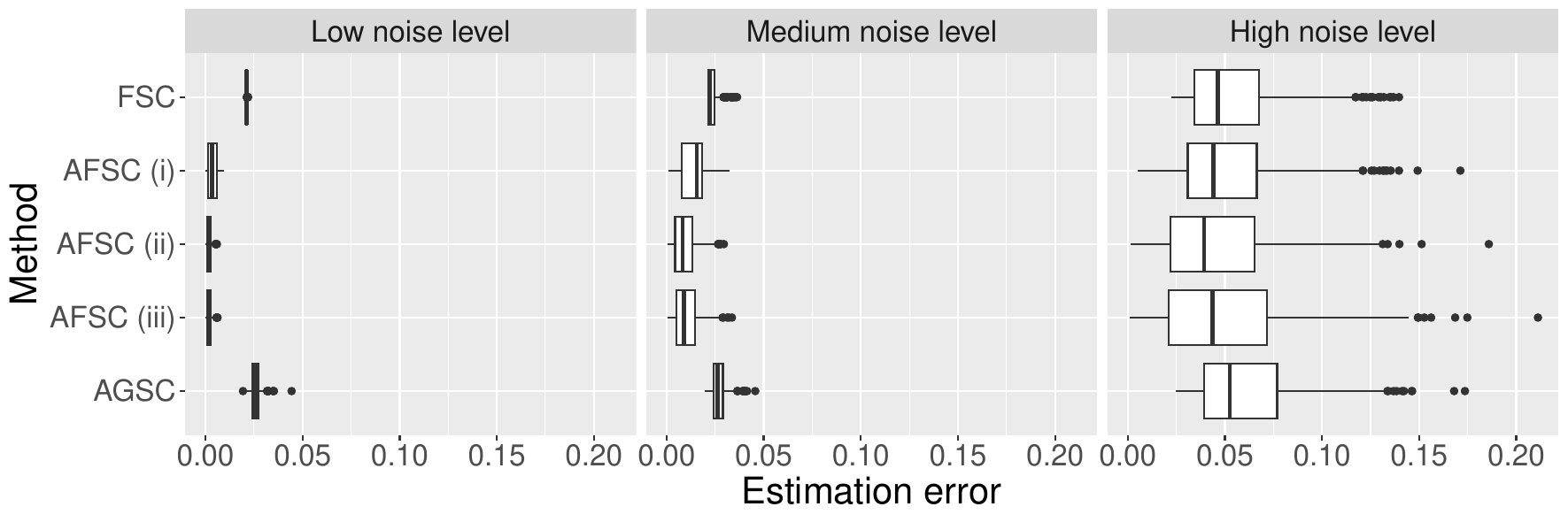}
    \caption{Boxplots of estimation errors under the latent factor model with low noise (left), medium noise (middle), and high noise (right). }
    \label{fig:simulation_factor}
\end{figure}

Figures \ref{fig:simulation_auto} and \ref{fig:simulation_factor} show the results for the autoregressive and latent factor models, respectively. We first compare the performance of the FSC estimator, the augmented FSC estimator with $\lambda = \hat{\lambda}_{\text{cv}}$, and the augmented GSC estimator. As implied by the theoretical results in Section \ref{sec:est_error}, for both models the augmented FSC estimator outperforms the FSC estimator when the noise level is not high. In such cases, the augmented FSC estimator also performs better than the augmented GSC estimator. When the noise level is high, the three estimators exhibit relatively comparable performance.

To examine the dependence of the augmented FSC estimator on its hyperparameter $\lambda$, we next compare the performance of the augmented FSC estimators with $\lambda = 100\hat{\lambda}_{\text{cv}}, \hat{\lambda}_{\text{cv}}$ and $0.01\hat{\lambda}_{\text{cv}}$. 
For both models, the estimator with $\lambda = 0.01\hat{\lambda}_{\text{cv}}$ performs better when the noise level is low, whereas it performs relatively worse when the noise level is high. This observation is also consistent with the following theoretical implications.  When the noise level $\sigma$ is small, the first terms of the error bounds in \eqref{eq:est_error_auto_2} and \eqref{eq:est_error_factor_2} dominate, so small values of $\lambda$ lead to smaller error bounds. In contrast, when the noise level $\sigma$ is high, the second terms of the bounds dominate, so small values of $\lambda$ lead to larger error bounds.

\section{Empirical Illustrations}
\label{sec:empirical}
\subsection{Analysis of the Impact of Abortion Legislation on Fertility Patterns in East Germany}
\label{susec:abortion}
Using the proposed methods, we first reanalyze the impact of the 1972 abortion legislation in East Germany on age-specific fertility rates (ASFRs), which was previously examined by \cite{kurisu2025geodesic}.
We utilize data from the Human Fertility Database (\url{https://www.humanfertility.org}),
which provides annual ASFRs for ages 12 to 55 across various countries.
When regarded as functions over age, the ASFRs constitute functional data residing in the $L^2$ space $L^2([12, 55])$ (see Example \ref{exm:function}). 

In March 1972, East Germany enacted legislation permitting the termination of pregnancies within the first twelve weeks, thereby significantly liberalizing access to abortion. 
We aim to investigate the causal effect of this legislation on the ASFR curves of East Germany. 
We designate East Germany as the treated unit and, subject to data availability, construct a control group consisting of 20 countries that did not experience comparable liberalization of abortion laws (Austria, Belgium, Bulgaria,  Canada, Switzerland, Czechia, Denmark, Spain, Finland, France, Hungary, Ireland, Italy, Japan, the Netherlands, Portugal, Slovakia, Sweden, the United Kingdom (England and Wales), and the United States). 
We define 1956--1971 as the  pre-treatment periods and 1972--1975 as the post-treatment periods. 
Figure~\ref{fig:data_asfr} in the Appendix presents the ASFR curves for East Germany and the control countries from 1956 to 1975.
In this application, we define the differences of the functional outcomes $Y_{1t}^I - Y_{1t}^N, t=1972, 1973, 1974, 1975$ as the causal effects of interest. 
The proposed approach was implemented to obtain the FSC and ridge augmented FSC estimators.
For the choice of basis functions $\{\varphi_{k}\}_{k=1}^K$ in the ridge augmented FSC, we employed cubic B-splines and set $K = 50$. 

Figure~\ref{fig:synthetic_outcomes_asfr} compares the observed ASFR curves of East Germany with the corresponding ASFR curves obtained from the FSC and augmented FSC units. 
To save space, we report results only for the periods from 1964 to 1975; see Figure~\ref{fig:synthetic_asfr_until1963} in the Appendix for the results from 1956 to 1963.
We find that the pre-treatment fit achieved by the FSC method is relatively good, and consequently, the augmentation introduces only minor adjustments in this application. 
The pre-treatment fits $\sqrt{\sum_{t=1}^{T_0} \|Y_{1t} - \sum_{i=2}^N \hat{\gamma}_i Y_{it}\|_{\mathcal{H}}^2}$ by the FSC and augmented FSC are $0.1259$ and $0.0687$, respectively, indicating that the augmentation improves the pre-treatment fit by approximately $45.4\%$.
Table \ref{table:asfr} presents the weights of the control units. The weights from the FSC estimator are sparse, with only Austria, Bulgaria, Switzerland, and Czechia receiving positive weights. 
Although the augmented weights can take negative values, they are broadly similar to the FSC weights.

Figure \ref{fig:prediction_asfr} shows the estimates and 90$\%$ prediction bands for the causal effects ${Y}_{1t}^I - {Y}_{1t}^N, t=1972, 1973, 1974, 1975$, based on the augmented FSC. The results suggest that the legislation led to notable downward shifts in fertility rates across the prime childbearing ages (approximately 20–30 years old), which is consistent with the results reported by \cite{kurisu2025geodesic}.
To assess the statistical significance of these effects, we implemented the placebo permutation test using the augmented FSC. The resulting $p$-values for the null hypothesis of no causal effect in each post-treatment year (1972--1975) are  $0.095, 0.048, 0.048, 0.095$, respectively. 
These findings indicate that the effects are statistically significant at the $0.10$ significance level. 
See Figure~\ref{fig:placebo_asfr} in the Appendix for detailed results from the permutation test.

\begin{figure}[t]
    \centering
    \includegraphics[width=\linewidth]{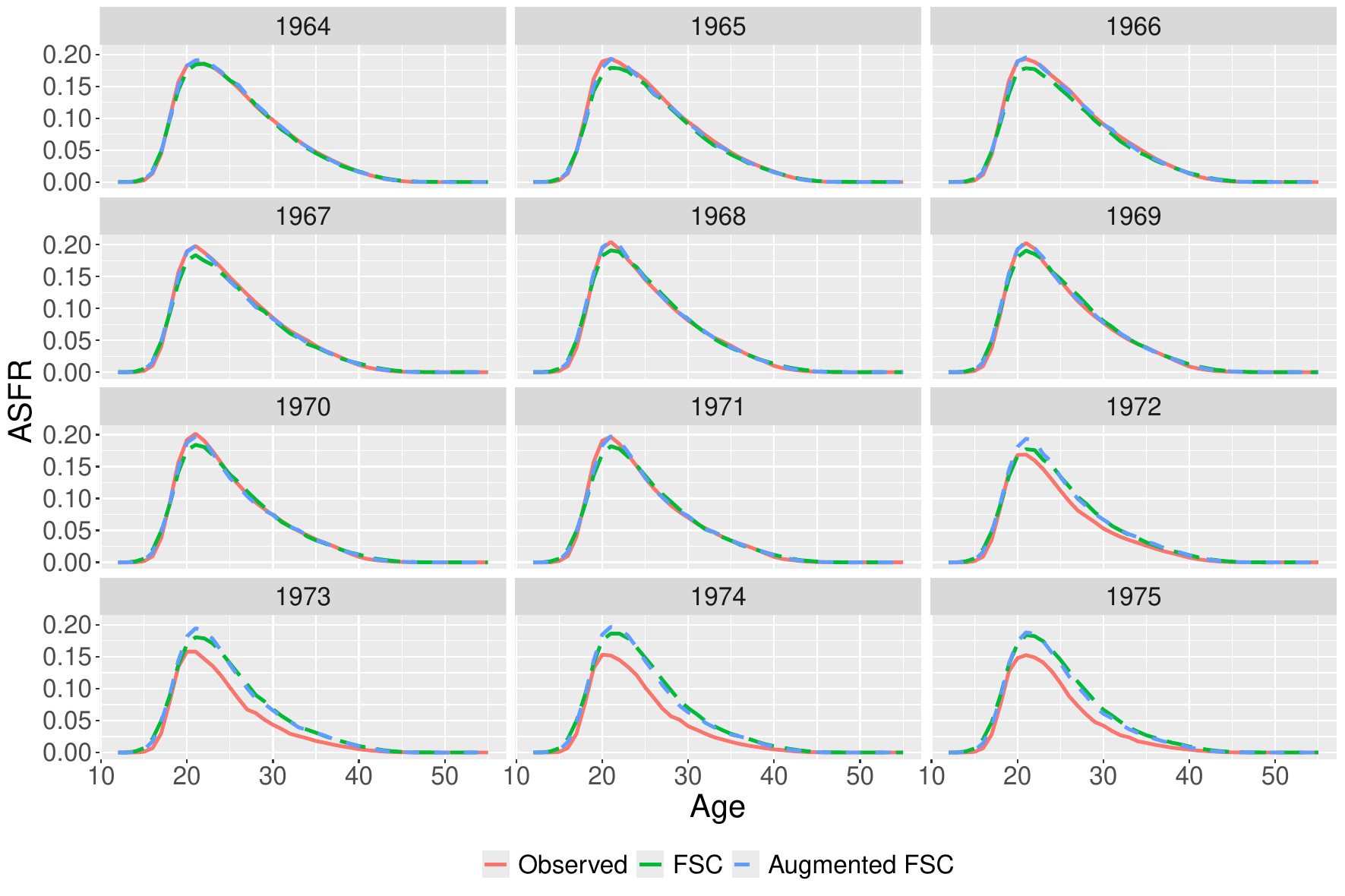}
    \caption{Observed ASFR curves for East Germany and the corresponding ASFR curves for the FSC and ridge-augmented FSC units during pre-treatment periods (1964–1971) and the post-treatment periods (1972–1975).}
    \label{fig:synthetic_outcomes_asfr}
\end{figure}

\begin{figure}[t]
    \centering
    \includegraphics[width=0.8\linewidth]{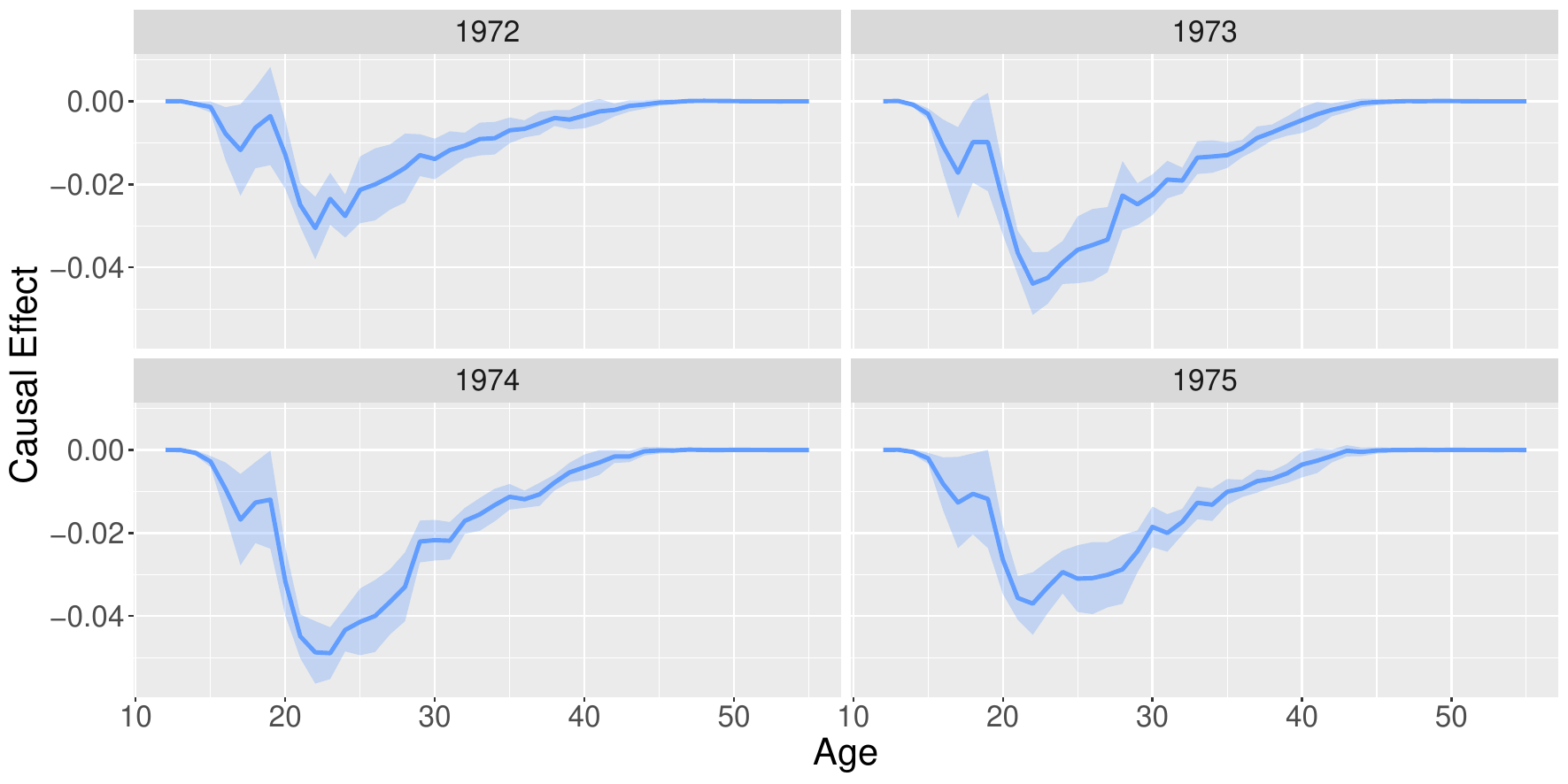}
    \caption{Estimates and pointwise 90\% prediction bands for the causal effects
${Y}_{1t}^I - {Y}_{1t}^N, t = 1972, 1973, 1974, 1975$
based on the ridge augmented FSC.}
    \label{fig:prediction_asfr}
\end{figure}

\begin{table}[t]
\centering
\begin{tabular}{l r r @{\hspace{2cm}} l r r}
\hline
Control country & FSC   & AFSC   & Control country & FSC  & AFSC  \\
\hline \hline
Austria & 0.396  & 0.526 & Hungary & 0.000 & -0.041 \\
Belgium & 0.000  & 0.093 & Ireland & 0.000 & 0.075 \\
Bulgaria & 0.416 & 0.382 & Italy & 0.000 &  -0.144 \\
Canada & 0.000   & -0.090 & Japan &0.000 & -0.063 \\
Switzerland & 0.089 & -0.026 & Netherlands & 0.000 & 0.033 \\
Czechia & 0.188 & 0.158 & Portugal & 0.000 & -0.119 \\
Denmark & 0.000 & 0.036 & Slovakia & 0.000 & 0.116 \\
Spain & 0.000 & -0.059 & Sweden & 0.000 & 0.077 \\
Finland & 0.000 & 0.004 & United Kingdom & 0.000 & 0.088  \\
France & 0.000 & 0.099 & United States & 0.000 & -0.144 \\
\hline
\end{tabular}
\caption{Control unit weights for the ASFR data obtained using the FSC and ridge augmented FSC (AFSC).}
\label{table:asfr}
\end{table}

\subsection{Analysis of the Effect of the Collapse of the Soviet Union on Mortality}
\label{subsec:mortality}
Next, we reanalyze the effect of the 1991 collapse of the Soviet Union on its age-at-death distributions, which was previously analyzed by \cite{kurisu2025geodesic}. The Human Mortality Database (\url{https://www.mortality.org}) provides annual life tables, i.e., histograms of death counts by age, for various countries. From these histograms, we derive the age-at-death distributions for each country.
Specifically, for each country, the density functions of the age-at-death distributions are obtained by smoothing the histograms. For this processing step, we use the \texttt{CreateDensity} function in the \texttt{frechet} package \citep{chen2023frechet}.
We treat these distributions as elements of the Wasserstein space (see Example \ref{exm:distribution}).

The collapse of the Soviet Union in 1991 had a profound impact on population dynamics, with life expectancy in Russia experiencing dramatic declines. 
We aim to study the causal effect of the collapse on mortality patterns in Russia by applying the proposed methods, using age-at-death distributions as the outcomes. 
Russia is selected as the treated unit, while the control group comprises 17 Western European countries: Austria, Belgium, Denmark, Finland, France, Germany, Iceland, Ireland, Italy, Luxembourg, the Netherlands, Norway, Portugal, Spain, Sweden, Switzerland, and the United Kingdom. 
We define 1970 to 1990 as the pre-treatment periods and 1991 to 1999 as the post-treatment periods. 
Figures \ref{fig:data_aad_quantile} and \ref{fig:data_aad_density} in the Appendix show the quantile and density functions of the age-at-death distributions for Russia and the control countries from 1970 to 1999.
In this application, we define the differences between the quantile functions $Y_{1t}^I - Y_{1t}^N, t=1991, 1992, \ldots, 1999$ as the causal effects of interest. 
The proposed approach was implemented to obtain the FSC and ridge augmented FSC estimators. For the choice of basis functions $\{\varphi_k\}_{k=1}^K$ in the ridge augmented FSC, we employed cubic B-splines and set $K = 50$. 
For the modification of the augmented estimator $\hat{Y}_{1T}^{N, \text{aug}}$, we employed the rearrangement method (see Remark \ref{rmk:rearrangement}).

Figure~\ref{fig:synthtic_outcome_aad_quantile} compares the quantile functions of the observed age-at-death distributions for Russia with the corresponding quantile functions constructed using the FSC and augmented FSC methods.
To conserve space, we report results only for the period from 1985 to 1999; results for 1970–1984 are provided in Figure~\ref{fig:synthetic_aad_until1984} in the Appendix. See also Figure~\ref{fig:synthtic_outcome_aad_density} in the Appendix for the corresponding comparison based on density functions.
The pre-treatment fit achieved by the FSC is relatively poor, and as a result, the augmentation leads to substantial adjustments in this application. Specifically, the pre-treatment fit of the FSC is 0.2092, whereas that of the augmented FSC is 0.0634, representing an improvement of approximately $69.7\%$ due to augmentation.
Table \ref{table:aad} presents the weights of the control units. The weights obtained using the FSC assign a total weight of 1 to Portugal. The augmented weights can be negative, and these weights differ substantially from those obtained using the FSC.

Figure~\ref{fig:prediction_aad_1} presents the estimates and $90\%$ prediction bands for the causal effects $Y_{1t}^I - {Y}_{1t}^N$ for $t = 1991, 1992, \ldots, 1999$ based on the augmented FSC. The results indicate that the collapse of the Soviet Union led to pronounced leftward shifts in the age-at-death distributions in the years following 1992, reflecting sharply increased mortality risks across nearly all age groups in Russia. These findings are consistent with those reported by \cite{kurisu2025geodesic}.
To evaluate the statistical significance of these effects, we conducted a placebo permutation test based on the augmented FSC. The resulting $p$-values for the null hypothesis of no causal effect in each post-treatment year (1991–1999) are 0.389, 0.056, 0.056, 0.056, 0.056, 0.056, 0.056, 0.056, and 0.056, respectively. These results suggest that the immediate effect in 1991 is not statistically significant, whereas the increases in mortality become statistically significant in the years after 1992.
Detailed results of the permutation test are provided in Figure~\ref{fig:placebo_aad} in the Appendix.

\begin{figure}[t]
    \centering
    \includegraphics[width=\linewidth]{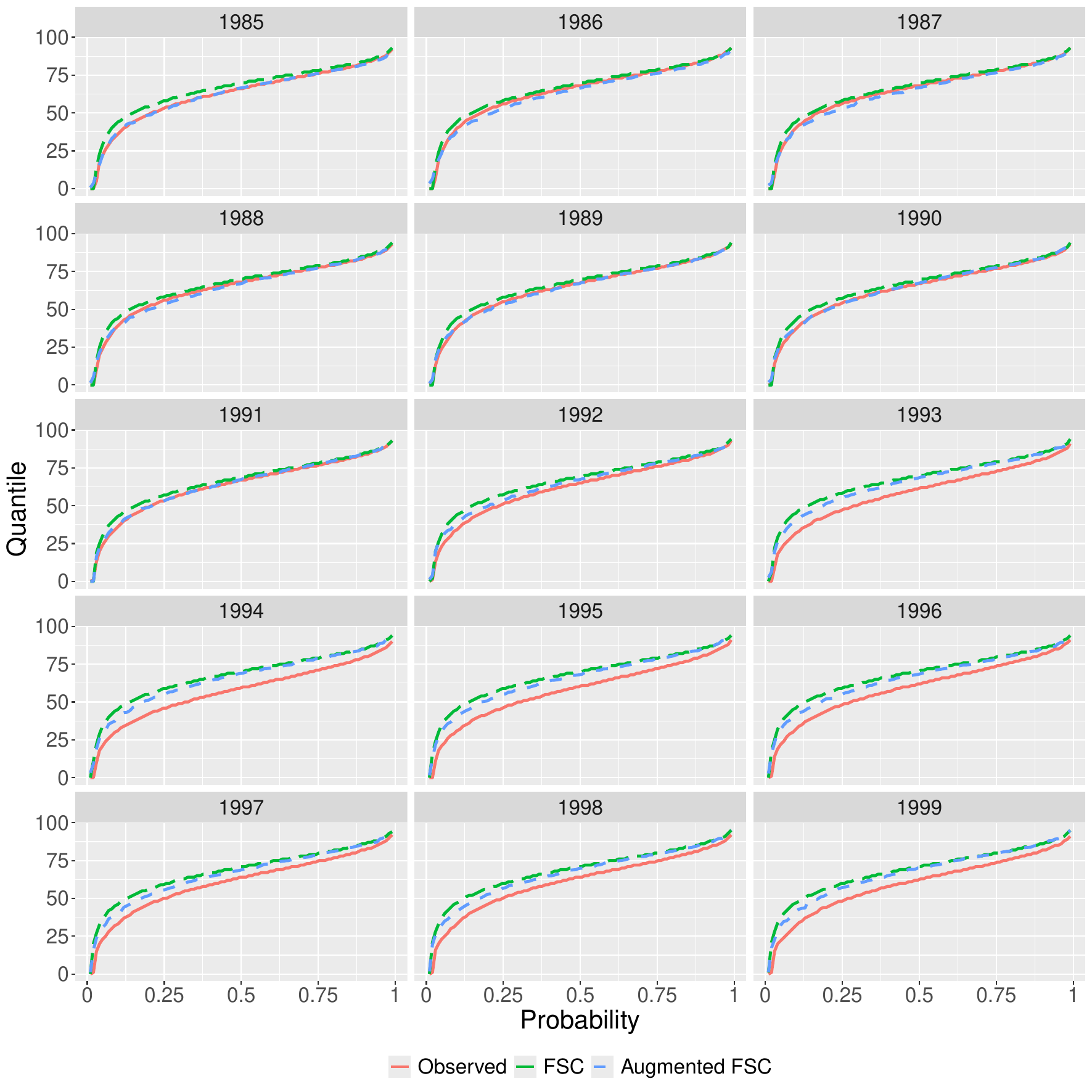}
    \caption{The quantile functions of the observed age-at-death distributions for Russia and the corresponding quantile functions obtained from the FSC and ridge augmented FSC units. The priods from 1985 to 1990 are  pre-treatment priods, while 1991 to 1999 are the post-treatment periods.}
    \label{fig:synthtic_outcome_aad_quantile}
\end{figure}

\begin{table}[t]
\centering
\begin{tabular}{l r r @{\hspace{2cm}} l r r}
\hline
Control country & FSC   & AFSC   & Control country & FSC  & AFSC  \\
\hline \hline
Austria & 0.000  & 0.046 & Luxembourg & 0.000 & 0.071 \\
Belgium & 0.000  & -0.116 & Netherlands & 0.000 & -0.334 \\
Denmark & 0.000 & 0.282 & Norway & 0.000 &  -0.448 \\
Finland & 0.000 & 0.461 & Portugal & 1.000 & 0.898 \\
France & 0.000 & 1.258 & Spain & 0.000 & -0.943 \\
Germany & 0.000 & 0.034 & Sweden & 0.000 & -0.076 \\
Iceland & 0.000 & 0.147 & Switzerland & 0.000 & -0.380 \\
Ireland & 0.000 & 0.134 & United Kingdom & 0.000 & 0.034  \\
Italy & 0.000 & -0.068 &  &  &   \\
\hline
\end{tabular}
\caption{Control unit weights for the mortality data obtained using the FSC and ridge augmented FSC (AFSC).}
\label{table:aad}
\end{table}

\begin{figure}[t]
    \centering
    \includegraphics[width=\linewidth]{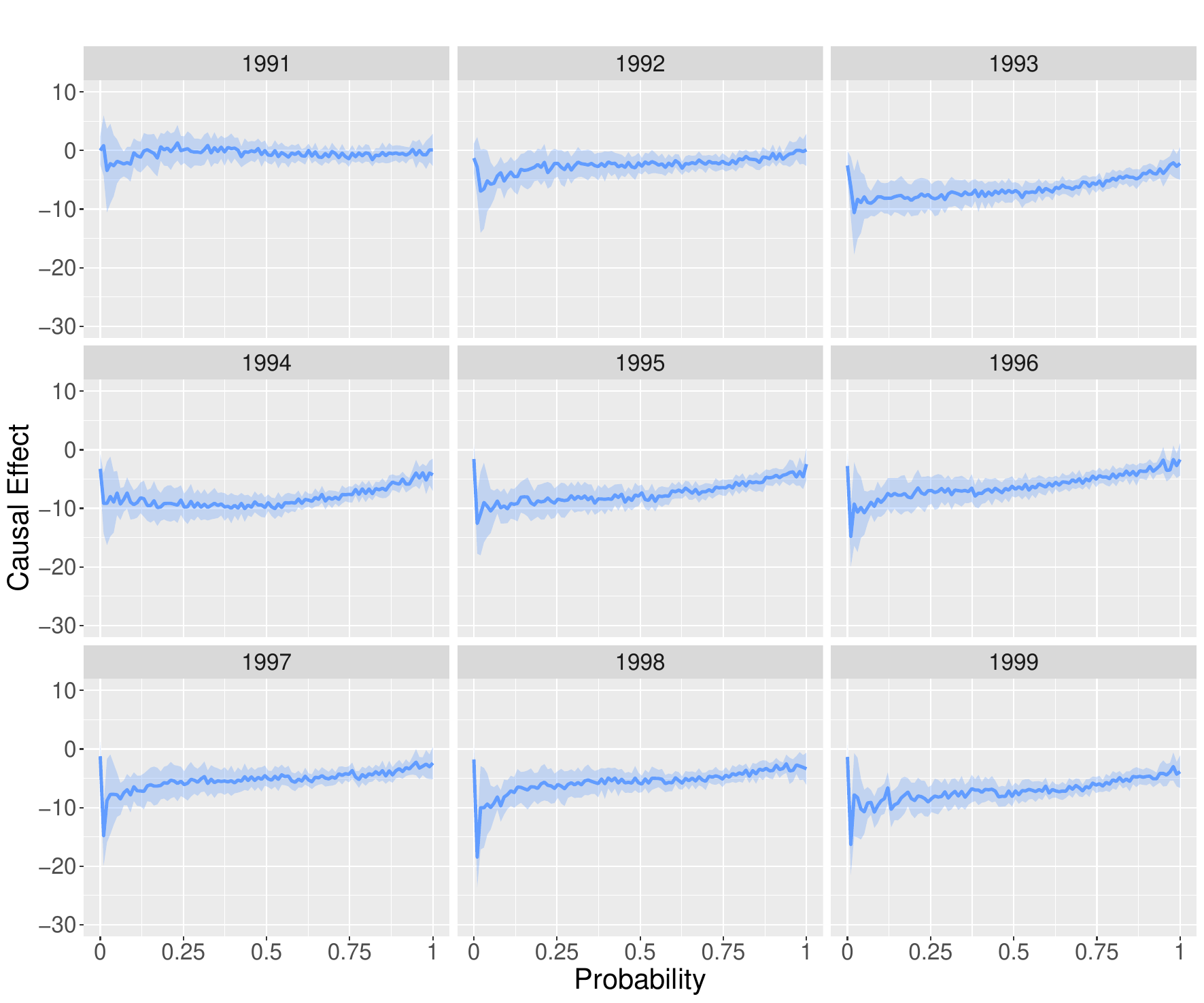}
    \caption{Estimates and pointwise $90\%$ prediction bands for the causal effects ${Y}_{1t}^I - {Y}_{1t}^N, t=1991, 1992, \ldots , 1999$ based on the ridge augmented FSC.}
    \label{fig:prediction_aad_1}
\end{figure}

\subsection{Analysis of the Impact of the Brexit Announcement on U.K. Services Trade}
\label{subsec:service}
Finally, we analyze the impact of the Brexit announcement on the United Kingdom (UK) services trade. 
On June 23, 2016, the UK narrowly voted to leave the European Union.
This decision may substantially affect the UK’s ability to maintain frictionless trade with its largest trading partner.
We assess the impact of this decision on UK services trade using the proposed framework, with covariance matrices as outcomes.

UN Trade and Development (\url{https://unctad.org}) provides quarterly trade data for various services and countries.
Considering data availability, we focus on the following nine service categories: SC (transport), SD (travel), SE (construction), SG (financial services), SH (intellectual property charges), SI (telecommunications, computer and information services), SJ (other business services), SK (personal, cultural and recreational services), and SL (government goods and services).
We designate the UK as the treated unit and construct a control group consisting of the following 22 countries: Australia, Austria, Canada, Czechia, Estonia, Finland, France, Greece, Hungary, Ireland, Iceland, Italy, Japan, Korea, Luxembourg, Latvia, New Zealand, Portugal, Slovenia, Sweden, T\"{u}rkiye, and the United States.
We define the periods from 2009 Q1 to 2016 Q1 as the pre-treatment periods to eliminate the effects of the 2008 financial crisis, and those from 2016 Q2 to 2018 Q2 as the post-treatment periods.
For each country, time period, and service category, we quantify trade by the total value of imports and exports, expressed in millions of US dollars.
This construction yields a 9-dimensional vector of service-specific trade volumes, denoted by $Q_{it} = (Q_{it1}, \ldots, Q_{it9})'$, for country $i$ in period $t$.
Following \cite{dubey2020functional}, we construct a $(9 \times 9)$ covariance matrix of service trade volumes, $Y_{it} = (Q_{it} - \bar{Q}_{t})(Q_{it} - \bar{Q}_{t})'$, 
where $\bar{Q}_t = (\bar{Q}_{t1}, \ldots, \bar{Q}_{t9})'$ denotes the cross-sectional mean of $Q_{it}$ across the 23 countries.
These covariance matrices serve as the outcomes and are viewed as elements of the space $\mathrm{Sym}_9^+$ equipped with the Frobenius metric $d_F$ (see Example \ref{exm:covmat}).
In this application, we define the differences between the covariance matrices $Y_{1t}^I - Y_{1t}^N, t=\text{2016 Q1}, \text{2016 Q2}, \ldots, \text{2018 Q2}$ as the causal effects of interest.
The proposed approach was implemented to obtain the FSC and ridge augmented FSC estimators.
For the modification of the augmented estimator $\hat{Y}_{1t}^{N, \text{aug}}$, we employed the \texttt{NearPD} function in the \texttt{Matrix} package \citep{bates2025matrix}.

Figure \ref{fig:service_difference_fsc} displays the heatmaps of the differences between the observed trade covariance matrices for the UK and the corresponding matrices for the FSC units. Figure \ref{fig:service_difference_afsc} presents the corresponding results for the augmented FSC units.
For brevity, we report results only for the periods from 2015 Q1 to 2017 Q2; results for the other periods are provided in Figures \ref{fig:service_difference_fsc_pre}, \ref{fig:service_difference_afsc_pre}, \ref{fig:service_difference_fsc_after}, and \ref{fig:service_difference_afsc_after} in the Appendix. 
We observe that the pre-treatment fit was improved by the augmentation. Specifically, the pre-treatment fit of the FSC is 39.3429, whereas that of the augmented FSC is 20.0639, indicating an improvement of approximately $49.0\%$ due to the augmentation. 
We also find that the estimated causal effects in the post-treatment periods are overall negative, suggesting that the variances and covariances of UK service trade volumes, $(Q_{1tj} - \bar{Q}_{tj})(Q_{1tk} - \bar{Q}_{tk}), j, k = 1, \ldots 9$,  decreased following the Brexit announcement.
This finding can be explained as follows.
UK service trade volumes are generally larger than the cross-sectional average (i.e., $Q_{1tj} > \bar{Q}_{tj}$). 
After the Brexit announcement, these trade volumes declined, bringing $Q_{1tj}$  closer to the average $\bar{Q}_{tj}$. As a result, the magnitudes of deviations from the mean were reduced, leading to the decreases in the variances and covariances.
Table \ref{table:service} reports the weights assigned to the control units. The weights obtained using the FSC are sparse, with only France, Greece, and the United States receiving positive weights. In contrast, the augmented FSC yields substantially different weights from those of the FSC.

Figures \ref{fig:service_prediction_lower} and \ref{fig:service_prediction_upper} presents the lower and upper bounds of the $90\%$ prediction bands for the causal effects $Y_{1t}^I - Y_{1t}^N, t= \text{2016 Q2}, \ldots, \text{2018 Q2}$ based on the  augmented FSC.
For example, the upper bounds for the variances of SD in the periods 2017 Q3, 2017 Q4 and 2018 Q1 are negative, indicating that the causal effects for the variances of SD in these periods are significantly negative. 
To evaluate the statistical significance of the causal effects, we conducted a placebo permutation test based on the augmented FSC. The resulting $p$-values for the null hypothesis of no causal effect in each post-treatment period (2016 Q2--2018 Q2) are $0.130, 0.130, 0.130, 0.087, 0.130, 0.130, 0.130, 0.130$, and $0.130$, respectively. 
Detailed results of the permutation test are provided in Figure \ref{fig:placebo_service} in the Appendix.

\begin{figure}
    \centering
    \includegraphics[width=\linewidth]{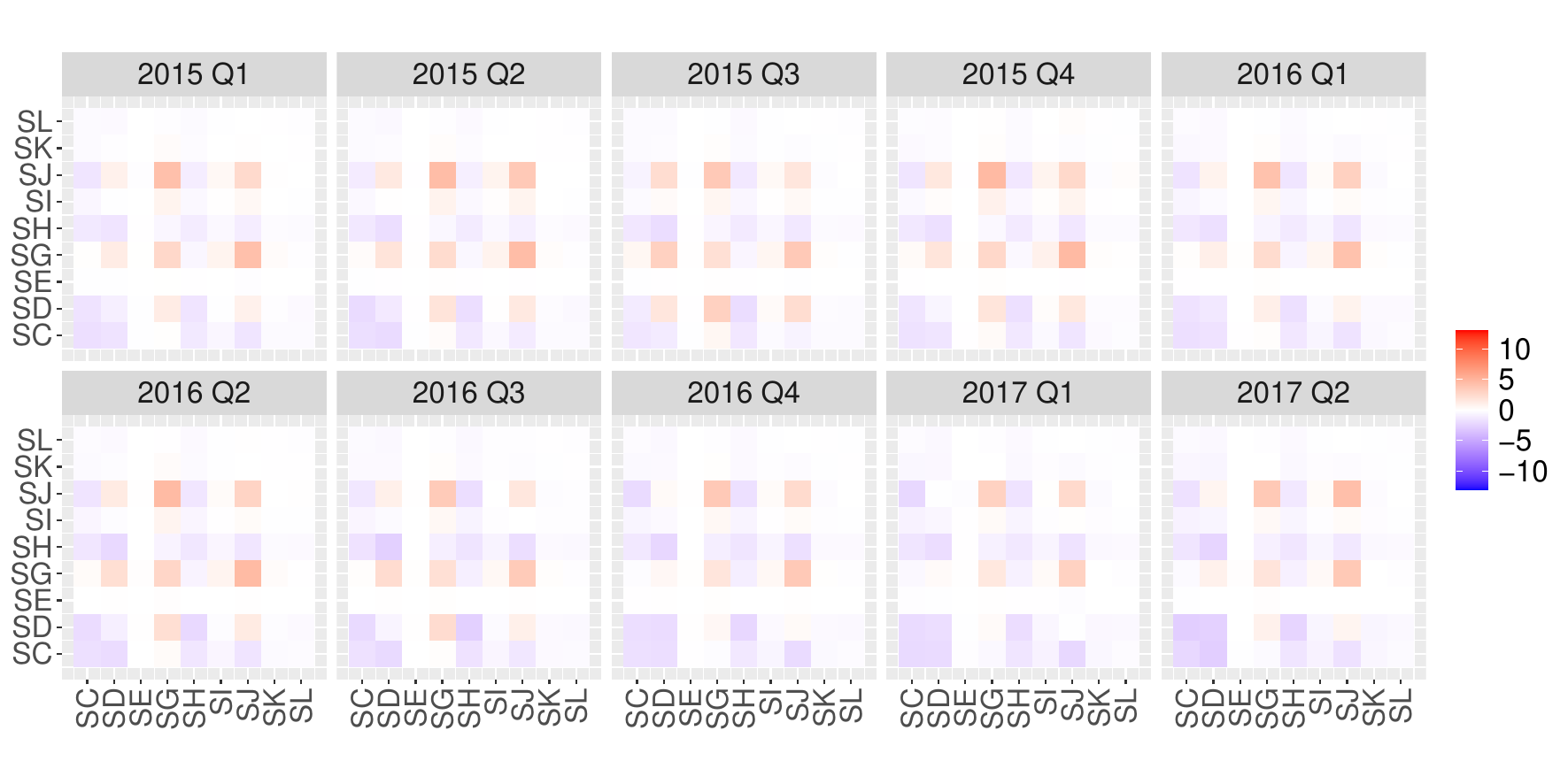}
    \caption{Heatmaps of the differences between the observed trade covariance matrices for the UK and the corresponding matrices for the FSC units during pre-treatment periods (2015 Q1--2016 Q1) and post-treatment periods (2016 Q2--2017 Q2).}
    \label{fig:service_difference_fsc}
\end{figure}

\begin{figure}
    \centering
    \includegraphics[width=\linewidth]{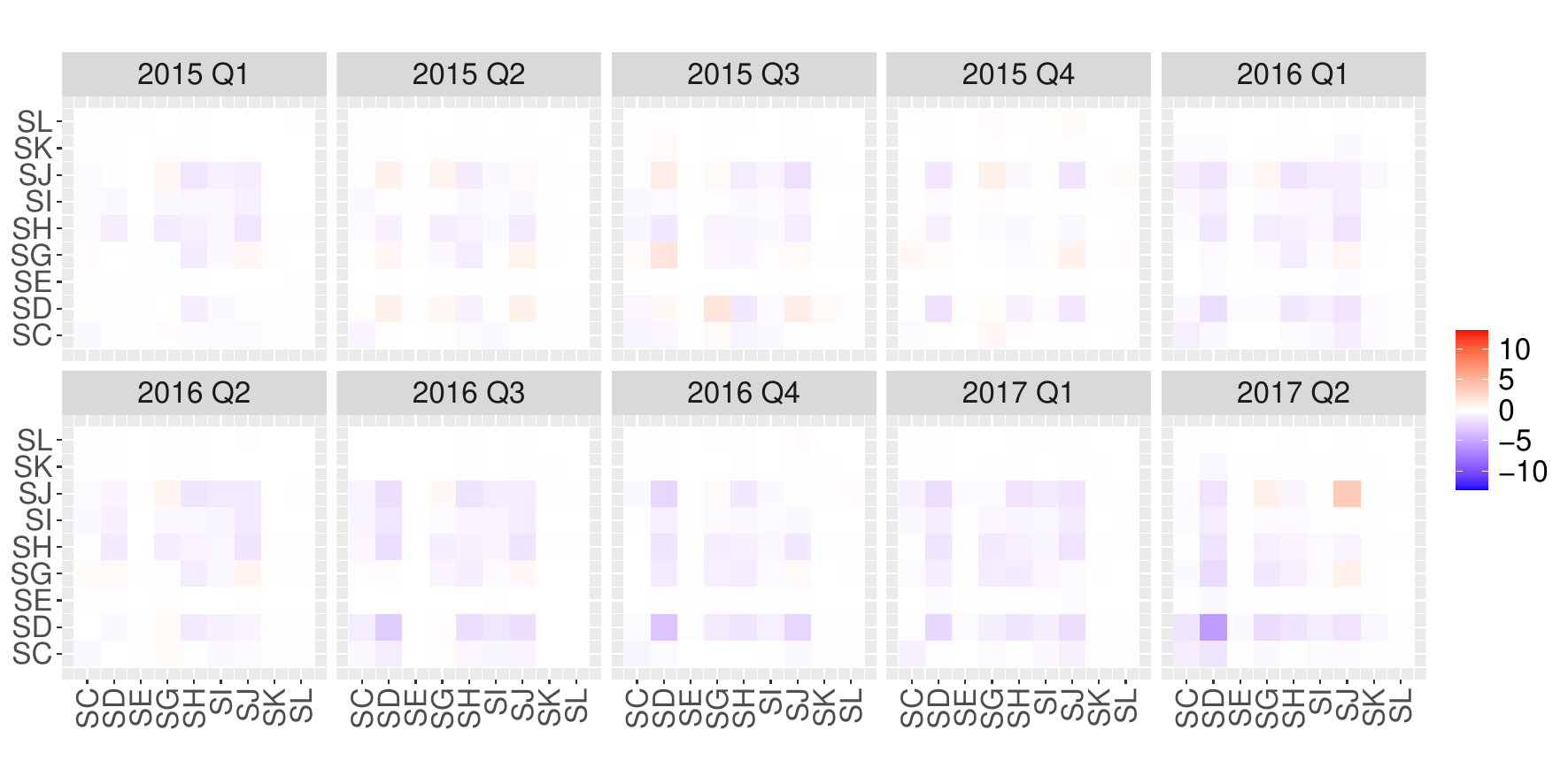}
    \caption{Heatmaps of the differences between the observed trade covariance matrices for the UK and the corresponding matrices for the ridge augmented FSC units during pre-treatment periods (2015 Q1--2016 Q1) and post-treatment periods (2016 Q2--2017 Q2).}
    \label{fig:service_difference_afsc}
\end{figure}

\begin{table}[t]
\centering
\begin{tabular}{l r r @{\hspace{2cm}} l r r}
\hline
Control country & FSC   & AFSC   & Control country & FSC  & AFSC  \\
\hline \hline
Australia & 0.000 & 1.335  & Italy & 0.000 & -0.858  \\
Austria & 0.000 & 1.152  & Japan & 0.000 & -0.777  \\
Canada & 0.000 & -12.621  & Korea & 0.000 & 0.658  \\
Czechia & 0.000 & 0.307  & Luxembourg & 0.000 & 0.243 \\
Estonia & 0.000 & 42.746  & Latvia & 0.000 & -42.588  \\
Finland & 0.000 & -0.783  & New Zealand & 0.000 & 5.887  \\
France & 0.596 & 0.415 & Portugal & 0.000 & -13.758 \\
Greece & 0.290 & 5.699  & Slovenia & 0.000 & -3.581  \\
Hungary & 0.000 & 16.669  & Sweden & 0.000 & -0.093  \\
Ireland & 0.000 & -0.420  & T\"{u}rkiye & 0.000  & 2.182  \\
Iceland & 0.000 & -0.771  & United States & 0.115   & -0.043 \\
\hline
\end{tabular}
\caption{Control unit weights for the service trade data obtained using the FSC and ridge augmented FSC (AFSC).}
\label{table:service}
\end{table}

\begin{figure}
    \centering
    \includegraphics[width=\linewidth]{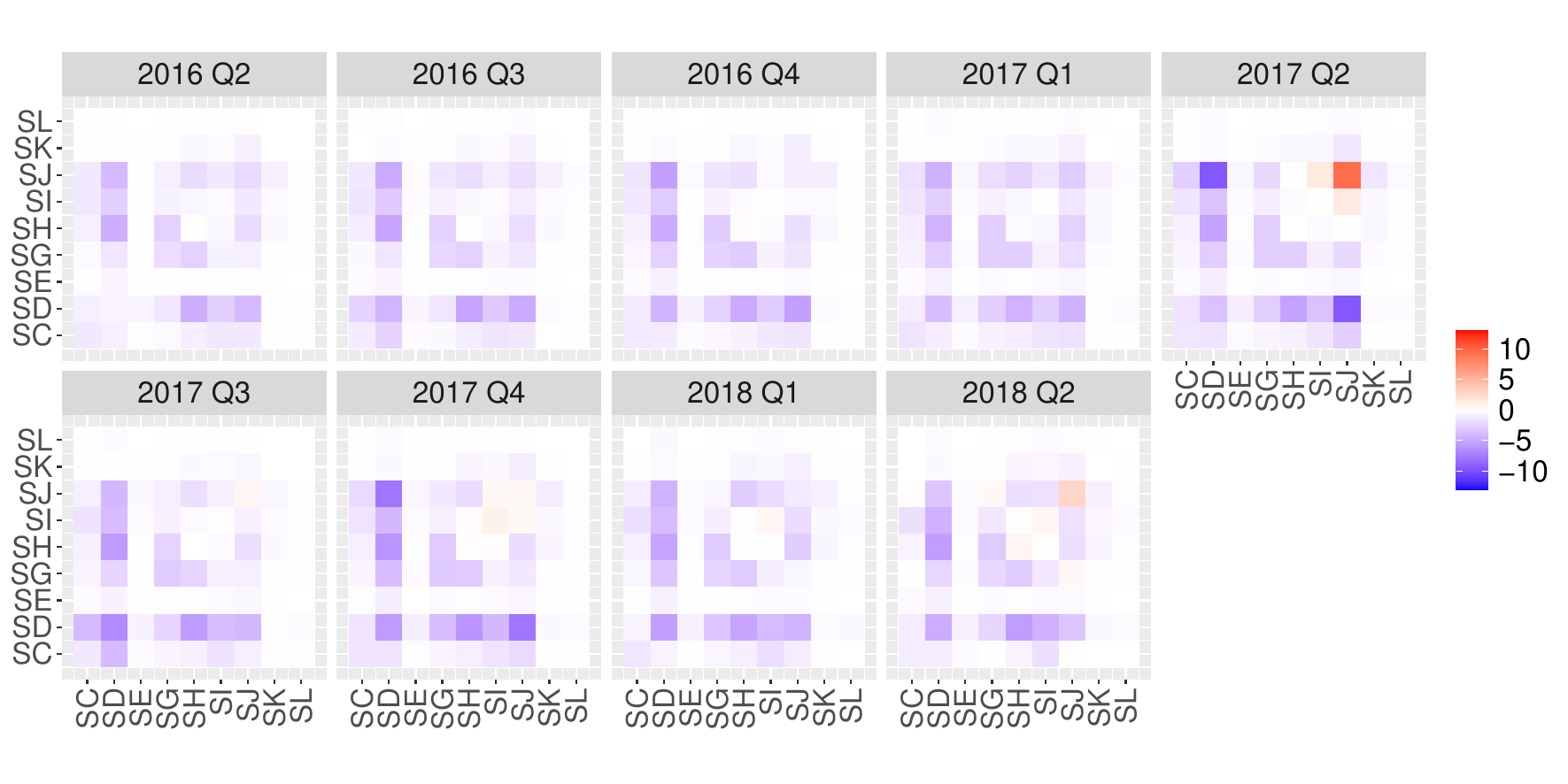}
    \caption{Lower bounds of the $90\%$ prediction bands for the causal effects $Y_{1t}^I - Y_{1t}^N, t= \text{2016 Q2}, \ldots, \text{2018 Q2}$ based on the ridge augmented FSC. }
    \label{fig:service_prediction_lower}
\end{figure}

\begin{figure}
    \centering
    \includegraphics[width=\linewidth]{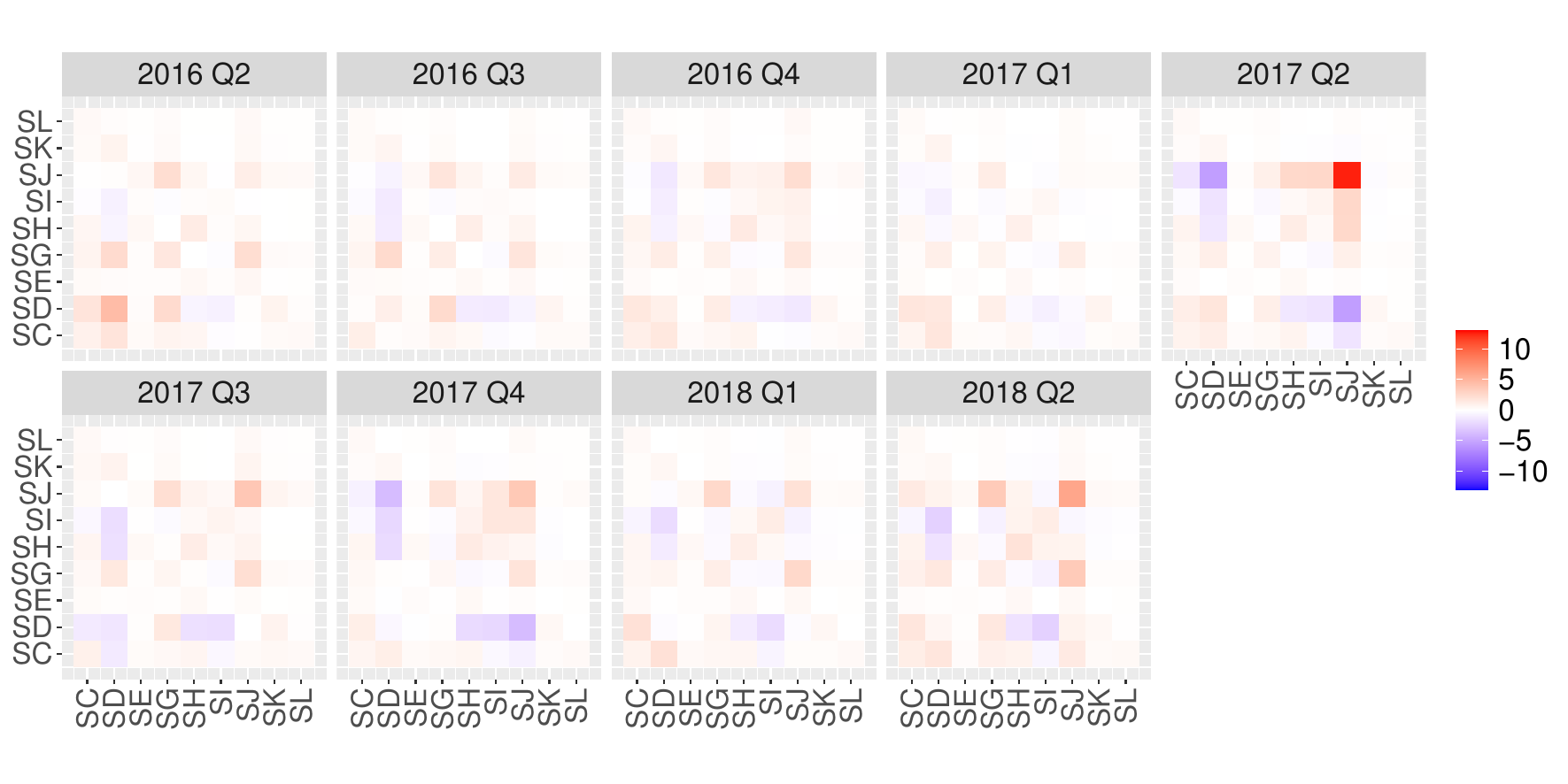}
    \caption{Upper bounds of the $90\%$ prediction bands for the causal effects $Y_{1t}^I - Y_{1t}^N, t= \text{2016 Q2}, \ldots, \text{2018 Q2}$ based on the ridge augmented FSC.}
    \label{fig:service_prediction_upper}
\end{figure}

\section{Concluding Remarks}
\label{sec:conclusion}
We have proposed the functional synthetic control (FSC) method as an extension of the SCM that accommodates metric space–valued outcomes, including functions, distributions, covariance matrices, and compositional data. Unlike existing frameworks, our approach leverages isometric embeddings of metric spaces into Hilbert spaces, thereby providing theoretical guarantees for estimation and enabling the construction of prediction sets for causal effects. We expect that the proposed FSC framework will serve as a useful addition to the toolbox for causal inference with complex outcomes.

There are several directions for future research. \cite{arkhangelsky2021synthetic} proposed the synthetic difference-in-differences (SDID) method, which combines desirable features of the SCM and difference-in-differences. \cite{kurisu2025geodesic} extended this method to outcomes residing in geodesic metric spaces. An important avenue for future work is to extend our FSC framework to SDID and investigate the theoretical properties of such an extension. 
Another promising direction is to extend the proposed framework to the case of a staggered adoption setting \citep{ben2022synthetic}. 

\paragraph{Acknowledgments.}
The authors thank seminar and conference participants for many valuable comments. 
Ryo Okano was partially supported by JSPS Grant-in-Aid for JSPS Fellows (25KJ0140).
Daisuke Kurisu was partially supported by JSPS KAKENHI Grant Number 23K12456.

\clearpage
\bibliographystyle{apalike}
\bibliography{ref}

\clearpage
\appendix

\counterwithin{lem}{section}
\counterwithin{thm}{section}
\counterwithin{prp}{section}
\counterwithin{dfn}{section}
\counterwithin{ass}{section}
\counterwithin{cor}{section}
\counterwithin{figure}{section}

\section{Projection onto Closed Convex Subsets of Hilbert Spaces}
\label{sec:projection}
The following lemma outlines the properties of projections onto closed convex subsets of Hilbert spaces. Statement (a) is a well-known result in functional analysis \citep[e.g.,][]{deutsch2001best}. For the sake of completeness, we include a proof in Section~\ref{sec:proof}.

\begin{lem}\label{lem:proj}
Assume that $C$ is a nonempty closed convex subset of a Hilbert space $\mathcal{H}$. For any $y \in \mathcal{H}$, define the projection of $y$ onto $C$ as $\pi_C(y) \in \mathrm{argmin}_{h \in C} \|y-h\|_\mathcal{H}$. Then we have the following:
\begin{enumerate}
    \item $\pi_C(y)$ exists and is unique.
    \item For any $u,v \in \mathcal{H}$, $\|\pi_C(u) - \pi_C(v)\|_{\mathcal{H}} \leq \|u-v\|_{\mathcal{H}}$.
    In particular, if $v \in \mathcal{H}$, $\|\pi_C(u) - v\|_{\mathcal{H}} \leq \|u-v\|_{\mathcal{H}}$. 
\end{enumerate}
\end{lem}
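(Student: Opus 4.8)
The plan is to follow the standard route for projections onto closed convex subsets of a Hilbert space, relying only on the parallelogram law and the variational characterization of the projection.

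For part (a), I would first establish existence. Fix $y \in \mathcal{H}$ and set $\delta = \inf_{h \in C}\|y - h\|_{\mathcal{H}}$, which is finite since $C \neq \emptyset$. Pick a minimizing sequence $(h_n) \subset C$ with $\|y - h_n\|_{\mathcal{H}} \to \delta$. Applying the parallelogram law to the vectors $h_n - y$ and $h_m - y$, and using that $(h_n + h_m)/2 \in C$ by convexity so that its distance to $y$ is at least $\delta$, I obtain $\|h_n - h_m\|_{\mathcal{H}}^2 \le 2\|h_n - y\|_{\mathcal{H}}^2 + 2\|h_m - y\|_{\mathcal{H}}^2 - 4\delta^2 \to 0$, so $(h_n)$ is Cauchy. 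By completeness it converges to some $h^{\ast}$, which lies in $C$ because $C$ is closed, and continuity of the norm gives $\|y - h^{\ast}\|_{\mathcal{H}} = \delta$. Uniqueness follows from the same identity: if $h_1, h_2$ both attain $\delta$, then $\|h_1 - h_2\|_{\mathcal{H}}^2 \le 2\delta^2 + 2\delta^2 - 4\delta^2 = 0$.

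The key intermediate step for part (b) is the variational characterization: $h^{\ast} = \pi_C(y)$ if and only if $\langle y - h^{\ast}, h - h^{\ast}\rangle_{\mathcal{H}} \le 0$ for every $h \in C$. For the forward direction, given $h \in C$ and $t \in (0,1]$, convexity yields $h^{\ast} + t(h - h^{\ast}) \in C$, so expanding the inequality $\|y - h^{\ast} - t(h - h^{\ast})\|_{\mathcal{H}}^2 \ge \|y - h^{\ast}\|_{\mathcal{H}}^2$, dividing by $t$, and letting $t \downarrow 0$ gives the claim; the converse is an immediate expansion of $\|y - h\|_{\mathcal{H}}^2$. With this in hand, part (b) is short: writing $p = \pi_C(u)$ and $q = \pi_C(v)$, I apply the characterization with $(y,h) = (u,q)$ and with $(y,h) = (v,p)$, add the two resulting inequalities, and rearrange to get $\|p - q\|_{\mathcal{H}}^2 \le \langle u - v,\, p - q\rangle_{\mathcal{H}}$; Cauchy--Schwarz then yields $\|p - q\|_{\mathcal{H}} \le \|u - v\|_{\mathcal{H}}$. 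The special case $v \in C$ follows since then $\pi_C(v) = v$.

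I do not anticipate a genuine obstacle, as every step is classical; the only points requiring mild care are the limit $t \downarrow 0$ in the variational characterization and the sign bookkeeping when combining the two variational inequalities in part (b).
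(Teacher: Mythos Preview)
Your proposal is correct. For part (b) it is essentially identical to the paper's argument: both first establish the variational inequality $\langle y - \pi_C(y),\, h - \pi_C(y)\rangle_{\mathcal H} \le 0$ for all $h\in C$ (the paper isolates this as a separate lemma), then add the two instances for $u$ and $v$ and finish with Cauchy--Schwarz. The only genuine difference is in the existence half of part (a): you use the parallelogram law to show the minimizing sequence is Cauchy, whereas the paper argues that the minimizing sequence is bounded and therefore admits a convergent subsequence. Your route is the standard textbook one and works verbatim in any Hilbert space; the paper's step from boundedness to a norm-convergent subsequence would, in infinite dimensions, need further justification (e.g., via weak compactness plus weak closedness of $C$ and weak lower semicontinuity of the norm). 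For uniqueness the two arguments are equivalent: the paper phrases it as strict convexity of $h\mapsto\|y-h\|_{\mathcal H}^2$, which is exactly the parallelogram identity you invoke.
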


\section{Auxiliary Covariates}
\label{sec:covariates}
In this section, we consider the case with auxiliary covariates.
We first define the FSC and ridge augmented FSC estimators with covariates in Section \ref{sec:ridge_aug_fsc_cov}.
After examining the properties of the ridge augmented FSC with covariates in Section \ref{sec:prop_weights_cov}, we derive finite-sample error bounds under an autoregressive model with covariates in Section \ref{sec:est_error_auto_cov} and under a latent factor model with covariates in Section \ref{sec:est_error_factor_cov}.

\subsection{FSC and Ridge Augmented FSC Methods with Covariates}
\label{sec:ridge_aug_fsc_cov}
Suppose that each unit $i$ has auxiliary covariates
$Z_i = (Z_{i1}, \ldots, Z_{ip})' \in \mathbb{R}^p$.
We assume that the number of the covariates $p$ is small, and that the control units' covariates are centered, i.e.,   $\sum_{i=2}^n Z_i = 0$.
Let $Z_0$ denote the $(N-1) \times p$ matrix of covariates for the control units.
We incorporate the covariates in two ways:
(i) into the balance objective of the FSC, and (ii) into the outcome model $\hat{m}_{it}$, which is used in the construction of the augmented estimator. 

For the former, we extend  the FSC problem \eqref{eq:optim_fscm}  to choose the weights $\hat{\gamma}^{\text{cscm}} \in \Delta^{N-1}$ as the solution to 
\begin{equation}
    \min_{\gamma \in \Delta^{N-1}} \sum_{t=1}^{T_0} \left\| Y_{1t} - \sum_{i=2}^N \gamma_i Y_{it} \right\|_{\mathcal{H}}^2 + w \left\|Z_1 - \sum_{i=2}^N \gamma_i Z_i \right\|_2^2.
\end{equation}
Here, $w\ge 0$ is a user-specific hyperparameter that determines the importance of balancing the auxiliary covariates. 
Based on the weights $\hat{\gamma}^{\text{cscm}}$, for $t=T_0+1, ..., T$, the estimators for the counterfactual outcomes $Y_{1t}^N$ and $\nu_{1t}^N$ are defined as 
\begin{equation}
    \hat{Y}_{1t}^{N, \text{cscm}} = \sum_{i=2}^N  \hat{\gamma}_i^{\text{cscm}} Y_{it}, \quad 
    \hat{\nu}_{1t}^{N, \text{cscm}} = \Psi^{-1}(\hat{Y}_{1t}^{N, \text{cscm}})
\end{equation}
respectively. 

For the latter,  we extend the estimate \eqref{eq:ridge_estimate_approx} to 
\begin{equation}
    \hat{m}_{it}^{\text{cov}}(x)
    =
    \hat{\eta}_0(x) + \sum_{s=1}^{T_0}\hat{\theta}_s(x)'r_{is} + \hat{\delta}(x)'Z_i, \quad x \in \mathcal{X},
\end{equation}
where for each $x \in \mathcal{X}$, 
\begin{align}
    &\left\{\hat{\eta}_0(x), \hat{\theta}_1(x), \ldots, \hat{\theta}_{T_0}(x), \hat{\delta}(x)\right\}  \\
    & =
    \argmin_{\eta_0, \theta_1, \ldots, \theta_{T_0}, \delta} \sum_{i=2}^N\left\{Z_{it}(x) - \left(\eta_0 + \sum_{s=1}^{T_0}\theta_s'r_{is} + \delta'Z_i \right)\right\}^2 +\lambda \sum_{s=1}^{T_0} \|\theta_s\|_2^2. 
    \label{eq:regression_prob_cov}
\end{align}
Here, $\lambda > 0$ is a regularization parameter.
The ridge augmented FSC estimator for $Y_{1t}^N$ with covariates is then
\begin{equation}
    \hat{Y}_{1t}^{N, \text{cov}}
    =\sum_{i=2}^N \hat{\gamma}_i^{\text{scm}} Y_{it}
    +
    \left\{
    \hat{m}_{1t}^{\text{cov}} - \sum_{i=2}^N \hat{\gamma}_i^{\text{scm}} \hat{m}_{it}^{\text{cov}}
    \right\},
    \label{eq:ridge_aug_cov}
\end{equation}
and its modified version $\tilde{Y}_{1t}^{N, \text{cov}}$
is defined analogously to  \eqref{eq:aug_fscm_modif}.

The next lemma shows that the estimator \eqref{eq:ridge_aug_cov} is itself a weighting estimator. This result is analogous to Lemma 4 in \cite{ben2021augmented}. 
For $i=1, \ldots, N$, let $\check{r}_{i\cdot} \in \mathbb{R}^{KT_0}$ be the residual components of the linear regression of $r_{i\cdot}$ on the control auxiliary covariates $Z_i$, and let $\check{r}_{0\cdot} \in \mathbb{R}^{(N-1) \times (KT_0)}$ be the matrix of control units' residual components. In other words, 
\begin{equation}
    \check{r}_{i\cdot} = r_{i\cdot} - r_{0\cdot}'Z_{0}'(Z_{0}'Z_{0})^{-1} Z_i, \quad 
    \check{r}_{0\cdot}
    =\begin{pmatrix}
    \check{r}_{2\cdot}' \\ 
    \vdots \\ 
    \check{r}_{N\cdot}'
\end{pmatrix}.
\end{equation}

\begin{lem}
    For any $t=T_0+1, \ldots,T$, the ridge augmented FSC estimator with covariates \eqref{eq:ridge_aug_cov} is expressed as 
    \begin{equation}
        \hat{Y}_{1t}^{N, \mathrm{cov}}
        =
        \sum_{i=2}^N \hat{\gamma}_i^{\mathrm{cov}} Y_{it},
    \end{equation}
    where the weights $\hat{\gamma}^{\mathrm{cov}} = (\hat{\gamma}^{\mathrm{cov}}_i)_{i=2}^N$ are given by 
    \begin{equation}
        \hat{\gamma}_i^{\mathrm{cov}}
        =
        \hat{\gamma}_i^{\mathrm{scm}}
        +
        (\check{r}_{1\cdot} - \check{r}_{0\cdot}' \hat{\gamma}^{\mathrm{scm}})' 
        (\check{r}_{0\cdot}'\check{r}_{0\cdot} + \lambda I_{KT_0})^{-1} \check{r}_{i}
        +
        (Z_1 - Z_{0}'\hat{\gamma}^{\mathrm{scm}})'(Z_{0}'Z_{0})^{-1}Z_i.
        \label{eq:weights_cov}
    \end{equation}
    \label{lem:weighting_expression_cov}
\end{lem}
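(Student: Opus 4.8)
The plan is to follow the argument underlying Lemma \ref{lem:closed_form_RFASCM}, adding one new ingredient to accommodate the covariate block $Z_i$, which enters the regression \eqref{eq:regression_prob_cov} without a ridge penalty: a Frisch--Waugh--Lovell (partialling-out) step for that block.

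First, I would fix a post-treatment period $t$ and a point $x\in\mathcal X$ and write $\hat m_{1t}^{\mathrm{cov}}(x)-\sum_{i=2}^N\hat\gamma_i^{\mathrm{scm}}\hat m_{it}^{\mathrm{cov}}(x)$ out from the definition of $\hat m_{it}^{\mathrm{cov}}$. Because $\hat\gamma^{\mathrm{scm}}\in\Delta^{N-1}$ obeys $\sum_{i=2}^N\hat\gamma_i^{\mathrm{scm}}=1$, the intercept term drops, leaving
\[
\hat m_{1t}^{\mathrm{cov}}(x)-\sum_{i=2}^N\hat\gamma_i^{\mathrm{scm}}\hat m_{it}^{\mathrm{cov}}(x)
=\bigl(r_{1\cdot}-r_{0\cdot}'\hat\gamma^{\mathrm{scm}}\bigr)'\hat\theta(x)+\bigl(Z_1-Z_0'\hat\gamma^{\mathrm{scm}}\bigr)'\hat\delta(x),
\]
where $\hat\theta(x)=(\hat\theta_1(x)',\ldots,\hat\theta_{T_0}(x)')'$ and $\hat\delta(x)$ are the minimizers in \eqref{eq:regression_prob_cov}.

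Second, I would derive closed forms for $\hat\theta(x)$ and $\hat\delta(x)$. Since the control design is centered---$\sum_{i=2}^N r_{i\cdot}=0$ because $\sum_{i=2}^N X_{is}=0$, and $\sum_{i=2}^N Z_i=0$ by assumption---the all-ones column is orthogonal to the remaining regressors, so the first-order conditions force $\hat\eta_0(x)=\bar Y_t(x)$ and reduce the problem to the intercept-free ridge fit with only the $\theta$-block penalized. Eliminating $\delta$ from the first-order conditions of that reduced problem (i.e., partialling out the unpenalized block) gives
\[
\hat\theta(x)=\bigl(\check r_{0\cdot}'\check r_{0\cdot}+\lambda I_{KT_0}\bigr)^{-1}\check r_{0\cdot}'Y_{\cdot t}(x),
\qquad
\hat\delta(x)=(Z_0'Z_0)^{-1}Z_0'\bigl(Y_{\cdot t}(x)-r_{0\cdot}\hat\theta(x)\bigr),
\]
where $Y_{\cdot t}(x)=(Y_{2t}(x),\ldots,Y_{Nt}(x))'$ and, with $M_{Z_0}=I-Z_0(Z_0'Z_0)^{-1}Z_0'$, one has $\check r_{0\cdot}=M_{Z_0}r_{0\cdot}$, hence $\check r_{0\cdot}'\check r_{0\cdot}=r_{0\cdot}'M_{Z_0}r_{0\cdot}$ and $\check r_{0\cdot}'Y_{\cdot t}(x)=r_{0\cdot}'M_{Z_0}Y_{\cdot t}(x)$; invertibility holds since $\lambda>0$ and $Z_0$ has full column rank $p$.

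Third, I would substitute these forms back, view the outcome as a linear functional of $Y_{\cdot t}(x)$, and read off $\hat\gamma^{\mathrm{cov}}$. The two terms carrying the factor $\check r_{0\cdot}(\check r_{0\cdot}'\check r_{0\cdot}+\lambda I_{KT_0})^{-1}$---the $\hat\theta(x)$-term and the cross term $-\bigl(Z_1-Z_0'\hat\gamma^{\mathrm{scm}}\bigr)'(Z_0'Z_0)^{-1}Z_0'r_{0\cdot}\hat\theta(x)$ coming from $\hat\delta(x)$---combine via the identity
\[
\bigl(r_{1\cdot}-r_{0\cdot}'\hat\gamma^{\mathrm{scm}}\bigr)-r_{0\cdot}'Z_0(Z_0'Z_0)^{-1}\bigl(Z_1-Z_0'\hat\gamma^{\mathrm{scm}}\bigr)=\check r_{1\cdot}-\check r_{0\cdot}'\hat\gamma^{\mathrm{scm}},
\]
which is immediate from $\check r_{i\cdot}=r_{i\cdot}-r_{0\cdot}'Z_0(Z_0'Z_0)^{-1}Z_i$, into $\check r_{0\cdot}(\check r_{0\cdot}'\check r_{0\cdot}+\lambda I_{KT_0})^{-1}(\check r_{1\cdot}-\check r_{0\cdot}'\hat\gamma^{\mathrm{scm}})$. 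Taking the $i$-th coordinate (the $i$-th rows of $\check r_{0\cdot}$ and $Z_0$ being $\check r_{i\cdot}'$ and $Z_i'$) and using symmetry of the inverted matrices reproduces \eqref{eq:weights_cov}; since this holds for every $x$, it holds in $\mathcal H$, yielding $\hat Y_{1t}^{N,\mathrm{cov}}=\sum_{i=2}^N\hat\gamma_i^{\mathrm{cov}}Y_{it}$. I expect the partialling-out step together with this cross-term cancellation to be the main obstacle; the rest mirrors the proof of Lemma \ref{lem:closed_form_RFASCM}.
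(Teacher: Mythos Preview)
Your proposal is correct and follows essentially the same route as the paper's proof: write out the bias-correction term, obtain the closed forms $\hat\eta_0(x)=\bar Y_t(x)$, $\hat\theta(x)=(\check r_{0\cdot}'\check r_{0\cdot}+\lambda I_{KT_0})^{-1}\check r_{0\cdot}'Y_{0t}(x)$, and $\hat\delta(x)$ via the first-order condition for the unpenalized $Z$-block, then combine using the identity $r_{1\cdot}-r_{0\cdot}'\hat\gamma^{\mathrm{scm}}-r_{0\cdot}'Z_0(Z_0'Z_0)^{-1}(Z_1-Z_0'\hat\gamma^{\mathrm{scm}})=\check r_{1\cdot}-\check r_{0\cdot}'\hat\gamma^{\mathrm{scm}}$ to collapse the $\hat\theta$-terms. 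Your explicit framing of the closed form as a Frisch--Waugh--Lovell partialling-out step is a bit more transparent than the paper's presentation, but the underlying algebra is the same.
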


\subsection{Properties of Weights from Ridge Augmented FSC with Covariates}
\label{sec:prop_weights_cov}
We investigate the properties of the weights $\hat{\gamma}^{\text{cov}} = (\hat{\gamma}_i^{\text{cov}})_{i=2}^N$ obtained from the ridge augmented FSC with covariates. In particular, we analyze how the covariate fit $\|Z_1 - \sum_{i=2}^N \hat{\gamma}_i^{\text{cov}}Z_i\|_2$, the pre-treatment fit 
$
\sqrt{\sum_{t=1}^{T_0}\left\|Y_{1t} - \sum_{i=2}^N \hat{\gamma}_i^{\mathrm{cov}}Y_{it}\right\|_{\mathcal{H}}^2},
$
and the norm $\|\hat{\gamma}^{\text{cov}}\|_2$ depend on the regularization parameter $\lambda$. 

To make the dependence on the number of the orthonormal vectors $K$ explicit,  we denote $\check{r}_{0\cdot}$ and $\hat{\gamma}^{\text{cov}}$ by $\check{r}_{0\cdot}^{(K)}$ and $\hat{\gamma}^{\text{cov}(K)}$, respectively.
For any positive integer $K$, let $\check{m}(K)$ be the rank of the matrix $\check{r}_{0\cdot}^{(K)}$, and let $\check{d}_{\text{max}}^{(K)}$, $\check{d}_{\text{min}}^{(K)}$ be the maximum and minimum singular values of $\check{r}_{0\cdot}^{(K)}$, respectively. 
Note that since $\check{r}_{0\cdot}^{(K)}$ is an $(N-1) \times (KT_0)$ matrix, we have $\check{m}(K) \le N-1$ for all $K$. 

\begin{ass}
    There exist constants $C_2 > 0$ and $c_2 > 0$ such that $\check{d}_{\mathrm{max}}^{(K)} \le C_2$ and $\check{d}_{\mathrm{min}}^{(K)} \ge c_2$ hold for any positive integer $K$.
    \label{ass:singular_values_cov}
\end{ass}

Under this assumption, we obtain the following result.

\begin{lem} 
    \begin{enumerate}
        \item For any positive integer $K$, the weights $\hat{\gamma}^{\mathrm{cov}(K)}$ with any regularization parameter $\lambda > 0$  exactly balance the auxiliary covariates: 
        $$
        Z_1 - \sum_{i=2}^N\hat{\gamma}_i^{\mathrm{cov(K)}}Z_i = 0. 
        $$
        \item Suppose  Assumption \ref{ass:singular_values_cov} holds. Then, for any positive integer $K$, the weights $\hat{\gamma}^{\mathrm{cov}(K)}$ with regularization parameter $\lambda > 0$ satisfy 
        \begin{align}
        & \sqrt{\sum_{t=1}^{T_0}\left\|Y_{1t} - \sum_{i=2}^N \hat{\gamma}_i^{\mathrm{cov}(K)}Y_{it}\right\|_{\mathcal{H}}^2 }
        \le F_1(\lambda)
        +
        R_5^{(K)}.
        \label{eq:prefit_bound_cov}
    \end{align}
    Here,
    \begin{align}
        F_1(\lambda) &= \frac{\sqrt{\check{m}(K)}\lambda}{(\check{d}_{\mathrm{min}}^{(K)})^2+\lambda}\sqrt{\sum_{t=1}^{T_0}\left\|Y_{1t} - \sum_{i=2}^N \hat{\gamma}_i^{\mathrm{scm}} Y_{it}\right\|_{\mathcal{H}}^2} \\
        & \phantom{\le} +
        \frac{\sqrt{\check{m}(K)}\lambda}{(\check{d}_{\mathrm{min}}^{(K)})^2+\lambda}\sqrt{\sum_{t=1}^{T_0}\sum_{i=2}^N\|Y_{it}\|_{\mathcal{H}}^2}
        \|Z_0(Z_0'Z_0)^{-1}(Z_1 - Z_0'\hat{\gamma}^{\mathrm{scm}})\|_2,
        \label{eq:func_def_1}
    \end{align}
    and $R_5^{(K)} \to 0$ as $K \to \infty$.
    \item Suppose  Assumption \ref{ass:singular_values_cov} holds. Then, for any positive integer $K$, the weights $\hat{\gamma}^{\mathrm{cov}(K)}$ with regularization parameter $\lambda > 0$ satisfy 
    \begin{align}
       \|\hat{\gamma}^{\mathrm{cov}(K)}\|_2 
        \le 
        F_2(\lambda)+R_6^{(K)}.
        \label{eq:norm_bound_cov}
    \end{align}
    Here,
    \begin{align}
        F_2(\lambda)& = \|\hat{\gamma}^{\mathrm{scm}}\|_2 
        +
        \frac{\sqrt{\check{m}(K)}\check{d}_{\mathrm{max}}^{(K)}}{(\check{d}_{\mathrm{min}}^{(K)})^2 + \lambda} \sqrt{\sum_{t=1}^{T_0}\left\|Y_{1t} - \sum_{i=2}^N \hat{\gamma}_i^{\mathrm{scm}} Y_{it}\right\|_{\mathcal{H}}^2} \\
        & \phantom{\le}+ 
        \frac{\sqrt{\check{m}(K)}\check{d}_{\mathrm{max}}^{(K)}}{(\check{d}_{\mathrm{min}}^{(K)})^2 + \lambda} 
\sqrt{\sum_{t=1}^{T_0}\sum_{i=2}^N\|Y_{it}\|_{\mathcal{H}}^2}
        \|Z_0(Z_0'Z_0)^{-1}(Z_1 - Z_0'\hat{\gamma}^{\mathrm{scm}})\|_2 \\ 
        & \phantom{\le} + 
        \|Z_0(Z_0'Z_0)^{-1}(Z_1 - Z_0'\hat{\gamma}^{\mathrm{scm}})\|_2,
        \label{eq:func_def_2}
    \end{align}
    and $R_6^{(K)} \to 0$ as $K \to \infty$.
    \end{enumerate}
    \label{lem:cov_weight_prop}
\end{lem}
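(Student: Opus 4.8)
The plan is to prove the three parts by feeding the closed form of $\hat{\gamma}^{\mathrm{cov}(K)}$ from Lemma~\ref{lem:weighting_expression_cov} into direct computations and then re-running, on the $Z$-residualized design matrix $\check{r}_{0\cdot}^{(K)}$, the singular-value and truncation arguments already used for Lemma~\ref{lem:aug_weight_prop}. Throughout write $P_Z = Z_0(Z_0'Z_0)^{-1}Z_0'$, $\check{A} = \check{r}_{0\cdot}^{(K)'}\check{r}_{0\cdot}^{(K)}$, $\check{b} = \check{r}_{1\cdot}^{(K)} - \check{r}_{0\cdot}^{(K)'}\hat{\gamma}^{\mathrm{scm}}$, and split $\hat{\gamma}^{\mathrm{cov}(K)} - \hat{\gamma}^{\mathrm{scm}} = \delta_r + \delta_Z$ with $\delta_r = \check{r}_{0\cdot}^{(K)}(\check{A}+\lambda I)^{-1}\check{b}$ and $\delta_Z = Z_0(Z_0'Z_0)^{-1}(Z_1 - Z_0'\hat{\gamma}^{\mathrm{scm}})$, as read off from \eqref{eq:weights_cov}. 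The recurring algebraic fact is that the residualized regressors are orthogonal to the covariates, $Z_0'\check{r}_{0\cdot}^{(K)} = Z_0'(I - P_Z)r_{0\cdot}^{(K)} = 0$, which also gives $\check{r}_{0\cdot}^{(K)'}\delta_Z = 0$.

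For Part~1 I substitute \eqref{eq:weights_cov} into $\sum_{i=2}^N \hat{\gamma}_i^{\mathrm{cov}(K)}Z_i$ and evaluate the three resulting sums: the $\hat{\gamma}^{\mathrm{scm}}$-term gives $Z_0'\hat{\gamma}^{\mathrm{scm}}$; the $\delta_Z$-term gives $(\sum_{i\ge 2}Z_iZ_i')(Z_0'Z_0)^{-1}(Z_1 - Z_0'\hat{\gamma}^{\mathrm{scm}}) = Z_1 - Z_0'\hat{\gamma}^{\mathrm{scm}}$ because $\sum_{i\ge 2}Z_iZ_i' = Z_0'Z_0$; and the $\delta_r$-term gives $(Z_0'\check{r}_{0\cdot}^{(K)})(\check{A}+\lambda I)^{-1}\check{b} = 0$ by the orthogonality above. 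Adding the three yields $Z_1$, which is the claim.

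For Parts~2 and 3, Part~1 is used to replace the $r$'s by residualized $\check{r}$'s in the pre-treatment fit. From the identity
\[
r_{1\cdot}^{(K)} - r_{0\cdot}^{(K)'}\gamma = \bigl(\check{r}_{1\cdot}^{(K)} - \check{r}_{0\cdot}^{(K)'}\gamma\bigr) + r_{0\cdot}^{(K)'}Z_0(Z_0'Z_0)^{-1}(Z_1 - Z_0'\gamma),
\]
evaluated at $\gamma = \hat{\gamma}^{\mathrm{cov}(K)}$ together with $Z_1 - Z_0'\hat{\gamma}^{\mathrm{cov}(K)} = 0$, I get $r_{1\cdot}^{(K)} - r_{0\cdot}^{(K)'}\hat{\gamma}^{\mathrm{cov}(K)} = \check{r}_{1\cdot}^{(K)} - \check{r}_{0\cdot}^{(K)'}\hat{\gamma}^{\mathrm{cov}(K)}$; inserting $\hat{\gamma}^{\mathrm{cov}(K)} = \hat{\gamma}^{\mathrm{scm}} + \delta_r + \delta_Z$ and using $\check{r}_{0\cdot}^{(K)'}\delta_Z = 0$, this collapses to $\check{b} - \check{A}(\check{A}+\lambda I)^{-1}\check{b} = \lambda(\check{A}+\lambda I)^{-1}\check{b}$. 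Bounding $\|\lambda(\check{A}+\lambda I)^{-1}\check{b}\|_2$ through the SVD of $\check{r}_{0\cdot}^{(K)}$ under Assumption~\ref{ass:singular_values_cov} — on the $\check{m}(K)$-dimensional row space the operator has eigenvalues at most $\lambda/((\check{d}_{\mathrm{min}}^{(K)})^2+\lambda)$, and the kernel component of $\check{b}$ goes into the remainder, exactly as in Lemma~\ref{lem:aug_weight_prop} — produces a factor $\sqrt{\check{m}(K)}\lambda/((\check{d}_{\mathrm{min}}^{(K)})^2+\lambda)$ on $\|\check{b}\|_2$. Finally $\|\check{b}\|_2 = \|b - r_{0\cdot}^{(K)'}\delta_Z\|_2 \le \|b\|_2 + \|r_{0\cdot}^{(K)}\|_F\|\delta_Z\|_2$ with $b = r_{1\cdot}^{(K)} - r_{0\cdot}^{(K)'}\hat{\gamma}^{\mathrm{scm}}$, where Bessel's inequality and $\sum_i\hat{\gamma}_i^{\mathrm{scm}}=1$ give $\|b\|_2 \le \sqrt{\sum_t\|Y_{1t} - \sum_i\hat{\gamma}_i^{\mathrm{scm}}Y_{it}\|_{\mathcal H}^2}$, and $\sum_{i\ge2}\|Y_{it}-\bar{Y}_t\|^2 \le \sum_{i\ge2}\|Y_{it}\|^2$ gives $\|r_{0\cdot}^{(K)}\|_F \le \sqrt{\sum_t\sum_{i\ge2}\|Y_{it}\|_{\mathcal H}^2}$; these two pieces are exactly the two summands of $F_1(\lambda)$ in \eqref{eq:func_def_1}. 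The actual pre-treatment fit equals $\sqrt{\|r_{1\cdot}^{(K)} - r_{0\cdot}^{(K)'}\hat{\gamma}^{\mathrm{cov}(K)}\|_2^2 + (\text{tail beyond }K)} \le \|r_{1\cdot}^{(K)} - r_{0\cdot}^{(K)'}\hat{\gamma}^{\mathrm{cov}(K)}\|_2 + \sqrt{(\text{tail})}$, and the square-rooted tail, together with the kernel piece noted above, is collected into $R_5^{(K)}$. Part~3 is the same computation with $\|\hat{\gamma}^{\mathrm{cov}(K)}\|_2 \le \|\hat{\gamma}^{\mathrm{scm}}\|_2 + \|\delta_r\|_2 + \|\delta_Z\|_2$, where the SVD gives $\|\delta_r\|_2 = \|\check{r}_{0\cdot}^{(K)}(\check{A}+\lambda I)^{-1}\check{b}\|_2 \le \sqrt{\check{m}(K)}\check{d}_{\mathrm{max}}^{(K)}/((\check{d}_{\mathrm{min}}^{(K)})^2+\lambda)\cdot\|\check{b}\|_2$ and then the bound on $\|\check{b}\|_2$ is reused; the $\|\delta_Z\|_2 = \|Z_0(Z_0'Z_0)^{-1}(Z_1 - Z_0'\hat{\gamma}^{\mathrm{scm}})\|_2$ term is the remaining summand of $F_2(\lambda)$ in \eqref{eq:func_def_2}, and the rest is $R_6^{(K)}$.

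The main obstacle is bookkeeping rather than a single deep estimate: (i) the closed form \eqref{eq:weights_cov} itself must be justified by a Frisch–Waugh–Lovell reduction of the joint ridge problem \eqref{eq:regression_prob_cov} to ridge on the $Z$-residualized regressors $\check{r}$, which is what makes every step above go through; and (ii) controlling the truncation remainders $R_5^{(K)}, R_6^{(K)} \to 0$ requires a bound on $\|\hat{\gamma}^{\mathrm{cov}(K)}\|_2$ that is uniform in $K$ for large $K$, so Parts~2 and 3 have to be derived jointly, bootstrapping from the $\lambda$-independent part of the bound — precisely as in the proof of Lemma~\ref{lem:aug_weight_prop}.
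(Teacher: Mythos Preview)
Your proposal follows essentially the same route as the paper's proof: Part~(a) via the orthogonality $Z_0'\check{r}_{0\cdot}^{(K)} = 0$, then Part~(c) before Part~(b) via the SVD of $\check{r}_{0\cdot}^{(K)}$ combined with the triangle-inequality split $\|\check{b}\|_2 \le \|b\|_2 + \|r_{0\cdot}^{(K)}\|_F\|\delta_Z\|_2$, so that the uniform-in-$K$ bound on $\|\hat{\gamma}^{\mathrm{cov}(K)}\|_2$ from (c) controls the truncation tail in (b). The only cosmetic difference is that you bound $\|b\|_2$ and $\|r_{0\cdot}^{(K)}\|_F$ directly via Bessel's inequality, whereas the paper writes these as Parseval limits and absorbs the slack into the remainders $R_5^{(K)}, R_6^{(K)}$.
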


The main term $F_1(\lambda)$ of the bound in \eqref{eq:prefit_bound_cov} decreases as $\lambda$ decreases, and in particular, it vanishes as $\lambda \to 0$. This implies that the ridge augmented FSC method with covariates can achieve an almost perfect fit when $\lambda$ is close to zeto (assuming $K$ is large). In contrast, the main term $F_2(\lambda)$ of the bound in \eqref{eq:norm_bound_cov} increases as $\lambda$ decreases, indicating that the norm of the weights can become large when $\lambda$ is small.

\subsection{Estimation Error under Autoregressive Model with Covariates}
\label{sec:est_error_auto_cov}
In the rest of this section, we assume that $T = T_0+1$.
As in Section \ref{sec:est_error},  to clarify our discussion, we define a general weighting estimator $\hat{Y}_{1T}^N = \sum_{i=2}^N \hat{\gamma}_i Y_{iT}$, where the weights $(\hat{\gamma}_i)_{i=2}^N$ are not dependent on the post-treatment outcomes $Y_{1T}, \ldots, Y_{NT}$, and satisfy $\sum_{i=2}^N \hat{\gamma}_i=1$. The ridge augmented FSC estimator with covariates, $\hat{Y}_{1T}^{N, \text{cov}}$, takes this form, as shown in Lemma \ref{lem:weighting_expression_cov}. 
We also define its modification,  
$
\tilde{Y}_{1T}^{N}
    =
    \argmin_{y \in \mathcal{Y}} \| y - \hat{Y}_{1T}^{N}\|_{\mathcal{H}},
$
and construct an estimator for $\nu_{1T}^N$ as 
$\hat{\nu}_{1T}^N = \Psi^{-1}(\tilde{Y}_{1T}^{N})$. In what follows, we focus on bounding the estimation error $d(\nu_{1T}^N, \hat{\nu}_{1T}^N)$.

We begin by assuming the following data-generating process that includes covariates.
\begin{ass}[Autoregressive model with covariates]
\label{ass_auto_cov}
    For each unit $i=1, \ldots, N$,
    we assume that the post-treatment control potential outcome $Y_{1T}^N \in \mathcal{Y}$ is generated as
    \begin{equation}
        Y_{iT}^{N}(x)
        =
        \sum_{t=1}^{T_0} \langle\beta_t(x, \cdot), Y_{it}^{N}  \rangle 
        +
        \sum_{\ell=1}^p\eta_{\ell}(x)Z_{i\ell}
         + \varepsilon_{iT}(x), \quad x \in \mathcal{X}.
         \label{eq:autregressive_cov}
    \end{equation}
    Here, $\beta_t(\cdot, \cdot): \mathcal{X}^2 \to \mathbb{R}, t=1, \ldots, T_0$ and $\eta_\ell: \mathcal{X} \to \mathbb{R}, \ell=1, \ldots, p$ are coefficient functions, and $\varepsilon_{iT} \in \mathcal{H}, i=1, \ldots, N$ are independent mean-zero errors. Furthermore, we assume that there exits a constant $\sigma > 0$ such that  $\|\varepsilon_{iT}\|_{\mathcal{H}} \le \sigma$ holds almost surely for all $i=1, \ldots, N$.
\end{ass}

Under this assumption, we derive the following finite-sample error bound. 
\begin{thm}
    Suppose Assumption \ref{ass_auto_cov} holds. Then, for any $\delta > 0$, the generic estimator $\hat{\nu}_{1T}^N$ satisfies 
    \begin{align} 
        d(\nu_{1T}^N, \hat{\nu}_{1T}^N)
        &\le \sqrt{\sum_{t=1}^{T_0} \|\beta_t\|_{\mathcal{H} \times \mathcal{H}}^2} \sqrt{\sum_{t=1}^{T_0} \left\|Y_{1t} - \sum_{i=2}^N \hat{\gamma}_iY_{it}\right\|_{\mathcal{H}}^2} 
        +
        \sqrt{\sum_{\ell=1}^p\|\eta_\ell\|_{\mathcal{H}}^2}\left\|Z_1 - \sum_{i=2}^N \hat{\gamma}_i Z_i\right\|_2 \\ 
        & \phantom{\le} +
        \delta \sigma (1+\|\hat{\gamma}\|_2)
        \label{eq:error_bound_auto_cov}
    \end{align}
    with probability at least $1 - 2e^{-\delta^2/2}$.
    \label{thm:est_error_auto_cov}
\end{thm}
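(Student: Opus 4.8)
The plan is to mirror the proof of Theorem~\ref{thm:est_error_auto}, adding the covariate-balancing term, so I will only sketch the new pieces. First I would pass to the latent Hilbert space: since $\Psi$ is an isometry, $Y_{1T}^N = \Psi(\nu_{1T}^N) \in \mathcal{Y}$, and $\tilde Y_{1T}^N$ is the projection of $\hat Y_{1T}^N$ onto the nonempty closed convex set $\mathcal{Y}$, Lemma~\ref{lem:proj}(b) gives
\[
d(\nu_{1T}^N, \hat\nu_{1T}^N) = \|Y_{1T}^N - \tilde Y_{1T}^N\|_{\mathcal{H}} \le \|Y_{1T}^N - \hat Y_{1T}^N\|_{\mathcal{H}},
\]
so it suffices to bound $\|Y_{1T}^N - \hat Y_{1T}^N\|_{\mathcal{H}}$ for the generic weighting estimator $\hat Y_{1T}^N = \sum_{i=2}^N \hat\gamma_i Y_{iT}$.

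Next I would substitute Assumption~\ref{ass_auto_cov}. Using $Y_{iT} = Y_{iT}^N$ for the controls and $Y_{it} = Y_{it}^N$ for all $i$ when $t \le T_0$, linearity of the model yields, pointwise in $x$,
\begin{align*}
\hat Y_{1T}^N(x) - Y_{1T}^N(x)
&= -\sum_{t=1}^{T_0} \Big\langle \beta_t(x,\cdot),\, Y_{1t} - \sum_{i=2}^N \hat\gamma_i Y_{it}\Big\rangle_{\mathcal{H}}
  - \sum_{\ell=1}^p \eta_\ell(x)\Big(Z_{1\ell} - \sum_{i=2}^N \hat\gamma_i Z_{i\ell}\Big) \\
&\quad + \sum_{i=2}^N \hat\gamma_i \varepsilon_{iT}(x) - \varepsilon_{1T}(x).
\end{align*}
Taking $\|\cdot\|_{\mathcal{H}}$ and the triangle inequality produces a pre-treatment-fit term, a covariate-fit term, and a noise term. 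The first two are deterministic and handled by Cauchy--Schwarz: for the $\beta$-term, the pointwise bound $\big(\sum_t\langle\beta_t(x,\cdot),u_t\rangle_{\mathcal{H}}\big)^2 \le \big(\sum_t\|\beta_t(x,\cdot)\|_{\mathcal{H}}^2\big)\big(\sum_t\|u_t\|_{\mathcal{H}}^2\big)$ with $u_t = Y_{1t} - \sum_{i\ge2}\hat\gamma_i Y_{it}$, integrated against $\mu$, turns $\int_{\mathcal{X}}\|\beta_t(x,\cdot)\|_{\mathcal{H}}^2\,d\mu(x)$ into $\|\beta_t\|_{\mathcal{H}\times\mathcal{H}}^2$ and gives $\sqrt{\sum_t\|\beta_t\|_{\mathcal{H}\times\mathcal{H}}^2}\,\sqrt{\sum_t\|u_t\|_{\mathcal{H}}^2}$; for the $\eta$-term, $\big\|\sum_\ell a_\ell\eta_\ell\big\|_{\mathcal{H}} \le \sum_\ell|a_\ell|\,\|\eta_\ell\|_{\mathcal{H}} \le \|a\|_2\sqrt{\sum_\ell\|\eta_\ell\|_{\mathcal{H}}^2}$ with $a = Z_1 - \sum_{i\ge2}\hat\gamma_i Z_i$. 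These reproduce the first two terms of the asserted bound.

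It remains to control the noise term. Because the weights $(\hat\gamma_i)_{i=2}^N$ depend only on the pre-treatment data, they are independent of the post-treatment errors $\varepsilon_{1T},\ldots,\varepsilon_{NT}$; conditioning on the weights, I would view $-\varepsilon_{1T}$ and $\hat\gamma_i\varepsilon_{iT}$ ($i=2,\ldots,N$) as independent mean-zero $\mathcal{H}$-valued vectors with norms bounded a.s.\ by $\sigma$ and $|\hat\gamma_i|\sigma$. A Hoeffding-type concentration inequality for Hilbert-space-valued sums (a Pinelis-type inequality) then yields $\big\|\sum_{i=2}^N\hat\gamma_i\varepsilon_{iT} - \varepsilon_{1T}\big\|_{\mathcal{H}} \le \delta\sigma\sqrt{1+\|\hat\gamma\|_2^2} \le \delta\sigma(1+\|\hat\gamma\|_2)$ with probability at least $1 - 2e^{-\delta^2/2}$, using $\sqrt{1+x^2}\le 1+x$ for $x\ge 0$. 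Adding the three bounds and invoking the first-step reduction gives the claim. The main obstacle is this last step: one must apply the correct infinite-dimensional concentration bound with exactly the stated tail constant and justify applying it conditionally on the weights; the remainder is routine and parallels Theorem~\ref{thm:est_error_auto}.
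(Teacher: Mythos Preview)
Your proposal is correct and follows essentially the same route as the paper: reduce to $\|Y_{1T}^N-\hat Y_{1T}^N\|_{\mathcal H}$ via Lemma~\ref{lem:proj}, decompose under Assumption~\ref{ass_auto_cov} into the $\beta$-term, the $\eta$-term, and the noise term, bound the first two by Cauchy--Schwarz, and handle the noise via Pinelis's Hilbert-space concentration inequality exactly as in Theorem~\ref{thm:est_error_auto}. The only cosmetic difference is that the paper bounds the covariate term by pointwise Cauchy--Schwarz followed by integration, whereas you use the triangle inequality in $\mathcal H$ followed by Cauchy--Schwarz in $\mathbb R^p$; both yield the identical bound $\sqrt{\sum_\ell\|\eta_\ell\|_{\mathcal H}^2}\,\|Z_1-\sum_{i\ge 2}\hat\gamma_i Z_i\|_2$.
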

From \eqref{eq:error_bound_auto_cov},
we observe that the estimation error of the generic estimator is governed by the pre-treatment fit $\sqrt{\sum_{t=1}^{T_0} \left\|Y_{1t} - \sum_{i=2}^N \hat{\gamma}_i Y_{it}\right\|_{\mathcal{H}}^2}$, the covariate fit $\|Z_1-\sum_{i=2}^N \hat{\gamma}_i Z_i\|_2$ and the norm of the weights $\|\hat{\gamma}\|_2$.
In contrast to the bound in Theorem \ref{thm:est_error_auto}, the bound in Theorem \ref{thm:est_error_auto_cov} additionally involves the covariate fit. 
This suggests that achieving accurate estimation under Assumption \ref{ass_auto_cov} requires balancing both the pre-treatment fit and the covariate fit.

Combining this result with Lemmas \ref{lem:aug_weight_prop} and \ref{lem:cov_weight_prop}, we obtain the following bounds for the ridge augmented FSC estimator without covariates, $\hat{\nu}_{1T}^{N, \text{aug}}$, and with covariates, $\hat{\nu}_{1T}^{N, \text{cov}}$.

\begin{cor}
\label{cor:est_error_auto_cov}
    \begin{enumerate}
        \item Suppose Assumptions \ref{ass:singular_values} and \ref{ass_auto_cov} hold. Then, for any positive integer $K$ and $\delta > 0$, the estimator $\hat{\nu}_{1T}^{N, \mathrm{aug}}$ with regularization parameter $\lambda > 0$ satisfies 
    \begin{align}
         d(\nu_{1T}^N, \hat{\nu}_{1T}^{N, \mathrm{aug}(K)})
        &\le  
         \sqrt{\sum_{t=1}^{T_0} \|\beta_t\|_{\mathcal{H}\times \mathcal{H}}^2} 
        F_3(\lambda)  
       +\sqrt{\sum_{\ell=1}^p\|\eta_\ell\|_{\mathcal{H}}^2}
        \left\|Z_1 - \sum_{i=2}^N \hat{\gamma}_i^{\mathrm{aug}(K)} Z_i\right\|_2 \\
        &\phantom{\le}
        +
        \delta \sigma( 1+ F_4(\lambda)) + R_7^{(K)}
        \label{eq:error_bound_ridge_auto_withoutcov}
    \end{align}
    with probability at least $1- 2e^{-\delta^2/2}$. Here,  
    \begin{equation}
        F_3(\lambda)
        =
        \frac{\sqrt{m(K)}\lambda}{(d_{\mathrm{min}}^{(K)})^2+\lambda}
        \sqrt{\sum_{t=1}^{T_0} \left\|Y_{1t} - \sum_{i=2}^N \hat{\gamma}_i^{\mathrm{scm}}Y_{it}\right\|_{\mathcal{H}}^2}
        \label{eq:func_def_3}
    \end{equation}
    and 
    \begin{equation}
        F_4(\lambda)
        =
    \|\hat{\gamma}^{\mathrm{scm}}\|_2 +
        \frac{\sqrt{m(K)}d_{\mathrm{max}}^{(K)}}{(d_{\mathrm{min}}^{(K)})^2 + \lambda}\sqrt{\sum_{t=1}^{T_0} \left\|Y_{1t} - \sum_{i=2}^N \hat{\gamma}_i^{\mathrm{scm}}Y_{it}\right\|_{\mathcal{H}}^2},
        \label{eq:func_def_4}
    \end{equation}
    and $R_7^{(K)} \to 0$ as $K \to 0$.
    \item Suppose Assumptions \ref{ass:singular_values_cov} and \ref{ass_auto_cov} hold. Then, for any positive integer $K$ and $\delta > 0$, the estimator $\hat{\nu}_{1T}^{N, \mathrm{cov}}$ with regularization parameter $\lambda > 0$ satisfies  
    \begin{align}
         d(\nu_{1T}^N, \hat{\nu}_{1T}^{N, \mathrm{cov}(K)})
        \le  \sqrt{\sum_{t=1}^{T_0} \|\beta_t\|_{\mathcal{H} \times \mathcal{H}}^2} F_1(\lambda)  + \delta \sigma(1+F_2(\lambda)) + R_8^{(K)}
        \label{eq:error_bound_ridge_auto_cov}
    \end{align}
    with probability at least $1- 2e^{-\delta^2/2}$. Here, $F_1(\lambda)$ and $F_2(\lambda)$ are defined in \eqref{eq:func_def_1} and \eqref{eq:func_def_2}, respectively, and $R_8^{(K)} \to 0$ as $K \to \infty$.
    \end{enumerate}
\end{cor}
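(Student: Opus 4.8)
The plan is to derive both bounds by specializing Theorem~\ref{thm:est_error_auto_cov} to the two weighting schemes and then substituting the deterministic weight bounds established in Lemmas~\ref{lem:aug_weight_prop} and \ref{lem:cov_weight_prop}. The first step is to confirm that $\hat\gamma^{\mathrm{aug}(K)}$ and $\hat\gamma^{\mathrm{cov}(K)}$ both belong to the class of generic weighting estimators to which Theorem~\ref{thm:est_error_auto_cov} applies: their entries sum to one, and they depend on the data only through pre-treatment quantities, hence not on $Y_{1T},\dots,Y_{NT}$. For $\hat\gamma^{\mathrm{aug}(K)}$ this is the content of Lemma~\ref{lem:closed_form_RFASCM} (the constraint in \eqref{eq:const_optim_prob} together with the closed form \eqref{eq:weight_aug}); for $\hat\gamma^{\mathrm{cov}(K)}$ it follows from Lemma~\ref{lem:weighting_expression_cov}, the centering $\sum_{i=2}^N Z_i = 0$, and the closed form \eqref{eq:weights_cov}.

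For Part~(a), I would apply Theorem~\ref{thm:est_error_auto_cov} with $\hat\gamma=\hat\gamma^{\mathrm{aug}(K)}$ (under Assumption~\ref{ass:singular_values}), obtaining, on an event of probability at least $1-2e^{-\delta^2/2}$,
\begin{align*}
d(\nu_{1T}^N,\hat\nu_{1T}^{N,\mathrm{aug}(K)})
&\le \sqrt{\textstyle\sum_{t=1}^{T_0}\|\beta_t\|_{\mathcal H\times\mathcal H}^2}\,\sqrt{\textstyle\sum_{t=1}^{T_0}\bigl\|Y_{1t}-\textstyle\sum_{i=2}^N\hat\gamma_i^{\mathrm{aug}(K)}Y_{it}\bigr\|_{\mathcal H}^2}\\
&\quad+\sqrt{\textstyle\sum_{\ell=1}^p\|\eta_\ell\|_{\mathcal H}^2}\,\bigl\|Z_1-\textstyle\sum_{i=2}^N\hat\gamma_i^{\mathrm{aug}(K)}Z_i\bigr\|_2+\delta\sigma\bigl(1+\|\hat\gamma^{\mathrm{aug}(K)}\|_2\bigr).
\end{align*}
I would then replace the pre-treatment-fit factor by $F_3(\lambda)+R_1^{(K)}$ using Lemma~\ref{lem:aug_weight_prop}(1) and the norm by $F_4(\lambda)+R_2^{(K)}$ using Lemma~\ref{lem:aug_weight_prop}(2), leaving the covariate-imbalance term untouched (these weights are not constructed to balance $Z$). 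Distributing the products and collecting $\sqrt{\textstyle\sum_t\|\beta_t\|_{\mathcal H\times\mathcal H}^2}\,R_1^{(K)}+\delta\sigma R_2^{(K)}$ --- a fixed finite constant times a vanishing sequence --- into $R_7^{(K)}\to0$ yields \eqref{eq:error_bound_ridge_auto_withoutcov}.

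Part~(b) follows the same template with $\hat\gamma=\hat\gamma^{\mathrm{cov}(K)}$ (under Assumption~\ref{ass:singular_values_cov}), invoking Theorem~\ref{thm:est_error_auto_cov} and then Lemma~\ref{lem:cov_weight_prop}. The key simplification is Lemma~\ref{lem:cov_weight_prop}(1): since the covariate-balanced weights satisfy $Z_1-\sum_{i=2}^N\hat\gamma_i^{\mathrm{cov}(K)}Z_i=0$, the covariate-fit term drops out entirely. Bounding the pre-treatment fit by $F_1(\lambda)+R_5^{(K)}$ via Lemma~\ref{lem:cov_weight_prop}(2) and the norm by $F_2(\lambda)+R_6^{(K)}$ via Lemma~\ref{lem:cov_weight_prop}(3), and absorbing $\sqrt{\textstyle\sum_t\|\beta_t\|_{\mathcal H\times\mathcal H}^2}\,R_5^{(K)}+\delta\sigma R_6^{(K)}$ into $R_8^{(K)}\to0$, gives \eqref{eq:error_bound_ridge_auto_cov}.

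Since the argument is essentially an assembly of already-established pieces, I do not expect a genuine obstacle. The points requiring the most care are: (i) the measurability/independence bookkeeping needed so that the high-probability event of Theorem~\ref{thm:est_error_auto_cov} is unaffected when the generic weights are replaced by the data-dependent augmented weights --- this is clean because those weights are functions of the pre-treatment data alone, on which one may condition before invoking the concentration step; and (ii) verifying that each product of a fixed finite factor ($\sqrt{\sum_t\|\beta_t\|_{\mathcal H\times\mathcal H}^2}$ or $\delta\sigma$) with a remainder sequence $R_j^{(K)}$ from the weight lemmas still tends to zero, so that the consolidated remainders $R_7^{(K)}$ and $R_8^{(K)}$ are genuinely $o(1)$ as $K\to\infty$.
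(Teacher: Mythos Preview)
Your proposal is correct and matches the paper's approach exactly: the paper simply states that the proof ``follows along similar lines as that of Corollary~\ref{cor:error_auto}, and is therefore omitted,'' i.e., apply Theorem~\ref{thm:est_error_auto_cov} to the relevant weights and substitute the bounds from Lemma~\ref{lem:aug_weight_prop} (part~(a)) or Lemma~\ref{lem:cov_weight_prop} (part~(b)), absorbing the products with $R_j^{(K)}$ into the new remainder terms. Your write-up is in fact more detailed than the paper's, including the explicit check that both weight vectors sum to one and depend only on pre-treatment data.
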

We observe that the regularization parameter $\lambda$ in the estimators $\hat{\nu}_{1T}^{N, \text{aug}}$ and $\hat{\nu}_{1T}^{N, \text{cov}}$ control the trade-off between the components of the bounds. Specifically, the terms $\sqrt{\sum_{t=1}^{T_0} \|\beta_t\|_{\mathcal{H}\times \mathcal{H}}^2} F_3(\lambda)$ in \eqref{eq:error_bound_ridge_auto_withoutcov} and $\sqrt{\sum_{t=1}^{T_0} \|\beta_t\|_{\mathcal{H} \times \mathcal{H}}^2} F_1(\lambda) $ in \eqref{eq:error_bound_ridge_auto_cov} are increasing in $\lambda$, whereas the terms $\delta \sigma( 1+ F_4(\lambda))$ in \eqref{eq:error_bound_ridge_auto_withoutcov} and $\delta \sigma(1+F_2(\lambda))$ in \eqref{eq:error_bound_ridge_auto_cov} are decreasing in $\lambda$. Comparing the bounds in \eqref{eq:error_bound_ridge_auto_withoutcov} and \eqref{eq:error_bound_ridge_auto_cov}, we see that none of the terms in \eqref{eq:error_bound_ridge_auto_cov} corresponds to the covariate fit, while the term $\sqrt{\sum_{\ell=1}^p\|\eta_\ell\|_{\mathcal{H}}^2}
        \left\|Z_1 - \sum_{i=2}^N \hat{\gamma}_i^{\mathrm{aug}(K)} Z_i\right\|_2$ in \eqref{eq:error_bound_ridge_auto_withoutcov}
corresponds to the covariate fit. 
This implies that when the covariates are predictive of the potential outcomes, i.e., when $\sqrt{\sum_{\ell=1}^p\|\eta_\ell\|_{\mathcal{H}}^2}$ is large, incorporating the covariates can improve the estimation accuracy.

\subsection{Estimation Error under Latent Factor Model with Covariates}
\label{sec:est_error_factor_cov}
We next assume the following data-generating process. 
\begin{ass}[Latent factor model with covariates]
\label{ass_latent_cov}
Suppose that there are $J$ unknown, latent  factors at each time $t=1, \ldots, T$, denoted by $\mu_t = (\mu_{jt})_{j=1}^J$, where each $\mu_{jt}$ is in $\mathcal{H}$.
    For each $x \in \mathcal{X}$,
    we define the vectors of pre-treatment factors, $\mu_t(x) \in \mathbb{R}^J, t=1, \ldots, T_0$, and the matrix $\mu(x) \in \mathbb{R}^{T_0 \times J}$ as 
    \[
    \mu_t(x) = \begin{pmatrix}
    \mu_{1t}(x) \\ 
    \vdots \\ 
    \mu_{Jt}(x)
    \end{pmatrix}, \quad 
    \mu(x) = 
    \begin{pmatrix}
        \mu_1(x)' \\ 
        \vdots \\ 
        \mu_{T_0}(x)'
    \end{pmatrix}.
    \]
    We assume that there exist constant $M_1 > 0$  such that $|\mu_{jt}(x)| \le M_1$ for all $j, t, x$. Furthermore, 
    we assume that there exists a constant $M_2 > 0$ such that, for any $x$, the minimum eigenvalue of the matrix  $\mu(x)'\mu(x)$, denoted by $\xi_{\text{min}}(x)$, satisfies   $\xi_{\text{min}}(x) \ge M_2$.  
    In addition,  suppose that each unit $i$ has a vector of unknown factor loadings $\phi_i = (\phi_{ij})_{j=1}^J \in \mathbb{R}^J$.  We assume that for each unit $i = 1, \ldots, N$, the control potential outcome at time period $t=1, \ldots, T$ is generated as 
    \begin{equation}
        Y_{it}^{N}(x)
        =
        \sum_{j=1}^{J}  \phi_{ij} \mu_{jt}(x)
         + 
         \sum_{\ell=1}^p\eta_{\ell t}(x)Z_{i\ell}
         +
         \varepsilon_{it}(x), \quad x \in \mathcal{X},
         \label{eq:latent_factor_cov}
    \end{equation}
    where $\eta_{\ell t}: \mathcal{X} \to \mathbb{R}, \ell=1, \ldots p, t=1, \ldots T$ are coefficient functions, and $\varepsilon_{it} \in \mathcal{H}, i=1, \ldots, N, t=1, \ldots, T$ are independent, zero-mean errors.  We further assume that there exists a constant \( \sigma > 0 \) such that \( \| \varepsilon_{it} \|_{\mathcal{H}} \leq \sigma \) almost surely for every \( i \) and \( t \).
\end{ass}

Under this assumption, we derive the following finite-sample error bound for the generic estimator.

\begin{thm}
Suppose Assumption \ref{ass_latent_cov} holds. Then, for any $\delta > 0$, the the generic weighting estimator satisfies
    \begin{align}
        d(\nu_{1T}^N, \hat{\nu}_{1T}^N) 
        &\le \frac{M_1^2J^{3/2}}{M_2\sqrt{T_0}} \sqrt{\sum_{t=1}^{T_0} \left\|Y_{1t} - \sum_{i=2}^N \hat{\gamma}_iY_{it}\right\|_{\mathcal{H}}^2} \\
        & \phantom{\le}
        +
        \sqrt{2}\max\left\{1,\frac{M_1^2J^{3/2}}{M_2\sqrt{T_0}}\right\} \sqrt{\sum_{t=1}^T \sum_{\ell=1}^p \|\eta_{\ell t}\|_{\mathcal{H}}^2}\left\|Z_1 - \sum_{i=2}^N \hat{\gamma}_i Z_i\right\|_2 \\
        & \phantom{\le}
        +
        \frac{2\sigma M_1^2J^{3/2}}{M_2}\|\hat{\gamma}\|_1
        +
        \delta \sigma(1+\|\hat{\gamma}\|_2)
        \label{eq:est_error_factor_cov}
    \end{align}
with probability at least $1 -  2e^{-\delta^2/2}$.
    \label{thm:est_error_factor_cov}
\end{thm}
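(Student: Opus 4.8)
The plan is to mirror the proof of Theorem~\ref{thm:est_error_factor}, carrying the covariate imbalance $z^\star:=Z_1-\sum_{i=2}^N\hat\gamma_i Z_i\in\mathbb R^p$ through every step. First, by Lemma~\ref{lem:proj}(b), since $\tilde Y_{1T}^{N}$ is the projection of $\hat Y_{1T}^{N}$ onto the closed convex set $\mathcal Y$ and $Y_{1T}^{N}\in\mathcal Y$, one has $d(\nu_{1T}^N,\hat\nu_{1T}^N)=\|Y_{1T}^N-\tilde Y_{1T}^N\|_{\mathcal H}\le\|Y_{1T}^N-\hat Y_{1T}^N\|_{\mathcal H}$, so it suffices to bound the $\mathcal H$-norm error of the linear estimator $\hat Y_{1T}^N=\sum_{i\ge2}\hat\gamma_i Y_{iT}$. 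Substituting \eqref{eq:latent_factor_cov} (valid for every control outcome at every $t$, and for the treated unit's control potential outcome) and writing $\Delta_j=\phi_{1j}-\sum_{i\ge2}\hat\gamma_i\phi_{ij}$ and $w_t=\varepsilon_{1t}-\sum_{i\ge2}\hat\gamma_i\varepsilon_{it}$, I get, for every $t=1,\dots,T$,
\begin{equation}
Y_{1t}^N-\sum_{i=2}^N\hat\gamma_i Y_{it}^N=\sum_{j=1}^J\Delta_j\mu_{jt}+\sum_{\ell=1}^p z^\star_\ell\,\eta_{\ell t}+w_t .
\end{equation}
For $t\le T_0$ the left-hand side is the pre-treatment residual $u_t:=Y_{1t}-\sum_{i\ge2}\hat\gamma_i Y_{it}$, and for $t=T$ it equals $Y_{1T}^N-\hat Y_{1T}^N$; a triangle inequality reduces the task to bounding a factor term, a period-$T$ covariate term, and a period-$T$ noise term.

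The two straightforward terms are handled as follows. Since the weights are independent of the period-$T$ outcomes and the $\varepsilon_{iT}$ are independent, mean-zero, and bounded by $\sigma$ in $\mathcal H$-norm, a Hoeffding-type inequality for Hilbert-space-valued sums (applied conditionally on the weights) gives $\|w_T\|_{\mathcal H}\le\delta\sigma\sqrt{1+\|\hat\gamma\|_2^2}\le\delta\sigma(1+\|\hat\gamma\|_2)$ with probability at least $1-2e^{-\delta^2/2}$; this is the only probabilistic step. For the period-$T$ covariate term, Cauchy--Schwarz gives $\big\|\sum_\ell z^\star_\ell\eta_{\ell T}\big\|_{\mathcal H}\le\|z^\star\|_2\big(\sum_{\ell=1}^p\|\eta_{\ell T}\|_{\mathcal H}^2\big)^{1/2}$.

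The core of the argument is the factor term $\big\|\sum_j\Delta_j\mu_{jT}\big\|_{\mathcal H}$. Rearranging the displayed identity for $t\le T_0$ gives $\sum_j\Delta_j\mu_{jt}=v_t:=u_t-\sum_\ell z^\star_\ell\eta_{\ell t}-w_t$; evaluated pointwise at $x$ and stacked over $t$ this reads $v(x)=\mu(x)\Delta$, so the eigenvalue condition $\xi_{\min}(x)\ge M_2$ lets me invert the normal equations for $\Delta$ and, using $|\mu_{jt}(x)|\le M_1$, Cauchy--Schwarz over $t$, and integration over $x$, arrive at $\big\|\sum_j\Delta_j\mu_{jT}\big\|_{\mathcal H}\le\frac{M_1^2J^{3/2}}{M_2\sqrt{T_0}}\big(\sum_{t=1}^{T_0}\|v_t\|_{\mathcal H}^2\big)^{1/2}$ --- the covariate analogue of the key estimate behind Theorem~\ref{thm:est_error_factor}. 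I then bound $\big(\sum_{t\le T_0}\|v_t\|_{\mathcal H}^2\big)^{1/2}\le\big(\sum_{t\le T_0}\|u_t\|_{\mathcal H}^2\big)^{1/2}+\big(\sum_{t\le T_0}\big\|\sum_\ell z^\star_\ell\eta_{\ell t}\big\|_{\mathcal H}^2\big)^{1/2}+\big(\sum_{t\le T_0}\|w_t\|_{\mathcal H}^2\big)^{1/2}$, where the first summand is the pre-treatment fit, the second is at most $\|z^\star\|_2\big(\sum_{t\le T_0}\sum_\ell\|\eta_{\ell t}\|_{\mathcal H}^2\big)^{1/2}$ by Cauchy--Schwarz, and the third is at most $2\sqrt{T_0}\,\sigma\|\hat\gamma\|_1$ via the deterministic bound $\|w_t\|_{\mathcal H}\le\sigma(1+\|\hat\gamma\|_1)\le2\sigma\|\hat\gamma\|_1$ (using $\|\hat\gamma\|_1\ge|\sum_i\hat\gamma_i|=1$). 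Multiplying through by $\frac{M_1^2J^{3/2}}{M_2\sqrt{T_0}}$ yields the pre-fit term of the bound, the term $\frac{2\sigma M_1^2J^{3/2}}{M_2}\|\hat\gamma\|_1$, and a covariate contribution $\frac{M_1^2J^{3/2}}{M_2\sqrt{T_0}}\|z^\star\|_2\big(\sum_{t\le T_0}\sum_\ell\|\eta_{\ell t}\|_{\mathcal H}^2\big)^{1/2}$.

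Finally I would add this covariate contribution to the period-$T$ covariate term: their sum is at most $\max\{1,\frac{M_1^2J^{3/2}}{M_2\sqrt{T_0}}\}\|z^\star\|_2\big((\sum_{t\le T_0}\sum_\ell\|\eta_{\ell t}\|_{\mathcal H}^2)^{1/2}+(\sum_\ell\|\eta_{\ell T}\|_{\mathcal H}^2)^{1/2}\big)\le\sqrt2\,\max\{1,\frac{M_1^2J^{3/2}}{M_2\sqrt{T_0}}\}\|z^\star\|_2\big(\sum_{t=1}^{T}\sum_\ell\|\eta_{\ell t}\|_{\mathcal H}^2\big)^{1/2}$, using $a+b\le\sqrt2\,\sqrt{a^2+b^2}$ and $T=T_0+1$; together with the noise term $\delta\sigma(1+\|\hat\gamma\|_2)$ and the pre-fit and $\|\hat\gamma\|_1$ terms this is exactly the asserted bound, holding with probability at least $1-2e^{-\delta^2/2}$ since only the $w_T$ step was probabilistic. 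I expect the factor-loading step to be the main obstacle: one must notice that $\Delta$ is determined not by the raw pre-treatment residuals but by the residuals net of the covariate and noise contributions --- this is what routes $z^\star$ through the factor-estimation channel (so the covariate term carries the full sum $\sum_{t=1}^T$ rather than just $t=T$) and what generates the $\|\hat\gamma\|_1$ term --- and the normal-equation inversion requires careful bookkeeping of the normalization in $\mu(x)'\mu(x)$ and of the base measure $\mu$ on $\mathcal X$.
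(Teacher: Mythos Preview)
Your proposal is correct and follows essentially the same approach as the paper. The only organizational difference is that the paper first states a lemma (Lemma~\ref{lem:factor_expression_cov}) giving $\phi_i=\{\mu(x)'\mu(x)\}^{-1}\mu(x)'\{Y_{i\cdot}(x)-\eta(x)Z_i-\varepsilon_{i\cdot}(x)\}$ and substitutes this upfront to produce three named pieces $\Delta_2,\Delta_3,\Delta_5$, whereas you keep $\Delta_j=\phi_{1j}-\sum_{i\ge2}\hat\gamma_i\phi_{ij}$ abstract and recover the same three pieces by applying the normal equations to $v_t=u_t-\sum_\ell z^\star_\ell\eta_{\ell t}-w_t$; the resulting bounds (pre-fit, $\|\hat\gamma\|_1$, and the $\sqrt{2}\max\{1,C\}$ covariate term via $a+b\le\sqrt{2}\sqrt{a^2+b^2}$) are identical.
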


From \eqref{eq:est_error_factor_cov}, we see that the estimation error of the generic estimator $\hat{\nu}_{1T}^N$ is governed by the pre-treatment fit $\sqrt{\sum_{t=1}^{T_0} \left\|Y_{1t} - \sum_{i=2}^N \hat{\gamma}_i Y_{it}\right\|_{\mathcal{H}}^2}$, the covariate fit $\|Z_1 - \sum_{i=2}^N \hat{\gamma}_i Z_i\|_2$ and the norms of the weights $\|\hat{\gamma}\|_1, \|\hat{\gamma}\|_2$.
In contrast to Theorem \ref{thm:est_error_factor}, the bound in this theorem additionally involves the covariate fit. 
This suggests that achieving accurate estimation under Assumption \ref{ass_latent_cov} requires balancing both the pre-treatment fit and the covariate fit.

Combining this result with Lemmas \ref{lem:aug_weight_prop} and \ref{lem:cov_weight_prop}, we obtain the following bounds for the estimators $\hat{\nu}_{1T}^{N, \text{aug}}$ and $\hat{\nu}_{1T}^{N, \text{cov}}$.

\begin{cor}
\begin{enumerate}
    \item Suppose Assumptions \ref{ass:singular_values} and \ref{ass_latent_cov} hold. Then, for any positive integer $K$ and $\delta > 0$, the estimator $\hat{\nu}_{1T}^{N, \mathrm{aug}}$ with regularization parameter $\lambda > 0$ satisfies 
    \begin{align}
        d(\nu_{1T}^N, \hat{\nu}_{1T}^{N, \mathrm{aug}(K)})  
        &\le \frac{M_1^2J^{3/2}}{M_2\sqrt{T_0}}F_3(\lambda) \\
        & \phantom{\le} 
        +
        \sqrt{2}\max\left\{1, \frac{M_1^2J^{3/2}}{M_2\sqrt{T_0}}\right\} \sqrt{\sum_{t=1}^T \sum_{\ell=1}^p \|\eta_{\ell t}\|_{\mathcal{H}}^2}\left\|Z_1 - \sum_{i=2}^N \hat{\gamma}_i^{\mathrm{aug}(K)} Z_i\right\|_2 \\
        & \phantom{\le} 
        +
        \sigma\left(\frac{2\sqrt{N-1} M_1^2J^{3/2}}{M_2}+\delta\right)F_4(\lambda) + \delta \sigma + R_9^{(K)}
    \end{align}
    with probability at least $1 - 2e^{-\delta^2/2}$. Here,  $F_3(\lambda)$ and $F_4(\lambda)$ are defined in \eqref{eq:func_def_3} and \eqref{eq:func_def_4}, respectively, and  $R_9^{(K)} \to 0$ as $K \to \infty$.
    \item Suppose Assumptions \ref{ass:singular_values_cov} and \ref{ass_latent_cov} hold. Then, for any positive integer $K$ and $\delta > 0$, the estimator $\hat{\nu}_{1T}^{N, \mathrm{cov}}$ with regularization parameter $\lambda > 0$ satisfies 
    \begin{align}
         d(\nu_{1T}^N, \hat{\nu}_{1T}^{N, \mathrm{cov}(K)}) 
         &\le 
         \frac{M_1^2J^{3/2}}{M_2\sqrt{T_0}} F_1(\lambda) \\
         & \phantom{\le} + 
         \sigma\left(\frac{2\sqrt{N-1} M_1^2J^{3/2}}{M_2}+\delta\right)F_2(\lambda) + \delta \sigma + R_{10}^{(K)}
    \end{align}
    with probability at least $1 - 2e^{-\delta^2/2}$. Here,  $F_1(\lambda)$ and $F_2(\lambda)$ are defined in \eqref{eq:func_def_1} and \eqref{eq:func_def_2}, respectively, and  $R_{10}^{(K)} \to 0$ as $K \to \infty$.
\end{enumerate}
    \label{cor:est_error_factor_cov}
\end{cor}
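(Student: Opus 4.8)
The plan is to derive both parts by substituting the deterministic weight bounds of Lemmas \ref{lem:aug_weight_prop} and \ref{lem:cov_weight_prop} into the generic estimator bound of Theorem \ref{thm:est_error_factor_cov}. The first thing to check is that Theorem \ref{thm:est_error_factor_cov} applies to the relevant weights: both $\hat{\gamma}^{\mathrm{aug}(K)}$ and $\hat{\gamma}^{\mathrm{cov}(K)}$ are constructed only from the pre-treatment residuals $r_{is}$, the FSC weights, and (in the second case) the covariates, so they do not depend on the post-treatment outcomes and satisfy $\sum_{i=2}^N \hat{\gamma}_i = 1$. Hence \eqref{eq:est_error_factor_cov} holds verbatim with probability at least $1 - 2e^{-\delta^2/2}$. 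Since all the weight bounds in Lemmas \ref{lem:aug_weight_prop} and \ref{lem:cov_weight_prop} are pathwise (they hold for every realization once $K$ and $\lambda$ are fixed), substituting them incurs no further loss of probability.

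For part (a) I would take $\hat{\gamma} = \hat{\gamma}^{\mathrm{aug}(K)}$ in \eqref{eq:est_error_factor_cov}. The pre-treatment fit factor is controlled by Lemma \ref{lem:aug_weight_prop}(1), whose main term is exactly $F_3(\lambda)$ of \eqref{eq:func_def_3}; the $\ell^2$-norm factor $\|\hat{\gamma}^{\mathrm{aug}(K)}\|_2$ by Lemma \ref{lem:aug_weight_prop}(2), whose main term is exactly $F_4(\lambda)$ of \eqref{eq:func_def_4}. The $\ell^1$-norm factor is handled by Cauchy--Schwarz, $\|\hat{\gamma}^{\mathrm{aug}(K)}\|_1 \le \sqrt{N-1}\,\|\hat{\gamma}^{\mathrm{aug}(K)}\|_2$, which is the source of the $\sqrt{N-1}$ appearing in the statement; bounding the right-hand side again by Lemma \ref{lem:aug_weight_prop}(2) gives $\sqrt{N-1}\,(F_4(\lambda) + R_2^{(K)})$. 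Collecting the $F_4(\lambda)$ contributions from the $\ell^1$ and $\ell^2$ terms produces the coefficient $\sigma\bigl(2\sqrt{N-1}\,M_1^2 J^{3/2}/M_2 + \delta\bigr)$, the leftover $\delta\sigma$ is the additive constant, and all the $R_1^{(K)}, R_2^{(K)}$ multiples are absorbed into $R_9^{(K)} \to 0$. The covariate-fit term $\|Z_1 - \sum_{i=2}^N \hat{\gamma}_i^{\mathrm{aug}(K)} Z_i\|_2$ is simply carried through, since the outcome model underlying $\hat{\gamma}^{\mathrm{aug}}$ does not enforce covariate balance.

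For part (b) I would take $\hat{\gamma} = \hat{\gamma}^{\mathrm{cov}(K)}$. The key simplification is Lemma \ref{lem:cov_weight_prop}(1): the covariate fit is exactly zero, so the entire second line of the generic bound \eqref{eq:est_error_factor_cov} vanishes, and with it the $\sqrt{2}\max\{1, M_1^2 J^{3/2}/(M_2\sqrt{T_0})\}$ factor. The pre-treatment fit and $\ell^2$-norm factors are then bounded by Lemma \ref{lem:cov_weight_prop}(2) and (3), whose main terms are $F_1(\lambda)$ and $F_2(\lambda)$ of \eqref{eq:func_def_1}--\eqref{eq:func_def_2}, and the $\ell^1$-norm factor is again reduced via Cauchy--Schwarz to $\sqrt{N-1}\,(F_2(\lambda) + R_6^{(K)})$. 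Grouping the $F_2(\lambda)$ terms and collecting the $R_5^{(K)}, R_6^{(K)}$ multiples into $R_{10}^{(K)} \to 0$ finishes the argument.

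The analysis is essentially bookkeeping once these ingredients are lined up: the genuine analytic work is already contained in Theorem \ref{thm:est_error_factor_cov} and in Lemmas \ref{lem:aug_weight_prop}--\ref{lem:cov_weight_prop}. The only mild care needed is tracking which pieces belong in the stated main bound versus the vanishing remainder, and noticing that the exact covariate balance in part (b) is what removes the covariate-fit contribution that persists in part (a). I do not anticipate any substantive obstacle beyond this.
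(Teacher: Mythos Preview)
Your proposal is correct and matches the paper's approach: the paper simply states that the proof ``follows along similar lines as that of Corollary \ref{cor:est_error_factor}, and is therefore omitted,'' i.e., substitute the deterministic weight bounds of Lemmas \ref{lem:aug_weight_prop} and \ref{lem:cov_weight_prop} into the generic bound of Theorem \ref{thm:est_error_factor_cov}, handle the $\ell^1$-norm via $\|\hat{\gamma}\|_1 \le \sqrt{N-1}\,\|\hat{\gamma}\|_2$, and, for part (b), invoke the exact covariate balance from Lemma \ref{lem:cov_weight_prop}(a) to kill the covariate-fit term.
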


\section{Proofs}
\label{sec:proof}
\subsection{Proofs for Section \ref{sec:method}}
\paragraph*{Proof of Lemma \ref{lem:closed_form_RFASCM}.}
Fix $t \in \{T_0+1, \ldots,  T\}$.
Since $X_{is}, i=2, \ldots, N$ are centered for each $s=1, \ldots, T_0$, the intercept $\hat{\eta}_0(x)$ and 
coefficients
$\hat{\theta}(x) = (\hat{\theta}_1(x)', \ldots, \hat{\theta}_{T_0}(x)')' \in \mathbb{R}^{KT_0}$ are given by 
    \begin{equation}
        \hat{\eta}_0(x) = \frac{1}{N-1}\sum_{i=2}^N Y_{it}(x), \quad \hat{\theta}(x) = (r_{0\cdot}'r_{0\cdot} + \lambda I_{KT_0})^{-1} \sum_{i=2}^N r_{i\cdot}Y_{it}(x).
    \end{equation}
    Therefore, we have
    \begin{align}
        \hat{m}_{1t}^{\text{rid}}(x) - \sum_{i=2}^N \hat{\gamma}_i^{\text{scm}}\hat{m}^{\text{rid}}_{it}(x) 
        &=\hat{\eta}_0(x) + \sum_{s=1}^{T_0} \hat{\theta}_s(x)'r_{1s}
        -
        \sum_{i=2}^N \hat{\gamma}_i^{\text{scm}}\left\{ \hat{\eta}_0(x) + \sum_{s=1}^{T_0}\hat{\theta}_s(x)'r_{is} \right\} \\ 
        &= \sum_{s=1}^{T_0} \hat{\theta}_s(x)'\left(r_{1s} - \sum_{i=2}^N \hat{\gamma}^{\text{scm}}_ir_{is}\right) \\ 
        &= (r_{1\cdot} - r_{0\cdot}'\hat{\gamma}^{\text{scm}})'\hat{\theta}(x) \\ 
        &= \sum_{i=2}^N (r_{1\cdot} - r_{0\cdot}'\hat{\gamma}^{\text{scm}})' (r_{0\cdot}'r_{0\cdot} + \lambda I_{KT_0})^{-1} r_{i\cdot}Y_{it}(x).  
    \end{align}
Substituting this expression into \eqref{eq:est_ridge_y} gives
\begin{align}
    \hat{Y}_{1t}^{N, \text{aug}}(x)
    &=
    \sum_{i=2}^N \hat{\gamma}_i^{\text{scm}}Y_{it}(x)
    +
    \left\{\hat{m}_{1t}^{\text{rid}}(x) - \sum_{i=2}^N \hat{\gamma}_i^{\text{scm}}\hat{m}_{it}^{\text{rid}}(x)\right\} \\
    &= \sum_{i=2}^N \{\hat{\gamma}_i^{\text{scm}} +(r_{1\cdot} - r_{0\cdot}'\hat{\gamma}^{\text{scm}})' (r_{0\cdot}'r_{0\cdot} + \lambda I_{KT_0})^{-1} r_{i\cdot} \} Y_{it}(x),
\end{align}
which implies that
\[
\hat{Y}_{1t}^{N, \mathrm{aug}} = \sum_{i=2}^N \hat{\gamma}_i^{\mathrm{aug}}Y_{it},
\]
where the weights are given by
\[
\hat{\gamma}_i^{\mathrm{aug}} = \hat{\gamma}_i^{\mathrm{scm}} + (r_{1\cdot} - r_{0\cdot}'\hat{\gamma}^{\text{scm}})'(r_{0\cdot}'r_{0\cdot} + \lambda I_{KT_0})^{-1} r_{i \cdot}.
\]
The fact that $\hat{\gamma}^{\text{aug}}$ solves the penalized SCM problem \eqref{eq:const_optim_prob} can be verified by directly solving the constrained optimization.

\paragraph*{Proof of Lemma \ref{lem:aug_weight_prop}.}
We prove part (a) after establishing part (b).
For part (b), fix a positive integer $K$. By Lemma \ref{lem:closed_form_RFASCM}, 
\begin{align}
    \hat{\gamma}^{\text{aug}(K)}
    &=
    \hat{\gamma}^{\text{scm}} + r_{0\cdot}^{(K)}(r_{0\cdot}^{(K)}{'}r_{0\cdot}^{(K)} + \lambda I_{KT_0})^{-1}(r_{1\cdot}^{(K)} - r_{0\cdot}^{(K)}{'}\hat{\gamma}^{\text{scm}}). 
    \label{eq:weight_aug_rep}
\end{align}
Let $r_{0\cdot}^{(K)} = U^{(K)} D^{(K)} V^{(K)}{'}$ be the singular value decomposition of the matrix $r_{0\cdot}^{(K)}$. 
Here, $U^{(K)}$ and $V^{(K)}$ are $(N-1) \times m(K)$ and $(KT_0) \times m(K)$ orthogonal matrices, respectively.  
$D^{(K)} = \text{diag}\{d_1^{(K)}, \ldots, d_{m(K)}^{(K)}\}$  is an $m(K) \times m(K)$ diagonal matrix, where $d_1^{(K)}, \ldots, d_{m(K)}^{(K)}$ are the singular values of $r_{0\cdot}^{(K)}$ such that $d_{1}^{(K)} \ge \cdots \ge d_{m(K)}^{(K)}$. In Section \ref{sec:est_error},  we denote $d_{1}^{(K)}$ and $d_{m(K)}^{(K)}$ as $d_{\text{max}}^{(K)}$ and $d_{\text{min}}^{(K)}$, respectively. 
Using this decomposition, we have \begin{align}
    r_{0\cdot}^{(K)}{'}r_{0\cdot}^{(K)} + \lambda I _{KT_0}
    =
    V^{(K)}(D^{(K)})^{2}V^{(K)}{'} + \lambda I_{KT_0}
    =
    V^{(K)}\text{diag}\{(d_1^{(K)})^2+\lambda, \ldots, (d_{m(K)}^{(K)})^2+\lambda\} V^{(K)}{'},
\end{align}
which implies 
\begin{align}
    (r_{0\cdot}^{(K)}{'}r_{0\cdot}^{(K)} + \lambda I _{KT_0})^{-1}
&=
V^{(K)} \text{diag}\{\{(d_1^{(K)})^2 + \lambda\}^{-1}, \ldots, \{(d_{m(K)}^{(K)})^2 + \lambda\}^{-1}\} V^{(K)}{'} \\ 
&= V^{(K)} E^{(K)} V^{(K)}{'},
\label{eq:inv_formula}
\end{align}
where we denote $E^{(K)} = \text{diag}\{\{(d_1^{(K)})^2 + \lambda\}^{-1}, \ldots, \{(d_{m(K)}^{(K)})^2 + \lambda\}^{-1}\}$. Combining \eqref{eq:weight_aug_rep} and \eqref{eq:inv_formula},  we have 
\begin{align}
    \hat{\gamma}^{\text{aug}(K)}
    &= 
    \hat{\gamma}^{\text{scm}} 
    +
    r_{0\cdot}^{(K)}V^{(K)}E^{(K)}V^{(K)}{'} (r_{1\cdot}^{(K)} - r_{0\cdot}^{(K)}{'}\hat{\gamma}^{\text{scm}}) \\ 
    &= U^{(K)}D^{(K)} E^{(K)} V^{(K)}{'} (r_{1\cdot}^{(K)} - r_{0\cdot}^{(K)}{'}\hat{\gamma}^{\text{scm}}). 
\end{align}
By the triangle inequality and the sub-multiplicativity of the Frobenius norm, 
\begin{align}
    \|\hat{\gamma}^{\mathrm{aug}(K)}\|_2
    &\le 
    \|\hat{\gamma}^{\mathrm{scm}}\|_2
    +
    \|D^{(K)}E^{(K)}\|_F\|r_{1\cdot}^{(K)} - r_{0\cdot}^{(K)}{'}\hat{\gamma}^{\text{scm}}\|_2 \\ 
    &=  
    \|\hat{\gamma}^{\mathrm{scm}}\|_2
    +
    \sqrt{\sum_{j=1}^{m(K)}\left(\frac{d_j^{(K)}}{(d_j^{(K)})^2 + \lambda}\right)^2} \|r_{1\cdot}^{(K)} - r_{0\cdot}^{(K)}{'}\hat{\gamma}^{\text{scm}}\|_2 \\ 
    &\le 
    \|\hat{\gamma}^{\mathrm{scm}}\|_2
    +
     \frac{\sqrt{m(K)}d_{\text{max}}^{(K)}}{(d_{\text{min}}^{(K)})^2 + \lambda} \|r_{1\cdot}^{(K)} - r_{0\cdot}^{(K)}{'}\hat{\gamma}^{\text{scm}}\|_2 \\ 
     & =
     \|\hat{\gamma}^{\mathrm{scm}}\|_2 
     +
     \frac{\sqrt{m(K)}d_{\text{max}}^{(K)}}{(d_{\text{min}}^{(K)})^2 + \lambda} 
     \sqrt{\sum_{t=1}^{T_0} \left\|Y_{1t}^{N} - \sum_{i=2}^N \gamma_i^{\mathrm{scm}}Y_{it}\right\|_{\mathcal{H}}^2}
     +
     R_{1}^{(K)}, 
\end{align}
where 
\[
R_{1}^{(K)}
=
\frac{\sqrt{m(K)}d_{\text{max}}^{(K)}}{(d_{\text{min}}^{(K)})^2 + \lambda}
\left\{ 
\|r_{1\cdot}^{(K)} - r_{0\cdot}^{(K)}{'}\hat{\gamma}^{\text{scm}}\|_2
-
     \sqrt{\sum_{t=1}^{T_0} \left\|Y_{1t}^{N} - \sum_{i=2}^N \gamma_i^{\mathrm{scm}}Y_{it}\right\|_{\mathcal{H}}^2}
     \right\}.
\]
Note that under Assumption \ref{ass:singular_values}, 
\[
\frac{\sqrt{m(K)}d_{\text{max}}^{(K)}}{(d_{\text{min}}^{(K)})^2 + \lambda}
\le 
\frac{\sqrt{N-1}C_1}{c_1^2+\lambda}
\]
Moreover,  as $K \to \infty$, 
\begin{align}
\|r_{1\cdot}^{(K)} -r_{0\cdot}^{(K)}\hat{\gamma}^{\text{scm}}\|_2 
\to
\sqrt{\sum_{t=1}^{T_0}\sum_{k=1}^\infty \left(r_{1, t, k} - \sum_{i=2}^N r_{i, t, k}\right)^2}
=
\sqrt{\sum_{t=1}^{T_0} \left\|Y_{1t}^{N} - \sum_{i=2}^N \gamma_i^{\mathrm{scm}}Y_{it}\right\|_{\mathcal{H}}^2}.
\end{align}
Hence, we have that $R_1^{(K)} \to 0$ as $K \to \infty$.

For part (a), fix a positive integer $K$. Observe that 
\begin{align*}
     \sum_{t=1}^{T_0} \left\|Y_{1t} - \sum_{i=2}^N \gamma_i^{\mathrm{aug}(K)}Y_{it}\right\|_{\mathcal{H}}^2
     &=
     \sum_{t=1}^{T_0} \left\|X_{1t} - \sum_{i=2}^N \gamma_i^{\mathrm{aug}(K)}X_{it}\right\|_{\mathcal{H}}^2 \\
     &
     =\sum_{t=1}^{T_0} \sum_{k=1}^\infty \left(r_{1,t,k} - \sum_{i=2}^N \hat{\gamma}_i^{\mathrm{aug}(K)}r_{i,t,k} \right)^2 \\ 
     &= 
     \| r_{1\cdot}^{(K)} - r_{0\cdot}^{(K)}{'}\hat{\gamma}^{\mathrm{aug}(K)}\|_2^2 
     +
     \sum_{t=1}^{T_0} \sum_{k=K+1}^\infty \left(r_{1,t,k} - \sum_{i=2}^N \hat{\gamma}_i^{\mathrm{aug}(K)}r_{i,t,k} \right)^2,
\end{align*}
which  implies 
\begin{equation}
    \sqrt{\sum_{t=1}^{T_0} \left\|Y_{1t}^{N} - \sum_{i=2}^N \gamma_i^{\mathrm{aug}(K)}Y_{it}\right\|_{\mathcal{H}}^2}
    \le 
    \| r_{1\cdot}^{(K)} - r_{0\cdot}^{(K)}{'}\hat{\gamma}^{\mathrm{aug}(K)}\|_2
     +
     \sqrt{\sum_{t=1}^{T_0} \sum_{k=K+1}^\infty \left(r_{1,t,k} - \sum_{i=2}^N \hat{\gamma}_i^{\mathrm{aug}(K)}r_{i,t,k} \right)^2}.
     \label{eq:fit_aug_decompose}
\end{equation}

We now bound the first term in \eqref{eq:fit_aug_decompose}. From the expression \eqref{eq:weight_aug_rep},
\begin{align}
    r_{1\cdot}^{(K)} - r_{0\cdot}^{(K)}{'} \hat{\gamma}^{\text{aug}(K)}
    &=
    r_{1\cdot}^{(K)} - r_{0\cdot}^{(K)}{'}\{\hat{\gamma}^{\text{scm}} + r_{0\cdot}^{(K)}(r_{0\cdot}^{(K)}{'}r_{0\cdot}^{(K)} + \lambda I _{KT_0})^{-1}(r_{1\cdot}^{(K)} - r_{0\cdot}^{(K)}{'} \hat{\gamma}^{\text{scm}})\} \\
    &= \lambda(r_{0\cdot}^{(K)}{'}r_{0\cdot}^{(K)} + \lambda I _{KT_0})^{-1}(r_{1\cdot}^{(K)} - r_{0\cdot}^{(K)}{'} \hat{\gamma}^{\text{scm}}).
\end{align}
Combining this with \eqref{eq:inv_formula},  we obtain
\begin{align}
    \| r_{1\cdot}^{(K)} - r_{0\cdot}^{(K)}{'}\hat{\gamma}^{\mathrm{aug}(K)}\|_2^2
    & =
    ( r_{1\cdot}^{(K)} - r_{0\cdot}^{(K)}{'}\hat{\gamma}^{\mathrm{aug}(K)})'( r_{1\cdot}^{(K)} - r_{0\cdot}^{(K)}{'}\hat{\gamma}^{\mathrm{aug}(K)}) \\ 
    &= \{\lambda V^{(K)} E^{(K)} V^{(K)}{'} (r_{1\cdot}^{(K)} - r_{0\cdot}^{(K)}\hat{\gamma}^{\text{scm}})\}'\{\lambda V^{(K)} E^{(K)} V^{(K)}{'} (r_{1\cdot}^{(K)} - r_{0\cdot}^{(K)}\hat{\gamma}^{\text{scm}})\} \\ 
    &= \{\lambda E^{(K)}V^{(K)}(r_{1\cdot}^{(K)} - r_{0\cdot}^{(K)}\hat{\gamma}^{\text{scm}}) \}' \{ \lambda E^{(K)}V^{(K)}(r_{1\cdot}^{(K)} - r_{0\cdot}^{(K)}\hat{\gamma}^{\text{scm}}) \} \\ 
    &=
    \| \lambda E^{(K)}V^{(K)}(r_{1\cdot}^{(K)} - r_{0\cdot}^{(K)}\hat{\gamma}^{\text{scm}}) \|_2^2.
    \label{eq:aug_fit_transform}
\end{align}
By the sub-multiplicativity of the Frobenius norm, 
\begin{align}
    \|\lambda E^{(K)} V^{(K)} (r_{1\cdot}^{(K)} - r_{0\cdot}^{(K)}\hat{\gamma}^{\text{scm}})\|_2^2 
    &\le 
    \|\lambda E^{(K)}\|_F^2 \|r_{1\cdot}^{(K)} -r_{0\cdot}^{(K)}{'}\hat{\gamma}^{\text{scm}}\|_2^2 \\
    & =
    \sum_{j=1}^{m(K)}\left(\frac{\lambda}{(d_j^{(K)})^2 + \lambda} \right)^2  \|r_{1\cdot}^{(K)} -r_{0\cdot}^{(K)}{'}\hat{\gamma}^{\text{scm}}\|_2^2 \\
    & \le 
    m(K)\left(\frac{\lambda}{(d_{\text{min}}^{(K)})^2 + \lambda}\right)^2\|r_{1\cdot}^{(K)} -r_{0\cdot}^{(K)}{'}\hat{\gamma}^{\text{scm}}\|_2^2. 
    \label{eq:norm_bound_1}
\end{align}
Combining \eqref{eq:aug_fit_transform} and \eqref{eq:norm_bound_1}, we have 
\begin{equation}
    \| r_{1\cdot}^{(K)} - r_{0\cdot}^{(K)}{'}\hat{\gamma}^{\mathrm{aug}(K)}\|_2
    \le 
    \frac{\sqrt{m(K)}\lambda}{(d_{\text{min}}^{(K)})^2 + \lambda}\|r_{1\cdot}^{(K)} -r_{0\cdot}^{(K)}{'}\hat{\gamma}^{\text{scm}}\|_2.
    \label{eq:fit_aug_bound_1}
\end{equation}

Next, we bound the second term in \eqref{eq:fit_aug_decompose}. Part (b) implies that there exists a constant $B_1 > 0$ such that $\|\hat{\gamma}^{\text{aug}(K)}\|_2 \le B_1$ for any $K$. Then, for each $k > K+1$, the Cauchy–Schwarz inequality gives 
\begin{align}
    \sum_{t=1}^{T_0}\left(r_{1,t,k} - \sum_{i=2}^N \hat{\gamma}_i^{\text{aug}(K)}r_{i,t,k}\right)^2
    &=
    \sum_{t=1}^{T_0}\left\{\sum_{i=2}^N\hat{\gamma}_i^{\text{aug}(K)}(r_{1,t,k} - r_{i,t,k})\right\}^2 \\ 
    & \le \sum_{t=1}^{T_0} \|\hat{\gamma}^{\text{aug}(K)}\|_2^2 \sum_{i=2}^N (r_{1,t,k} - r_{i,t,k})^2 \\ 
    & \le 
    B_1^2 \sum_{t=1}^{T_0} \sum_{i=2}^N (r_{1,t,k} - r_{i,t,k})^2.
\end{align}
Hence,  it follows that 
\begin{align}
    \sqrt{\sum_{t=1}^{T_0} \sum_{k=K+1}^\infty \left(r_{1,t,k} - \sum_{i=2}^N \hat{\gamma}_i^{\mathrm{aug}(K)}r_{i,t,k} \right)^2}
    \le 
    B_1
    \sqrt{\sum_{k=K+1}^\infty \sum_{t=1}^{T_0} \sum_{i=2}^N (r_{1,t,k} - r_{i,t,k})^2}.
    \label{eq:norm_bound_2}
\end{align}
Combining \eqref{eq:fit_aug_decompose}, \eqref{eq:fit_aug_bound_1} and \eqref{eq:norm_bound_2}, it holds that  
\begin{align}
    \sqrt{\sum_{t=1}^{T_0} \left\|Y_{1t} - \sum_{i=2}^N \gamma_i^{\mathrm{aug}(K)}Y_{it}\right\|_{\mathcal{H}}^2}
    &\le 
    \frac{\sqrt{m(K)}\lambda}{(d_{\text{min}}^{(K)})^2 + \lambda}\|r_{1\cdot}^{(K)} -r_{0\cdot}^{(K)}{'}\hat{\gamma}^{\text{scm}}\|_2 \\
    & \phantom{\le}
    +
    B_1
    \sqrt{\sum_{k=K+1}^\infty \sum_{t=1}^{T_0} \sum_{i=2}^N (r_{1,t,k} - r_{i,t,k})^2} \\ 
    & = 
    \frac{\sqrt{m(K)}\lambda}{(d_{\text{min}}^{(K)})^2 + \lambda}
    \sqrt{\sum_{t=1}^{T_0} \left\|Y_{1t} - \sum_{i=2}^N \gamma_i^{\mathrm{scm}}Y_{it}\right\|_{\mathcal{H}}^2} + R_2^{(K)},
\end{align}
where 
\begin{align}
    R_2^{(K)}
&=
\frac{\sqrt{m(K)}\lambda}{(d_{\text{min}}^{(K)})^2 + \lambda}
\left\{
\|r_{1\cdot}^{(K)} -r_{0\cdot}^{(K)}{'}\hat{\gamma}^{\text{scm}}\|_2
-\sqrt{\sum_{t=1}^{T_0} \left\|Y_{1t} - \sum_{i=2}^N \gamma_i^{\mathrm{scm}}Y_{it}\right\|_{\mathcal{H}}^2}
\right\} \\
& \phantom{=}
+
B_1
    \sqrt{\sum_{k=K+1}^\infty \sum_{t=1}^{T_0} \sum_{i=2}^N (r_{1,t,k} - r_{i,t,k})^2}.
\end{align}
Note that under Assumption \ref{ass:singular_values}, 
\[
\frac{\sqrt{m(K)}\lambda}{d_{\text{min}}^{(K)} + \lambda}
\le 
\frac{\sqrt{N-1}\lambda}{c_1^2 + \lambda}, 
\]
Furthermore, as $K \to \infty$, 
\[
\|r_{1\cdot}^{(K)} -r_{0\cdot}^{(K)}{'}\hat{\gamma}^{\text{scm}}\|_2 \to 
\sqrt{\sum_{t=1}^{T_0} \left\|Y_{1t} - \sum_{i=2}^N \gamma_i^{\mathrm{scm}}Y_{it}\right\|_{\mathcal{H}}^2}, \quad 
\sum_{k=K+1}^\infty \sum_{t=1}^{T_0} \sum_{i=2}^N (r_{1,t,k} - r_{i,t,k})^2 \to 0.
\]
Hence, we have $R_2^{(K)} \to 0$ as $K \to \infty$.

\subsection{Proofs for Section \ref{sec:est_error}}
\paragraph*{Proof of Theorem \ref{thm:est_error_auto}.}
In the proof, we aim to bound the estimation error $\|{Y}_{1T}^N - \hat{Y}_{1T}^N\|_\mathcal{H}$ of the generic weighting estimator $\hat{Y}_{1T}^N$. 
Note that, by Lemma \ref{lem:proj}, we have $\|{Y}_{1T}^N - \tilde{Y}_{1T}^N\|_{\mathcal{H}} \le \|{Y}_{1T}^N - \hat{Y}_{1T}^N\|_{\mathcal{H}}$. 
Moreover, by the distance-preserving property of the map $\Psi$, we have $d({\nu}_{1T}^N, \tilde{\nu}_{1T}^N) = \|{Y}_{1T}^N - \tilde{Y}_{1T}^N\|_{\mathcal{H}}$, where  $\hat{\nu}_{1T}^N = \Psi^{-1}(\tilde{Y}_{1T}^N)$. 
Together, these imply that $d({\nu}_{1T}^N, \hat{\nu}_{1T}^N) \le \|{Y}_{1T}^N - \hat{Y}_{1T}^N\|_{\mathcal{H}}$.
Hence, once a bound on $\|{Y}_{1T}^N - \hat{Y}_{1T}^N\|_{\mathcal{H}}$ is established, a corresponding bound for $d({\nu}_{1T}^N, \hat{\nu}_{1T}^N)$ immediately follows. 
Recall that $\mathcal{H}$ is the space of real-valued squared integrable functions on the measure space $(\mathcal{X}, \mathcal{A}, \mu)$. 

The difference between $Y_{1T}^N$ and $\hat{Y}_{1T}^N$ can be decomposed as 
\[
Y_{1T}^N(x) - \hat{Y}_{1T}^N(x) 
    =
    \sum_{t=1}^{T_0} \left\langle \beta_t(x, \cdot), Y_{1t} - \sum_{i=2}^N\hat{\gamma}_i Y_{it} \right\rangle_{\mathcal{H}} + \varepsilon_{1T}(x) - \sum_{i=2}^N \hat{\gamma}_i \varepsilon_{iT}(x), \quad  x \in \mathcal{X}.
\]
This yields the bound
\begin{equation}
    \|Y_{1T}^N - \hat{Y}_{1T}^N\|_{\mathcal{H}}
    \le 
    \|\Delta_{1}\|_{\mathcal{H}} + \left\|\varepsilon_{1T} - \sum_{i=2}^N \hat{\gamma}_i \varepsilon_{iT}\right\|_{\mathcal{H}},
    \label{eq:auto_bound}
\end{equation}
where $\Delta_{1}: \mathcal{X} \to \mathbb{R}$ is defined by 
\begin{equation}
\Delta_{1}(x) = 
\sum_{t=1}^{T_0} \left\langle \beta_t(x, \cdot), Y_{1t} - \sum_{i=2}^N\hat{\gamma}_i Y_{it} \right\rangle_{\mathcal{H}}.
\label{eq:delta_1_definition}
\end{equation}

We now bound $\|\Delta_{1}\|_{\mathcal{H}}$. 
For each $x \in \mathcal{X}$, the Cauchy–Schwarz inequality gives
\[
|\Delta_1(x)| 
\le 
\sqrt{\sum_{t=1}^{T_0}\|\beta_t(x, \cdot)\|_{\mathcal{H}}^2}
\sqrt{\sum_{t=1}^{T_0}\left\|Y_{1t} - \sum_{i=2}^N \hat{\gamma}_iY_{it}\right\|_{\mathcal{H}}^2}.
\]
Hence, we obtain 
\begin{align}
    \|\Delta_{1}\|_{\mathcal{H}} 
    &= \sqrt{\int_\mathcal{X}|\Delta_{1}(x)|^2 d\mu(x)} \\
    & \le 
    \sqrt{\int_{\mathcal{X}} \sum_{t=1}^{T_0}\|\beta_t(x, \cdot)\|_{\mathcal{H}}^2 d\mu(x)} 
\sqrt{\sum_{t=1}^{T_0}\left\|Y_{1t} - \sum_{i=2}^N \hat{\gamma}_iY_{it}\right\|_{\mathcal{H}}^2} \\ 
&= 
\sqrt{\sum_{t=1}^{T_0}\|\beta_t\|_{\mathcal{H}\times \mathcal{H}}^2}
\sqrt{\sum_{t=1}^{T_0}\left\|Y_{1t} - \sum_{i=2}^N \hat{\gamma}_iY_{it}\right\|_{\mathcal{H}}^2}.
    \label{eq:bound_B}
\end{align}

Next, we bound $\|\varepsilon_{1T} - \sum_{i=2}^N \hat{\gamma}_i \varepsilon_{iT}\|_{\mathcal{H}}$. 
Note that $\varepsilon_{1T}, -\hat{\gamma}_2 \varepsilon_{2T}, \ldots, -\hat{\gamma}_N\varepsilon_{NT}$ are independent mean-zero random variables satisfying $\|\varepsilon_{1T}\|_\mathcal{H} \le \sigma$ and $\|-\hat{\gamma}_i\varepsilon_{iT}\|_\mathcal{H} \le |\hat{\gamma}_i|\sigma (i=2, \ldots, N)$, almost surely. Then, Theorem 1.2 in \cite{Pinelis1991inequalities} implies that 
\begin{align}
    P\left(\left\|\varepsilon_{1T} - \sum_{i=2}^N \hat{\gamma}_i\varepsilon_{iT}\right\|_{\mathcal{H}} \ge r\right) \le 2 \exp\left\{-\frac{r^2}{2\sigma^2(1+\|\hat{\gamma}\|_2^2)}\right\}
\end{align}
holds for any $r > 0$.
Setting $r = \delta \sigma (1+\|\hat{\gamma}\|_2)$ and using the inequality $1+\|\hat{\gamma}\|_2^2 \le (1+\|\hat{\gamma}\|_2)^2$, we obtain 
\begin{equation}
    \left\|\varepsilon_{1T} - \sum_{i=2}^N \hat{\gamma}_i\varepsilon_{iT}\right\|_{\mathcal{H}}
    \le \delta \sigma (1+\|\hat{\gamma}\|_2)
    \label{eq:error_term_bound}
\end{equation}
with probability at least $1 - 2e^{-\delta^2/2}$. 
Combining
\eqref{eq:auto_bound}, \eqref{eq:bound_B} and 
\eqref{eq:error_term_bound}, we obtain the desired result. 

\paragraph*{Proof of Corollary \ref{cor:error_auto}.}
For any positive integer $K$ and $\delta > 0$,
Theorem \ref{thm:est_error_auto} implies
\begin{align}
    d(\nu_{1T}^N, \hat{\nu}_{1T}^{\text{aug}(K)})
        &\le \sqrt{\sum_{t=1}^{T_0} \|\beta_t\|_{\mathcal{H} \times \mathcal{H}}^2} \sqrt{\sum_{t=1}^{T_0} \left\|Y_{1t} - \sum_{i=2}^N \hat{\gamma}_i^{\text{aug}(K)}Y_{it}\right\|_{\mathcal{H}}^2} 
        +
        \delta \sigma(1+\|\hat{\gamma}^{\text{aug}(K)}\|_2)
\end{align}
with probability at least $1  - 2e^{-\delta^2/2}$. 
Combining this inequality with Lemma \ref{lem:aug_weight_prop}, we obtain
\begin{align}
         d(\nu_{1T}^N, \hat{\nu}_{1T}^{\text{aug}(K)})
         & \le \frac{\sqrt{m(K)}\lambda}{(d_{\mathrm{min}}^{(K)})^2 + \lambda}\sqrt{\sum_{t=1}^{T_0} \|\beta_t\|_{\mathcal{H} \times \mathcal{H}}^2} \sqrt{\sum_{t=1}^{T_0} \left\|Y_{1t} - \sum_{i=2}^N \hat{\gamma}_i^{\mathrm{scm}}Y_{it}\right\|_{\mathcal{H}}^2} \\ 
        & \phantom{\le}+ \delta \sigma \left\{ 1+
        \|\hat{\gamma}^{\mathrm{scm}}\|_2 +
        \frac{\sqrt{m(K)}d_{\mathrm{max}}^{(K)}}{(d_{\mathrm{min}}^{(K)})^2 + \lambda}\sqrt{\sum_{t=1}^{T_0} \left\|Y_{1t} - \sum_{i=2}^N \hat{\gamma}_i^{\mathrm{scm}}Y_{it}\right\|_{\mathcal{H}}^2}\right\} \\ 
        & \phantom{\le}+ R_3^{(K)}
    \end{align}
with probability at least $1  - 2e^{-\delta^2/2}$. 
Here, 
\[
R_3^{(K)}
=
\sqrt{\sum_{t=1}^{T_0} \|\beta_t\|_{\mathcal{H} \times \mathcal{H}}^2}R_1^{(K)}
+
\delta \sigma R_2^{(K)},
\]
and since $R_1^{(K)} \to 0$ and $R_2^{(K)} \to 0$ as $K \to \infty$, it follows that $R_3^{(K)} \to 0$ as $K \to \infty$.

\paragraph*{Proof of Theorem \ref{thm:est_error_factor}.}
We begin by stating a lemma that expresses the vector of factor loadings in terms of post-treatment outcomes and errors.

\begin{lem}
    Suppose Assumption \ref{ass_latent} holds. Then, for each $i=1, \ldots, N$, 
    \[
    \phi_i  =  \{\mu(x)'\mu(x)\}^{-1}\mu(x)'\{Y_{i\cdot}(x) - \varepsilon_{i\cdot}(x)\}
\]
for any $x \in \mathcal{X}$,
where $Y_{i\cdot}(x) = (Y_{i1}(x), \ldots, Y_{iT_0}(x))' \in  \mathbb{R}^{T_0}$ and $\varepsilon_{i\cdot}(x) = (\varepsilon_{i1}(x), \ldots, \varepsilon_{iT_0}(x))' \in  \mathbb{R}^{T_0}$. 
\label{lem:factor_expression}
\end{lem}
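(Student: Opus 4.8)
The plan is to recognize the latent factor model, restricted to the pre‑treatment periods, as an ordinary linear system for $\phi_i$ and then invoke the normal equations. First I would observe that for $t \le T_0$ no unit is treated, so the observed outcome equals the control potential outcome, $Y_{it}(x) = Y_{it}^N(x)$ for every $i$ and every $x \in \mathcal{X}$. Fixing $x$ and stacking \eqref{eq:latent_factor} over $t = 1, \ldots, T_0$ gives
\[
Y_{i\cdot}(x) = \mu(x)\phi_i + \varepsilon_{i\cdot}(x),
\]
because the $(t,j)$ entry of $\mu(x)$ is $\mu_{jt}(x)$, so the $t$-th coordinate of $\mu(x)\phi_i$ is exactly $\sum_{j=1}^J \phi_{ij}\mu_{jt}(x)$.

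Next I would rearrange this to $Y_{i\cdot}(x) - \varepsilon_{i\cdot}(x) = \mu(x)\phi_i$ and left‑multiply by $\mu(x)'$ to obtain $\mu(x)'\{Y_{i\cdot}(x) - \varepsilon_{i\cdot}(x)\} = \mu(x)'\mu(x)\,\phi_i$. By Assumption \ref{ass_latent}, the minimum eigenvalue $\xi_{\mathrm{min}}(x)$ of the $J \times J$ matrix $\mu(x)'\mu(x)$ satisfies $\xi_{\mathrm{min}}(x) \ge M_2 > 0$, so this matrix is positive definite and hence invertible; multiplying on the left by its inverse yields
\[
\phi_i = \{\mu(x)'\mu(x)\}^{-1}\mu(x)'\{Y_{i\cdot}(x) - \varepsilon_{i\cdot}(x)\},
\]
which is the claimed identity, valid for every $x \in \mathcal{X}$.

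This argument is essentially a one-line application of the normal equations, so there is no substantive obstacle. The only place where a hypothesis beyond the model itself is used is the invertibility of $\mu(x)'\mu(x)$, which is supplied precisely by the uniform eigenvalue lower bound $\xi_{\mathrm{min}}(x) \ge M_2$; this is also the feature of Assumption \ref{ass_latent} that will later let us control $\{\mu(x)'\mu(x)\}^{-1}$ in operator norm when this lemma is plugged into the proof of Theorem \ref{thm:est_error_factor}.
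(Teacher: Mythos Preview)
Your proof is correct and follows the same approach as the paper: stack the factor model over the pre-treatment periods to get $Y_{i\cdot}(x) = \mu(x)\phi_i + \varepsilon_{i\cdot}(x)$, then left-multiply by $\mu(x)'$ and invert $\mu(x)'\mu(x)$. Your version is slightly more explicit than the paper's in noting why $Y_{it}(x) = Y_{it}^N(x)$ for $t \le T_0$ and in citing the eigenvalue lower bound $\xi_{\mathrm{min}}(x) \ge M_2$ to justify invertibility, but the argument is identical in substance.
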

\begin{proof}
    Under the latent factor model \eqref{eq:latent_factor}, we have 
    $
        Y_{i\cdot}(x) = \mu(x)\phi_i + \varepsilon_{i\cdot}(x). 
    $
    Multiplying both sides by $\mu(x)'$ gives $\mu(x)'Y_{i\cdot}(x) = \mu(x)'\mu(x)\phi_i + \mu(x)'\varepsilon_{i\cdot}(x)$, which immediately gives the desired result. 
\end{proof}

We now prove Theorem \ref{thm:est_error_factor}. As in the proof of Theorem \ref{thm:est_error_auto}, we aim to bound $\|{Y}_{1T}^N - \hat{Y}_{1T}^N\|_\mathcal{H}$. 
Under the latent factor model \eqref{eq:latent_factor} and applying Lemma \ref{lem:factor_expression}, we can decompose the difference between
$Y_{1T}^N$ and $\hat{Y}_{1T}^N$ as 
\begin{align}
Y_{1T}^N(x) - \hat{Y}_{1T}^N(x)
&=
\left\{\phi_1(x) - \sum_{i=2}^N \hat{\gamma}_i \phi_i(x)\right\}'\mu_T(x)
+
\varepsilon_{1T}(x) - \sum_{i=2}^N \hat{\gamma}_i \varepsilon_{iT}(x) \\
& =
\left\{Y_{1\cdot}(x) - \sum_{i=2}^N \hat{\gamma}_i Y_{i\cdot}(x)\right\}{'} \mu(x)\{\mu(x)'\mu(x)\}^{-1}  \mu_T(x) \\
& \phantom{=}
+
\left\{\varepsilon_{1\cdot}(x) - \sum_{i=2}^N \hat{\gamma}_i \varepsilon_{i\cdot}(x)\right\}{'} \mu(x)\{\mu(x)'\mu(x)\}^{-1} \mu_T(x) 
+
\varepsilon_{1T}(x) - \sum_{i=2}^N \hat{\gamma}_i \varepsilon_{iT}(x). 
\end{align}
This yields 
\begin{equation}
    \|Y_{1T}^N - \hat{Y}_{1T}^N\|_{\mathcal{H}}
    \le \|\Delta_{2}\|_{\mathcal{H}} + \|\Delta_{3}\|_{\mathcal{H}} + \left\|\varepsilon_{1T} - \sum_{i=2}^N \hat{\gamma}_i \varepsilon_{iT}\right\|_{\mathcal{H}},
    \label{eq:error_bound_latent}
\end{equation}
where $\Delta_{2}, \Delta_{3}: \mathcal{X} \to \mathbb{R}$ are defined by 
\begin{gather}
    \Delta_{2}(x)
    =
    \left\{Y_{1\cdot}(x) - \sum_{i=2}^N \hat{\gamma}_i Y_{i\cdot}(x)\right\}{'} \mu(x) \{\mu(x)'\mu(x)\}^{-1} \mu_T(x), \\ 
    \Delta_{3}(x)
    =
    \left\{\varepsilon_{1\cdot}(x) - \sum_{i=2}^N \hat{\gamma}_i \varepsilon_{i\cdot}(x)\right\}{'} \mu(x)\{\mu(x)'\mu(x)\}^{-1}  \mu_T(x).
    \label{eq:delta_23_definition}
\end{gather}

We bound $\|\Delta_{2}\|_{\mathcal{H}}$. 
Fix the pre-treatment errors $\varepsilon_{it}, i=1, \ldots, N, t=1, \ldots, T_0$. 
Under Assumption \ref{ass_latent}, the maximum eigenvalue of the $J \times J$ matrix $\{T_0\mu(x)'\mu(x)\}^{-1}$ is $\xi_{\text{min}}^{-1}(x)$, which is bounded above by $M_2^{-1}$. This implies that $\|\{\mu(x)'\mu(x)\}^{-1}\|_{F} \le \sqrt{J}/(T_0M_2)$.
Applying the Cauchy–Schwarz inequality and the sub-multiplicativity of the Frobenius norm, 
\begin{align}
    |\Delta_{2}(x)| &\le \left\|Y_{1\cdot}(x) - \sum_{i=2}^N \hat{\gamma}_i Y_{i\cdot}(x)\right\|_2\|\mu(x)\{\mu(x)'\mu(x)\}^{-1}\mu_T(x)\|_2 \\ 
    & \le 
     \left\|Y_{1\cdot}(x) - \sum_{i=2}^N \hat{\gamma}_i Y_{i\cdot}(x)\right\|_2 \|\mu(x)\|_F \|\{\mu(x)'\mu(x)\}^{-1}\|_F \|\mu_T(x)\|_2 \\ 
     & \le 
     \left\|Y_{1\cdot}(x) - \sum_{i=2}^N \hat{\gamma}_i Y_{i\cdot}(x)\right\|_2 \sqrt{T_0J}M_1\cdot \frac{\sqrt{J}}{T_0M_2} \cdot \sqrt{J}M_1  \\ 
     &=
      \frac{M_1^2J^{3/2}}{M_2\sqrt{T_0}} \left\|Y_{1\cdot}(x) - \sum_{i=2}^N \hat{\gamma}_i Y_{i\cdot}(x)\right\|_2.
\end{align}
so that
\begin{align}
    \|\Delta_{2}\|_{\mathcal{H}} \le \frac{M_1^2J^{3/2}}{M_2\sqrt{T_0}} \sqrt{\int_\mathcal{X} \left\|Y_{1\cdot}(x) - \sum_{i=2}^N \hat{\gamma}_i Y_{i\cdot}(x)\right\|_2^2 d\mu(x)}
    =
    \frac{M_1^2J^{3/2}}{M_2\sqrt{T_0}} \sqrt{\sum_{t=1}^{T_0} \left\|Y_{1t} - \sum_{i=2}^N \hat{\gamma}_iY_{it}\right\|_{\mathcal{H}}^2}.
    \label{eq:bound_B_21}
\end{align}

Next we bound $\|\Delta_{3}\|_{\mathcal{H}}$. Similarly, 
\begin{equation}
    \|\Delta_{3}\|_{\mathcal{H}}
    \le \frac{M_1^2J^{3/2}}{M_2\sqrt{T_0}}\sqrt{\sum_{t=1}^{T_0} \left\|\varepsilon_{1t} - \sum_{i=2}^N \hat{\gamma}_i\varepsilon_{it}\right\|_{\mathcal{H}}^2}.
\end{equation}
For each $t = 1, \ldots, T_0$, applying the triangle inequality, 
\begin{align}
    \left\|\varepsilon_{1t} - \sum_{i=2}^N \hat{\gamma}_i \varepsilon_{it}\right\|_{\mathcal{H}}
    &=
    \left\| \sum_{i=2}^N \hat{\gamma}_i(\varepsilon_{1t} - \varepsilon_{it}) \right\|_{\mathcal{H}} 
    \le 
    \sum_{i=2}^N |\hat{\gamma}_i|\|\varepsilon_{1t} - \varepsilon_{it}\|_{\mathcal{H}} \\ 
    &
    \le 2\sigma \sum_{i=2}^N |\hat{\gamma}_i| 
    = 2\sigma \|\hat{\gamma}\|_1.
\end{align}
Hence, 
\begin{equation}
    \|\Delta_{3}\|_{\mathcal{H}} \le  
    \frac{2\sigma M_1^2J^{3/2}}{M_2\sqrt{T_0}}\|\hat{\gamma}\|_1.
    \label{eq:bound_B_22}
\end{equation}

Since the post-treatment errors $\varepsilon_{iT}, i=1, \ldots, N$ are independent of the pre-treatment errors,  we have (cf. Theorem \ref{thm:est_error_auto}),
\begin{equation}
    \left\|\varepsilon_{1T} - \sum_{i=2}^N \hat{\gamma}_i\varepsilon_{iT}\right\|_{\mathcal{H}}
    \le \delta \sigma (1+\|\hat{\gamma}\|_2)
\label{eq:bound_error_latent}
\end{equation}
with probability at least $1 - 2e^{-\delta^2/2}$. 

Putting together \eqref{eq:error_bound_latent}, \eqref{eq:bound_B_21}, \eqref{eq:bound_B_22} and \eqref{eq:bound_error_latent}, 
we have 
\begin{align}
    \|Y_{1T}^N - \hat{Y}_{1T}^N \|_{\mathcal{H}} 
        &\le \frac{M_1^2J^{3/2}}{M_2\sqrt{T_0}} \sqrt{\sum_{t=1}^{T_0} \left\|Y_{1t} - \sum_{i=2}^N \hat{\gamma}_iY_{it}\right\|_{\mathcal{H}}^2} 
        +
        \frac{2\sigma M_1^2J^{3/2}}{M_2}\|\hat{\gamma}\|_1
        +
        \delta \sigma(1+\|\hat{\gamma}\|_2)
\end{align}
with probability at least $1 - 2e^{-\delta^2/2}$ conditionally on the pre-treatment errors. 
By applying the law of iterated expectations, we obtain the desired result.

\paragraph*{Proof of Corollary \ref{cor:est_error_factor}.}
For any positive integer $K$ and $\delta > 0$, 
Theorem \ref{thm:est_error_factor} and the inequality $\|\hat{\gamma}^{\text{aug}(K)}\|_1 \le \sqrt{N-1}\|\hat{\gamma}^{\text{aug}(K)}\|_2$ imply
\begin{align}
    d(\nu_{1T}^N, \hat{\nu}_{1T}^{N, \text{aug}(K)}) 
        &\le \frac{M_1^2J^{3/2}}{M_2\sqrt{T_0}} \sqrt{\sum_{t=1}^{T_0} \left\|Y_{1t} - \sum_{i=2}^N \hat{\gamma}_iY_{it}^{N, \text{aug}(K)}\right\|_{\mathcal{H}}^2} \\
       & \phantom{\le}
        +
        \sigma\left(\frac{2\sqrt{N-1}M_1^2J^{3/2}}{M_2} + \delta\right)\|\hat{\gamma}^{\text{aug}(K)}\|_2
        +
        \delta \sigma
\end{align}
with probability at least $1 - 2e^{-\delta^2/2}$. Combining this result with Lemma \ref{lem:aug_weight_prop}, we obtain
\begin{align}
     &\|Y_{1T}^N - \hat{Y}_{1T}^{N, \mathrm{aug}(K)} \|_{\mathcal{H}} \\
    &\le \frac{M_1^2J^{3/2}\sqrt{m(K)}\lambda}{M_2\sqrt{T_0}\{(d_{\mathrm{min}}^{(K)})^2 + \lambda\}} \sqrt{\sum_{t=1}^{T_0} \left\|Y_{1t} - \sum_{i=2}^N \hat{\gamma}_i^{\mathrm{scm}}Y_{it}\right\|_{\mathcal{H}}^2} \\ 
    &\phantom{\le}+
    \sigma\left(\frac{2\sqrt{N-1}M_1^2J^{3/2}}{M_2}+\delta\right) \left\{
        \|\hat{\gamma}^{\mathrm{scm}}\|_2 +
        \frac{\sqrt{m(K)}d_{\mathrm{max}}^{(K)}}{(d_{\mathrm{min}}^{(K)})^2 + \lambda}\sqrt{\sum_{t=1}^{T_0} \left\|Y_{1t} - \sum_{i=2}^N \hat{\gamma}_i^{\mathrm{scm}}Y_{it}\right\|_{\mathcal{H}}^2}\right\} \\ 
        &\phantom{\le}+
        \delta \sigma + R_4^{(K)}
\end{align}
with probability at least $1 - 2e^{-\delta^2/2}$. 
Here, 
\begin{align}
    R_4^{(K)} = \frac{M_1^2J^{3/2}}{M_2\sqrt{T_0}} R_1^{(K)} 
    +
     \sigma\left(\frac{2\sqrt{N-1}M_1^2J^{3/2}}{M_2}+\delta\right)R_2^{(K)},
\end{align}
and since $R_1^{(K)} \to 0$ and $R_2^{(K)} \to 0$ as $K \to \infty$, it follows that $R_4^{(K)} \to 0$ as $K \to \infty$.

\subsection{Proofs for Section \ref{sec:projection}}
\paragraph*{Proof of Lemma \ref{lem:proj}.}
For (a), we first show the existence of a minimizer. Define $r := \inf_{h \in C} \|y-h\|_\mathcal{H}<\infty$. 
Choose a sequence $\{h_n\}_{n \ge 1} \subset C$ such that $\|y-h_n\|_\mathcal{H} \to r$ as $n \to \infty$. Note that for sufficiently large $n$, we have 
\[
\|h_n\|_\mathcal{H} \leq \|y\|_\mathcal{H} + \|y-h_n\|_\mathcal{H} \leq \|y\|_\mathcal{H} + (r+1)<\infty
\]
and this implies that $\{h_n\}_{n\ge 1}$ has a convergent subsequence $h_{n_k} \to h^\ast \in \mathcal{H}$. Since $C$ is closed, $h^\ast \in C$ and $\|y-h^*\|_\mathcal{H} = \lim_{k\to\infty}\|y-h_{n_k}\|_\mathcal{H} = r$.
This yields that $h^*$ a minimizer, that is, $h^\ast \in \argmin_{h \in C}\|y-h\|_\mathcal{H}$. 

Next, we show the uniqueness of the minimizer. Consider the function $f(h) = \|y-h\|_\mathcal{H}^2$. Note that $f$ is convex on $\mathcal{H}$ since for any $h_1, h_2 \in \mathcal{H}$ and $t \in (0,1)$, one can see
\[
f(th_1+(1-t)h_2) = \|t(y-h_1)+(1-t)(y-h_2)\|_\mathcal{H}^2 \leq t\|y-h_1\|_\mathcal{H}^2 + (1-t)\|y-h_2\|_\mathcal{H}^2.
\]
One can also see that in the above inequality, the equality holds if and only if $h_1 = h_2$. Suppose that $h_1^\ast, h_2^\ast \in C$ are minimizers such that $h_1^\ast \neq h_2^\ast$. From the convexity of $C$, $h_0^\ast=(h_1^\ast+h_2^\ast)/2 \in C$ and the convexity of $f$ yields 
\[
\|y-h_0^\ast\|_\mathcal{H}^2 = f(h_0^\ast) < {1 \over 2}f(h_1^\ast) + {1 \over 2}f(h_2^\ast) = {1 \over 2} \|y-h_1^\ast\|_\mathcal{H}^2 + {1 \over 2} \|y-h_2^\ast\|_\mathcal{H}^2 = r^2. 
\]
This contradicts the fact that $r$ is the minimum value of $\|y-h\|_\mathcal{H}$ over $h \in C$. Therefore, the minimizer is unique.

For (b), we use the following result that will be shown after the proof of (b). 
\begin{lem}\label{lem:inn-prod}
Let $C\subset\mathcal{H}$ be a nonempty closed convex set and $y \in \mathcal{H}$. Then for every $h\in C$,
\[
\langle y-\pi_C(y),\, h-\pi_C(y)\rangle_\mathcal{H} \leq 0.
\]
\end{lem}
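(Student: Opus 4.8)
The plan is to derive this as the first-order optimality (variational) inequality for minimizing the smooth convex functional $h \mapsto \|y-h\|_{\mathcal{H}}^2$ over the convex set $C$. Write $p := \pi_C(y)$, which exists and is unique by part~(a) of Lemma~\ref{lem:proj}, and fix an arbitrary $h \in C$. First I would use convexity of $C$ to note that the entire segment $p_t := p + t(h-p) = (1-t)p + t h$ lies in $C$ for every $t \in [0,1]$. Since $p$ minimizes $\|y-\cdot\|_{\mathcal{H}}$ (equivalently its square) over $C$, this gives $\|y-p\|_{\mathcal{H}}^2 \le \|y - p_t\|_{\mathcal{H}}^2$ for all such $t$.

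The next step is a routine expansion in the inner product: writing $y - p_t = (y-p) - t(h-p)$, we get
\[
\|y-p_t\|_{\mathcal{H}}^2 = \|y-p\|_{\mathcal{H}}^2 - 2t\,\langle y-p,\, h-p\rangle_{\mathcal{H}} + t^2\|h-p\|_{\mathcal{H}}^2.
\]
Substituting into the minimality inequality and cancelling $\|y-p\|_{\mathcal{H}}^2$ from both sides yields $2t\,\langle y-p, h-p\rangle_{\mathcal{H}} \le t^2\|h-p\|_{\mathcal{H}}^2$ for all $t \in [0,1]$. For $t \in (0,1]$ I would divide by $2t$ to obtain $\langle y-p, h-p\rangle_{\mathcal{H}} \le \tfrac{t}{2}\|h-p\|_{\mathcal{H}}^2$, and then let $t \downarrow 0$ to conclude $\langle y-p, h-p\rangle_{\mathcal{H}} \le 0$. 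Since $h \in C$ was arbitrary, this is exactly the asserted inequality.

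There is no substantial obstacle here; the argument is entirely elementary once existence and uniqueness of $\pi_C(y)$ are in hand. The only subtlety worth flagging is that one may only take the one-sided limit $t \to 0^+$: convexity of $C$ guarantees $p_t \in C$ for $t \in [0,1]$ but not for negative $t$, which is precisely why the conclusion is an inequality rather than the equality one would get from an interior stationary point.
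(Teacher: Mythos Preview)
Your proof is correct and essentially identical to the paper's: both use convexity to place the segment $(1-t)\pi_C(y)+th$ inside $C$, expand $\|y-(\cdot)\|_{\mathcal{H}}^2$ along this segment, invoke minimality of $\pi_C(y)$, divide by $t>0$, and let $t\downarrow 0$. Your closing remark about why only the one-sided limit is available is a nice addition not present in the paper.
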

From Lemma \ref{lem:inn-prod}, we have
\begin{align}
    \langle u-\pi_C(u), \pi_C(v)-\pi_C(u)\rangle_\mathcal{H} &\leq 0,
    \label{eq:proj-1}\\
    \langle v-\pi_C(v), \pi_C(u)-\pi_C(v)\rangle_\mathcal{H}  &\le 0.
    \label{eq:proj-2}
\end{align}
Combining \eqref{eq:proj-1} and \eqref{eq:proj-2}, we have
\begin{align*}
0&\geq \langle u-\pi_C(u), \pi_C(v)-\pi_C(u)\rangle_\mathcal{H} + \langle v-\pi_C(v), \pi_C(u)-\pi_C(v)\rangle_\mathcal{H}\\
&= -\langle u-\pi_C(u), \pi_C(u)-\pi_C(v)\rangle_\mathcal{H} + \langle v-\pi_C(v), \pi_C(u)-\pi_C(v)\rangle_\mathcal{H}\\
&= \langle \pi_C(u)-u, \pi_C(u)-\pi_C(v)\rangle_\mathcal{H} + \langle v-\pi_C(v), \pi_C(u)-\pi_C(v)\rangle_\mathcal{H}\\
&= \langle \{\pi_C(u) - \pi_C(v)\}-(u-v), \pi_C(u)-\pi_C(v)\rangle_\mathcal{H}. 
\end{align*}
Hence we have
\[
\langle u-v,\, \pi_C(u)-\pi_C(v)\rangle_\mathcal{H} \geq \|\pi_C(u)-\pi_C(v)\|_\mathcal{H}^2.
\]
By Cauchy-Schwarz inequality,
\[
\langle u-v,\, \pi_C(u)-\pi_C(v)\rangle_\mathcal{H} \leq \|u-v\|_\mathcal{H} \|\pi_C(u)-\pi_C(v)\|_\mathcal{H}
\]
and then we have
\[
\|\pi_C(u)-\pi_C(v)\|_\mathcal{H}^2 \leq \|u-v\|_\mathcal{H} \|\pi_C(u)-\pi_C(v)\|_\mathcal{H}.
\]
If $\pi_C(u)\neq \pi_C(v)$, we obtain
\[
\|\pi_C(u)-\pi_C(v)\|_\mathcal{H} \leq \|u-v\|_\mathcal{H}
\]
and the above inequality is trivial if $\pi_C(u)=\pi_C(v)$. Therefore, we obtain the desired result.

Now we show Lemma \ref{lem:inn-prod}. Note that $\pi_C(y) \in C$. Fix $h_0 \in C$ and for $t\in (0,1)$, define $\gamma(t)=(1-t)\pi_C(y) + t h_0\in C$. Then we have
\[
\|y-\pi_C(y)\|_\mathcal{H}^2 \leq \|y-\gamma(t)\|_\mathcal{H}^2 = \|y-\pi_C(y)\|_\mathcal{H}^2 - 2t\langle y-\pi_C(y),\, h_0 -\pi_C(y)\rangle_\mathcal{H} + t^2\|h_0-\pi_C(y)\|_\mathcal{H}^2.
\]
This yields
\[
0 \leq - 2t\langle y-\pi_C(y),\, h_0 -\pi_C(y)\rangle_\mathcal{H} + t^2\|h_0-\pi_C(y)\|_\mathcal{H}^2. 
\]
Recall $t >0$. Therefore, we have
\[
0 \le -2\langle y-\pi_C(y),\, h_0-\pi_C(y)\rangle_\mathcal{H} + t\|h_0-\pi_C(y)\|_\mathcal{H}^2.
\]
Letting $t\downarrow0$ yields $\langle y-\pi_C(y),\, h_0-\pi_C(y)\rangle_\mathcal{H} \leq 0$.

\subsection{Proofs for Section \ref{sec:covariates}}
\paragraph*{Proof of Lemma \ref{lem:weighting_expression_cov}.}
Fix $t \in \{T_0+1, \ldots, T\}$.
We denote $\hat{\theta}(x) = (\hat{\theta}_1(x)', \ldots, \hat{\theta}_{T_0}(x)')' \in \mathbb{R}^{KT_0}$. For each $x \in \mathcal{X}$, the solution to the optimization problem \eqref{eq:regression_prob_cov} is given by 
\begin{gather}
    \hat{\eta}_0(x) = \frac{1}{N-1} \sum_{i=2}^N Y_{it}(x), \\
    \hat{\theta}(x) = (\check{r}_{0\cdot}'\check{r}_{0\cdot} + \lambda I_{KT_0})^{-1} \check{r}_{0\cdot}' Y_{0t}(x), \\
    \hat{\delta}(x) = (Z_0'Z_0)^{-1}Z_0'\{Y_{0t}(x) - \check{r}_{0\cdot} \hat{\theta}(x)\},
    \label{eq:closed_form_solution_cov}
\end{gather}
where $Y_{0t}(x) = (Y_{2t}(x), \ldots, Y_{Nt}(x))' \in \mathbb{R}^{N-1}$.
By definition, 
\begin{align}
    \hat{Y}_{1t}^{N, \text{cov}}(x)
    =
    (\hat{\gamma}^{\text{scm}}){'}Y_{0t}(x)
    +
    (r_{1\cdot} - r_{0\cdot}'\hat{\gamma}^{\text{scm}})'\hat{\theta}(x)
    +
    (Z_1 - Z_0'\hat{\gamma}^{\text{scm}})'\hat{\delta}(x).
    \label{eq:augmented_estimator_decomposition_cov}
\end{align}
Since $\check{r}_{1\cdot} = r_{1\cdot} - r_{0\cdot}'Z_{0}'(Z_0'Z_0)^{-1}Z_1$ and $\check{r}_{0\cdot} = r_{0\cdot} - Z_0 (Z_0'Z_0)^{-1}Z_0'r_{0\cdot}$, we have
\begin{align}
    {r}_{1\cdot} - r_{0\cdot}'\hat{\gamma}^{\text{scm}}
    &=
    \check{r}_{1\cdot} + {r}_{0\cdot}'Z_0'(Z_0'Z_0)^{-1} Z_1
    -
    \{\check{r}_{0\cdot} + Z_0(Z_0'Z_0)^{-1}Z_{0}'r_{0\cdot}\}' \hat{\gamma}^{\text{scm}} \\ 
    &= 
    \check{r}_{1\cdot}  - \check{r}_{0\cdot}'\hat{\gamma}^{\text{scm}}
    +
    r_{0\cdot}'Z_0 (Z_0'Z_0)^{-1}(Z_1 - Z_0' \hat{\gamma}^{\text{scm}}),
\end{align}
which implies
\begin{equation}
    ({r}_{1\cdot} - r_{0\cdot}'\hat{\gamma}^{\text{scm}})'\hat{\theta}(x) = 
    (\check{r}_{1\cdot} - \check{r}_{0\cdot}'\hat{\gamma}^{\text{scm}})'\hat{\theta}(x) 
    +
    (Z_1 - Z_0'\hat{\gamma}^{\text{scm}})' (Z_0'Z_0)^{-1}Z_0'\check{r}_{0\cdot}\hat{\theta}(x).
    \label{eq:augmented_estimator_decomposition_cov_part1}
\end{equation}
Moreover, from \eqref{eq:closed_form_solution_cov}, 
\begin{align}
    (Z_1 - Z_0'\hat{\gamma}^{\text{scm}})'\hat{\delta}(x)
    &=
    (Z_1 - Z_0'\hat{\gamma}^{\text{scm}})'(Z_0'Z_0)^{-1}Z_0' Y_{0t}(x)   \\
    & \phantom{=}
    -
    (Z_1 - Z_0'\hat{\gamma}^{\text{scm}})'(Z_0'Z_0)^{-1}Z_0'\check{r}_{0\cdot} \hat{\theta}(x). 
    \label{eq:augmented_estimator_decomposition_cov_part2}
\end{align}
Combining \eqref{eq:augmented_estimator_decomposition_cov}, \eqref{eq:augmented_estimator_decomposition_cov_part1} and \eqref{eq:augmented_estimator_decomposition_cov_part2}, we obtain 
\begin{align}
    \hat{Y}_{1t}^{N, \text{cov}}(x)
    &=
     (\hat{\gamma}^{\text{scm}}){'}Y_{0t}(x)
    +
    (\check{r}_{1\cdot} - \check{r}_{0\cdot}'\hat{\gamma}^{\text{scm}})'\hat{\theta}(x) 
    +
    (Z_1 - Z_0'\hat{\gamma}^{\text{scm}})'(Z_0'Z_0)^{-1}Z_0' Y_{0t}(x) \\ 
    &= 
    (\hat{\gamma}^{\text{scm}}){'}Y_{0t}(x)
    +
     (\check{r}_{1\cdot} - \check{r}_{0\cdot}'\hat{\gamma}^{\text{scm}})'(\check{r}_{0\cdot}'\check{r}_{0\cdot} + \lambda I_{KT_0})^{-1} \check{r}_{0\cdot}' Y_{0t}(x) \\
     &\phantom{=}
     +
     (Z_1 - Z_0'\hat{\gamma}^{\text{scm}})'(Z_0'Z_0)^{-1}Z_0' Y_{0t}(x). 
\end{align}
This completes the proof of the desired result.

\paragraph*{Proof of Lemma \ref{lem:cov_weight_prop}.}
For notational convenience, we often omit the subscript $K$ throughout the proof.
For instance, we write $r_{0\cdot}^{(K)}$ and $\check{r}_{0\cdot}^{(K)}$ simply as $r_{0\cdot}$ and $\check{r}_{0\cdot}$, respectively.
For part (a), note that for any $K$, 
\[
Z_0' \check{r}_{0\cdot} 
=
Z_0'\{r_{0\cdot}^{} - Z_0(Z_0'Z_0)^{-1}Z_0'r_{0\cdot}^{}\} =
Z_0'r_{0\cdot}^{} - (Z_0'Z_0)(Z_0'Z_0)^{-1}Z_0'r_{0\cdot}^{}
=
0.
\]
Hence, 
\begin{align}
    &Z_1 - Z_0'\hat{\gamma}^{\text{cov}(K)} \\
    &=
    Z_1  
    - Z_0'\{\hat{\gamma}^{\text{scm}} + \check{r}_{0\cdot}^{}(\check{r}_{0\cdot}'\check{r}_{0\cdot}^{} + \lambda I_{KT_0})^{-1}(\check{r}_{1\cdot}^{}- \check{r}_{0\cdot}^{}\hat{\gamma}^{\text{scm}}) +Z_0(Z_0'Z_0)^{-1}(Z_1 - Z_0'\hat{\gamma}^{\text{scm}})\} \\ 
    &= Z_1  
    - Z_0' \hat{\gamma}^{\text{scm}} 
    -
    Z_0'Z_0(Z_0'Z_0)^{-1}(Z_1 - Z_0'\hat{\gamma}^{\text{scm}}) \\ 
    &= 0. 
\end{align}
This implies that $Z_1 - \sum_{i=2}^N \hat{\gamma}_i^{\text{cov}(K)}Z_i = 0$.

We next prove part (c). Fix an arbitrary positive integer $K$. Lemma \ref{lem:weighting_expression_cov} implies that
\begin{equation}
    \hat{\gamma}^{\text{cov}(K)}
    =
    \hat{\gamma}^{\text{scm}} 
    +
    \check{r}_{0\cdot}(\check{r}_{0\cdot}'\check{r}_{0\cdot} + \lambda I _{KT_0})^{-1}(\check{r}_{1\cdot} - \check{r}_{0\cdot}\hat{\gamma}^{\text{scm}}) + Z_0(Z_0'Z_0)^{-1}(Z_1 - Z_0'\hat{\gamma}^{\text{scm}}). 
\end{equation}
Let $\check{r}_{0\cdot} = \check{U}\check{D}\check{V}'$ be the singular value decomposition of the matrix $\check{r}_{0\cdot}$. 
Here, $\check{U}$ and $\check{V}$ are $(N-1) \times \check{m}(K)$ and $(KT_0) \times \check{m}(K)$ orthogonal matrices, respectively. 
$\check{D} = \text{diag}\{\check{d}_1^{(K)}, \ldots, \check{d}_{\check{m}(K)}^{(K)}\}$ is an $\check{m}(K) \times \check{m}(K)$ diagonal matrix, where $\check{d}_1^{(K)}, \ldots , \check{d}_{\check{m}(K)}^{(K)}$ are the singular values of $\check{r}_{0\cdot}$ such that 
$\check{d}_1^{(K)} \ge \ldots \ge \check{d}_{\check{m}(K)}^{(K)}$. 
In Section \ref{sec:covariates}, we denote $\check{d}_1^{(K)}$ and $\check{d}_{m(K)}^{(K)}$ as $\check{d}_{\text{max}}^{(K)}$ and $\check{d}_{\text{min}}^{(K)}$, respectively.
Analogously to the proof of Lemma \ref{lem:aug_weight_prop}, we can show that 
\begin{equation}
    (\check{r}_{0\cdot}'\check{r}_{0\cdot} + \lambda I_{K T_0})^{-1} = \check{V} \check{E} \check{V}',
    \label{eq:rtilde_inv}
\end{equation}
where 
\[
\check{E} = \text{diag}\left\{\left\{(\check{d}_1^{(K)})^2 + \lambda\right\}^{-1}, \ldots, \left\{(\check{d}_{\check{m}(K)}^{(K)})^2 + \lambda\right\}^{-1}\right\}.
\]
Hence, we have 
\begin{align}
    \hat{\gamma}^{\text{cov}(K)} 
    &= \hat{\gamma}^{\text{scm}} + (\check{U}\check{D}\check{V}')\check{V}\check{E}\check{V}'(\check{r}_{1\cdot} - \check{r}_{0\cdot}'\hat{\gamma}^{\text{scm}})
    +
    Z_0(Z_0'Z_0)^{-1}(Z_1 - Z_0'\hat{\gamma}^{\text{scm}}) \\ 
    & = 
    \hat{\gamma}^{\text{scm}} + 
    \check{U}\check{D} \check{E}\check{V}'(\check{r}_{1\cdot} - \check{r}_{0\cdot}'\hat{\gamma}^{\text{scm}})
    +
    Z_0(Z_0'Z_0)^{-1}(Z_1 - Z_0'\hat{\gamma}^{\text{scm}}).
\end{align}
By the  triangle inequality and the sub-multiplicativity of the Frobenius norm, 
\begin{align}
    \|\hat{\gamma}^{\text{cov}(K)}\|_2
    &\le 
    \|\hat{\gamma}^{\text{scm}}\|_2
    +
    \|\check{D}\check{E}\|_{F}\|\check{r}_{1\cdot} - \check{r}_{0\cdot}'\hat{\gamma}^{\text{scm}}\|_2
    +
    \|Z_0(Z_0'Z_0)^{-1}(Z_1 - Z_0'\hat{\gamma}^{\text{scm}})\|_2 \\ 
    & \le
    \|\hat{\gamma}^{\text{scm}}\|_2
    +
    \sqrt{\sum_{j=1}^{\check{m}(K)}\left(\frac{\check{d}_j^{(K)}}{(\check{d}_j^{(K)})^2 + \lambda}\right)^2}\|\check{r}_{1\cdot} - \check{r}_{0\cdot}'\hat{\gamma}^{\text{scm}}\|_2 +
    \|Z_0(Z_0'Z_0)^{-1}(Z_1 - Z_0'\hat{\gamma}^{\text{scm}})\|_2 \\ 
    & \le 
    \|\hat{\gamma}^{\text{scm}}\|_2
    +
    \frac{\sqrt{\check{m}(K)}\check{d}_{\text{max}}^{(K)}}{(\check{d}_{\text{min}}^{(K)})^2 + \lambda} \|\check{r}_{1\cdot}^{(K)} - \check{r}_{0\cdot}^{(K)}{'}\hat{\gamma}^{\text{scm}}\|_2 
    +
    \|Z_0(Z_0'Z_0)^{-1}(Z_1 - Z_0'\hat{\gamma}^{\text{scm}})\|_2.
    \label{eq:gamma_cov_bound}
\end{align}
Note that 
\begin{equation}
    \check{r}_{1\cdot} - \check{r}_{0\cdot}'\hat{\gamma}^{\text{scm}}
    =
    r_{1\cdot} - r_{0\cdot}'\hat{\gamma}^{\text{scm}} 
    -
    r_{0\cdot}' Z_0(Z_0'Z_0)^{-1}(Z_1 -  Z_0'\hat{\gamma}^{\text{scm}}) ,
\end{equation}
which yields
\begin{equation}
    \|\check{r}_{1\cdot}^{(K)} - \check{r}_{0\cdot}^{(K)}{'}\hat{\gamma}^{\text{scm}}\|_2 
    \le 
    \|r_{1\cdot}^{(K)} - r_{0\cdot}^{(K)}{'}\hat{\gamma}^{\text{scm}}\|_2
    +
    \|r_{0\cdot}^{(K)}\|_F\|Z_0(Z_0'Z_0)^{-1}(Z_1 -  Z_0'\hat{\gamma}^{\text{scm}})\|_2.
    \label{eq:eval_rtilde_scm}
\end{equation}
Combining \eqref{eq:gamma_cov_bound} and \eqref{eq:eval_rtilde_scm}, we obtain 
\[
\|\hat{\gamma}^{\mathrm{cov}(K)}\|_2 
        \le 
        F_2(\lambda)+R_6^{(K)},
\]
where 
\begin{align}
        F_2(\lambda)& = \|\hat{\gamma}^{\mathrm{scm}}\|_2 
        +
        \frac{\sqrt{\check{m}(K)}\check{d}_{\text{max}}^{(K)}}{(\check{d}_{\text{min}}^{(K)})^2 + \lambda} \sqrt{\sum_{t=1}^{T_0}\left\|Y_{1t} - \sum_{i=2}^N \hat{\gamma}_i^{\mathrm{scm}} Y_{it}\right\|_{\mathcal{H}}^2} \\
        & \phantom{\le}+ 
        \frac{\sqrt{\check{m}(K)}\check{d}_{\text{max}}^{(K)}}{(\check{d}_{\text{min}}^{(K)})^2 + \lambda} 
\sqrt{\sum_{t=1}^{T_0}\sum_{i=2}^N\|Y_{it}\|_{\mathcal{H}}^2}
        \|Z_0(Z_0'Z_0)^{-1}(Z_1 - Z_0'\hat{\gamma}^{\mathrm{scm}})\|_2 \\ 
        & \phantom{\le} + 
        \|Z_0(Z_0'Z_0)^{-1}(Z_1 - Z_0'\hat{\gamma}^{\mathrm{scm}})\|_2
    \end{align}
and 
\begin{align}
    R_6^{(K)}  
    &= \frac{\sqrt{\check{m}(K)}\check{d}_{\text{max}}^{(K)}}{(\check{d}_{\text{min}}^{(K)})^2 + \lambda}
    \left\{\|r_{1\cdot}^{(K)} - r_{0\cdot}^{(K)}{'}\hat{\gamma}^{\text{scm}}\|_2 -  \sqrt{\sum_{t=1}^{T_0}\left\|Y_{1t} - \sum_{i=2}^N \hat{\gamma}_i^{\mathrm{scm}} Y_{it}\right\|_{\mathcal{H}}^2} \right\} \\
    &\phantom{\le}+
    \frac{\sqrt{\check{m}(K)}\check{d}_{\text{max}}^{(K)}}{(\check{d}_{\text{min}}^{(K)})^2 + \lambda}
    \left\{\|r_{0\cdot}^{(K)}\|_F - \sqrt{\sum_{t=1}^{T_0}\sum_{i=2}^N\|Y_{it}\|_{\mathcal{H}}^2}\right\}\|Z_0(Z_0'Z_0)^{-1}(Z_1 - Z_0'\hat{\gamma}^{\mathrm{scm}})\|_2.
\end{align}
Note that, under Assumption \ref{ass:singular_values_cov}, 
\[
\frac{\sqrt{\check{m}(K)}\check{d}_{\text{max}}^{(K)}}{(\check{d}_{\text{min}}^{(K)})^2 + \lambda}
\le 
\frac{\sqrt{N-1}C_2}{c_2^2+\lambda}.
\]
Moreover, as $K \to \infty$, 
\[
\|r_{1\cdot}^{(K)} - r_{0\cdot}^{(K)}{'}\hat{\gamma}^{\text{scm}}\|_2 \to \sqrt{\sum_{t=1}^{T_0}\left\|Y_{1t} - \sum_{i=2}^N \hat{\gamma}_i^{\mathrm{scm}} Y_{it}\right\|_{\mathcal{H}}^2}, \quad 
\|r_{0\cdot}^{(K)}\|_F \to \sqrt{\sum_{t=1}^{T_0}\sum_{i=2}^N\|Y_{it}\|_{\mathcal{H}}^2}.
\]
Hence, $R_6^{(K)} \to 0$ as $K \to \infty$.

We finally prove part (b). Fix an arbitrary positive integer $K$. Analogously to the proof of Lemma \ref{lem:aug_weight_prop}, we can show that 
\begin{align}
    \sum_{t=1}^{T_0} \left\|Y_{1t} - \sum_{i=2}^N \hat{\gamma}_i^{\text{cov}(K)}Y_{it}\right\|_{\mathcal{H}}^2 
    =
    \| r_{1\cdot}^{(K)} - r_{0\cdot}^{(K)}{'}\hat{\gamma}^{\mathrm{cov}(K)}\|_2^2 
     +
     \sum_{t=1}^{T_0} \sum_{k=K+1}^\infty \left(r_{1,t,k} - \sum_{i=2}^N \hat{\gamma}_i^{\mathrm{cov}(K)}r_{i,t,k} \right)^2,
\end{align}
which implies that
\begin{align}
    \sqrt{\sum_{t=1}^{T_0} \left\|Y_{1t} - \sum_{i=2}^N \hat{\gamma}_i^{\text{cov}(K)}Y_{it}\right\|_{\mathcal{H}}^2} 
    &\le 
    \| r_{1\cdot}^{(K)} - r_{0\cdot}^{(K)}{'}\hat{\gamma}^{\mathrm{cov}(K)}\|_2 \\
    & \phantom{\le}
    +
    \sqrt{\sum_{t=1}^{T_0} \sum_{k=K+1}^\infty \left(r_{1,t,k} - \sum_{i=2}^N \hat{\gamma}_i^{\mathrm{cov}(K)}r_{i,t,k} \right)^2}.
    \label{eq:prefit_cov_bound}
\end{align}
To bound $ \| r_{1\cdot}^{(K)} - r_{0\cdot}^{(K)}{'}\hat{\gamma}^{\mathrm{cov}(K)}\|_2 $, observe that 
\begin{align}
  &r_{1\cdot}^{(K)} - r_{0\cdot}^{(K)}{'}\hat{\gamma}^{\mathrm{cov}(K)} \\
  &=
  r_{1\cdot} - r_{0\cdot}'\{\hat{\gamma}^{\text{scm}} + \check{r}_{0\cdot}(\check{r}_{0\cdot}'\check{r}_{0\cdot} + \lambda I_{KT_0})^{-1}(\check{r}_{1\cdot} - \check{r}_{0\cdot}\hat{\gamma}^{\text{scm}}) + Z_0(Z_0'Z_0)^{-1}(Z_1 - Z_0' \hat{\gamma}^{\text{scm}})\} \\
  &=
  r_{1\cdot} - r_{0\cdot}'\hat{\gamma}^{\text{scm}}  - r_{0\cdot}'\check{r}_{0\cdot} (\check{r}_{0\cdot}'\check{r}_{0\cdot} + \lambda I_{KT_0})^{-1}(\check{r}_{1\cdot} - \check{r}_{0\cdot}'\hat{\gamma}^{\text{scm}}) - r_{0\cdot}'Z_0(Z_0'Z_0)^{-1}(Z_1 - Z_0' \hat{\gamma}^{\text{scm}})\} \\
  &=
  r_{1\cdot} - r_{0\cdot}'Z_0(Z_0'Z_0)^{-1}Z_1 - \{r_{0\cdot}' -r_{0\cdot}'Z_0(Z_0'Z_0)^{-1}Z_0' \}\hat{\gamma}^{\text{scm}} \\
  &\phantom{=}-   r_{0\cdot}'\check{r}_{0\cdot} (\check{r}_{0\cdot}'\check{r}_{0\cdot} + \lambda I_{KT_0})^{-1}(\check{r}_{1\cdot} -\check{r}_{0\cdot}\hat{\gamma}^{\text{scm}}) \\
  & = 
  \check{r}_{1\cdot} - \check{r}_{0\cdot}'\hat{\gamma}^{\text{scm}} - r_{0\cdot}'\check{r}_{0\cdot} (\check{r}_{0\cdot}'\check{r}_{0\cdot} + \lambda I_{KT_0})^{-1}(\check{r}_{1\cdot} -\check{r}_{0\cdot}\hat{\gamma}^{\text{scm}}) \\
  &= 
  \check{r}_{1\cdot} - \check{r}_{0\cdot}'\hat{\gamma}^{\text{scm}} - \check{r}_{0\cdot}'\check{r}_{0\cdot} (\check{r}_{0\cdot}'\check{r}_{0\cdot} + \lambda I_{KT_0})^{-1}(\check{r}_{1\cdot} -\check{r}_{0\cdot}\hat{\gamma}^{\text{scm}})  \\
  &
  =\{I_{KT_0} - \check{r}_{0\cdot}'\check{r}_{0\cdot}(\check{r}_{0\cdot}'\check{r}_{0\cdot} + \lambda I_{KT_0})^{-1}\}(\check{r}_{1\cdot} - \check{r}_{0\cdot}\hat{\gamma}^{\text{scm}}) \\ 
  &= \lambda (\check{r}_{0\cdot}'\check{r}_{0\cdot} + \lambda I_{KT_0})^{-1}(\check{r}_{1\cdot} - \check{r}_{0\cdot}\hat{\gamma}^{\text{scm}}).
\end{align}
Analogously to the proof of Lemma \ref{lem:aug_weight_prop}, combining \eqref{eq:rtilde_inv} and the final expression in the display above, we obtain
\begin{equation}
     \| r_{1\cdot}^{(K)} - r_{0\cdot}^{(K)}{'}\hat{\gamma}^{\mathrm{cov}(K)}\|_2
     =
     \|\lambda \check{E}\check{V}(\check{r}_{1\cdot} - \check{r}_{0\cdot}'\hat{\gamma}^{\text{scm}})\|_2.
     \label{eq:prefit_cov_expression}
\end{equation}
By the sub-multiplicativity of the Frobenius norm, the right-hand side is bounded as
\begin{align}
    \|\lambda \check{E}\check{V}(\check{r}_{1\cdot} - \check{r}_{0\cdot}'\hat{\gamma}^{\text{scm}})\|_2^2
    &\le 
    \|\lambda \check{E}\|_2^2\|\check{r}_{1\cdot} - \check{r}_{0\cdot}'\hat{\gamma}^{\text{scm}}\|_2^2 \\ 
    & =
    \sum_{j=1}^{\check{m}(K)} \left(\frac{\lambda}{(\check{d}_j^{(K)})^2+\lambda}\right)^2\|\check{r}_{1\cdot} - \check{r}_{0\cdot}'\hat{\gamma}^{\text{scm}}\|_2^2 \\
    & \le \check{m}(K)\left(\frac{\lambda}{(\check{d}_{\text{min}}^{(K)})^2+\lambda}\right)^2\|\check{r}_{1\cdot} - \check{r}_{0\cdot}'\hat{\gamma}^{\text{scm}}\|_2^2.
    \label{eq:prefit_cov_expression_bound}
\end{align}
Hence,  
\begin{equation}
    \| r_{1\cdot}^{(K)} - r_{0\cdot}^{(K)}{'}\hat{\gamma}^{\mathrm{cov}(K)}\|_2
    \le 
    \frac{\sqrt{\check{m}(K)}\lambda}{(\check{d}_{\text{min}}^{(K)})^2+\lambda}
    \|\check{r}_{1\cdot} - \check{r}_{0\cdot}'\hat{\gamma}^{\text{scm}}\|_2.
    \label{eq:bound_fit_cov}
\end{equation}
Combining \eqref{eq:prefit_cov_bound} and \eqref{eq:bound_fit_cov} with \eqref{eq:eval_rtilde_scm}, we obtain 
\[
\sqrt{\sum_{t=1}^{T_0}\left\|Y_{1t} - \sum_{i=2}^N \hat{\gamma}_i^{\mathrm{cov}(K)}Y_{it}\right\|_{\mathcal{H}}^2 }
        \le F_1(\lambda)
        +
        R_5^{(K)},
\]
where
\begin{align}
        F_1(\lambda) &= \frac{\sqrt{\check{m}(K)}\lambda}{(\check{d}_{\text{min}}^{(K)})^2+\lambda}\sqrt{\sum_{t=1}^{T_0}\left\|Y_{1t} - \sum_{i=2}^N \hat{\gamma}_i^{\mathrm{scm}} Y_{it}\right\|_{\mathcal{H}}^2} \\
        & \phantom{\le} +
        \frac{\sqrt{\check{m}(K)}\lambda}{(\check{d}_{\text{min}}^{(K)})^2+\lambda}\sqrt{\sum_{t=1}^{T_0}\sum_{i=2}^N\|Y_{it}\|_{\mathcal{H}}^2}
        \|Z_0(Z_0'Z_0)^{-1}(Z_1 - Z_0'\hat{\gamma}^{\mathrm{scm}})\|_2,
\end{align}
and 
\begin{align}
    R_5^{(K)}  
    &= \frac{\sqrt{\check{m}(K)}\lambda}{(\check{d}_{\text{min}}^{(K)})^2 + \lambda}
    \left\{\|r_{1\cdot}^{(K)} - r_{0\cdot}^{(K)}{'}\hat{\gamma}^{\text{scm}}\|_2 -  \sqrt{\sum_{t=1}^{T_0}\left\|Y_{1t} - \sum_{i=2}^N \hat{\gamma}_i^{\mathrm{scm}} Y_{it}\right\|_{\mathcal{H}}^2} \right\} \\
    &\phantom{\le}+
    \frac{\sqrt{\check{m}(K)}\lambda}{(\check{d}_{\text{min}}^{(K)})^2 + \lambda}
    \left\{\|r_{0\cdot}^{(K)}\|_F - \sqrt{\sum_{t=1}^{T_0}\sum_{i=2}^N\|Y_{it}\|_{\mathcal{H}}^2}\right\}\|Z_0(Z_0'Z_0)^{-1}(Z_1 - Z_0'\hat{\gamma}^{\mathrm{scm}})\|_2 \\ 
    &\phantom{\le}+
    \sqrt{\sum_{t=1}^{T_0} \sum_{k=K+1}^\infty \left(r_{1,t,k} - \sum_{i=2}^N \hat{\gamma}_i^{\mathrm{cov}(K)}r_{i,t,k} \right)^2}.
\end{align}
Note that, under Assumption \ref{ass:singular_values_cov}, 
\[
\frac{\sqrt{\check{m}(K)}\lambda}{(\check{d}_{\text{min}}^{(K)})^2 + \lambda}
\le 
\frac{\sqrt{N-1}\lambda}{c_2^2+\lambda}.
\]
Moreover, as $K \to \infty$, 
\[
\|r_{1\cdot}^{(K)} - r_{0\cdot}^{(K)}{'}\hat{\gamma}^{\text{scm}}\|_2 \to \sqrt{\sum_{t=1}^{T_0}\left\|Y_{1t} - \sum_{i=2}^N \hat{\gamma}_i^{\mathrm{scm}} Y_{it}\right\|_{\mathcal{H}}^2}, \quad 
\|r_{0\cdot}^{(K)}\|_F \to \sqrt{\sum_{t=1}^{T_0}\sum_{i=2}^N\|Y_{it}\|_{\mathcal{H}}^2}.
\]
Furthermore, part (c) implies that there exists a constant $B_2 > 0$ such that $\|\hat{\gamma}^{\text{cov}(K)}\|_2 \le B_2$
for any $K$. Hence, analogously to the proof of Lemma \ref{lem:aug_weight_prop}, we can show that as $K \to \infty$, 
\begin{equation}
    \sqrt{\sum_{t=1}^{T_0} \sum_{k=K+1}^\infty \left(r_{1,t,k} - \sum_{i=2}^N \hat{\gamma}_i^{\mathrm{cov}(K)}r_{i,t,k} \right)^2} \to 0.
    \label{eq:reminder_term}
\end{equation}
Hence, $R_5^{(K)} \to 0$ as $K \to \infty$.

\paragraph*{Proof of Theorem \ref{thm:est_error_auto_cov}.}
As in the proof of Theorem \ref{thm:est_error_auto}, we aim to bound $\|Y_{1T}^N - \hat{Y}_{1T}^N\|_{\mathcal{H}}$. Under Assumption \ref{ass_auto_cov}, the difference  between $Y_{1T}^N$ and $\hat{Y}_{1T}^N$ can be decomposed as 
\begin{align}
     Y_{1T}^N(x) - \hat{Y}_{1T}^N(x) 
    &=
    \sum_{t=1}^{T_0} \left\langle \beta_t(x, \cdot), Y_{1t} - \sum_{i=2}^N\hat{\gamma}_i Y_{it} \right\rangle_{\mathcal{H}}
    + 
    \sum_{\ell=1}^p \eta_\ell(x)\left(Z_{1\ell} - \sum_{i=2}^N \hat{\gamma}_iZ_{i\ell}\right) \\ 
    & \phantom{=} +
    \varepsilon_{1T}(x) - \sum_{i=2}^N \hat{\gamma}_i \varepsilon_{iT}(x).
    \label{eq:decompose_auto_cov}
\end{align} 
This yields the bound
\begin{align}
    \|Y_{1T}^N - \hat{Y}_{1T}^N\|_{\mathcal{H}}
    \le 
    \|\Delta_{1}\|_{\mathcal{H}} + \|\Delta_{4}\|_{\mathcal{H}} + 
    \left\|\varepsilon_{1T} - \sum_{i=2}^N \hat{\gamma}_i \varepsilon_{iT} \right\|_{\mathcal{H}},
    \label{eq:decomposed_bound_auto_cov}
\end{align}
where $\Delta_1$ is defined in  \eqref{eq:delta_1_definition}, and $\Delta_4: \mathcal{X} \to \mathbb{R}$ is defined by
\begin{align} 
    \Delta_{4}(x) = \sum_{\ell=1}^p \eta_\ell(x)\left(Z_{1\ell} - \sum_{i=2}^N \hat{\gamma}_iZ_{i\ell}\right).
\end{align}
As shown in the proof of Theorem \ref{thm:est_error_auto}, we have 
\begin{align}
    \|\Delta_{1}\|_{\mathcal{H}} \le 
    \sqrt{\sum_{t=1}^{T_0} \|\beta_t\|_{\mathcal{H}}^2} \sqrt{\sum_{t=1}^{T_0} \left\|Y_{1t} - \sum_{i=2}^N \hat{\gamma}_iY_{it}\right\|_{\mathcal{H}}^2}.
    \label{eq:bound_delta_1_ver2}
\end{align}
Furthermore, it holds that
\begin{equation}
     \left\|\varepsilon_{1T} - \sum_{i=2}^N \hat{\gamma}_i \varepsilon_{iT}\right\|_{\mathcal{H}}
    \le \delta \sigma(1+\|\hat{\gamma}\|_2)
    \label{eq:bound_error_ver2}
\end{equation}
with probability at least $1 - 2e^{-\delta^2/2}$.

We now bound $\|\Delta_4\|_{\mathcal{H}}$. For each $x \in \mathcal{X}$, by the Cauchy–Schwarz inequality, 
\begin{equation}
    |\Delta_4(x)|
    \le 
    \sqrt{\sum_{\ell=1}^p \eta_\ell(x)^2}\left\|Z_1 - \sum_{i=2}^N \hat{\gamma}_i Z_i \right\|_{2}.
\end{equation}
Hence,  
\begin{align}
    \|\Delta_4\|_{\mathcal{H}}
    &=
    \sqrt{\int_\mathcal{X}|\Delta_4(x)|^2 d\mu(x)}
    \le 
    \sqrt{\int_\mathcal{X}\sum_{\ell = 1}^p \eta_\ell(x)^2 d\mu(x) } \left\|Z_1 - \sum_{i=2}^N \hat{\gamma}_i Z_i \right\|_{2} \\
    &=
    \sqrt{\sum_{\ell=1}^p \|\eta_\ell\|_{\mathcal{H}}^2}
    \left\|Z_1 - \sum_{i=2}^N \hat{\gamma}_i Z_i \right\|_{2}.
    \label{bound_delta_4}
\end{align}
Combining \eqref{eq:decomposed_bound_auto_cov}, \eqref{eq:bound_delta_1_ver2}, \eqref{eq:bound_error_ver2} and \eqref{bound_delta_4}, we obtain the desired result. 

\paragraph*{Proof of Corollary \ref{cor:est_error_auto_cov}.}
The proof follows along similar lines as that of Corollary \ref{cor:error_auto}, and is therefore omitted.

\paragraph*{Proof of Theorem \ref{thm:est_error_factor_cov}.}
We begin by stating a lemma that expresses the vector of factor
loadings in terms of the post-treatment outcomes, covairates and post-treatment errors.

\begin{lem}
    Suppose Assumption \ref{ass_latent_cov} holds. Then, for each $i=1, \ldots, N$, 
    \[
    \phi_i  =  \{\mu(x)'\mu(x)\}^{-1}\mu(x)'\{Y_{i\cdot}(x) - \eta(x)Z_i - \varepsilon_{i\cdot}(x)\}
\]
for any $x \in \mathcal{X}$,
where $Y_{i\cdot}(x) = (Y_{i1}(x), \ldots, Y_{iT_0}(x))' \in  \mathbb{R}^{T_0}$ and $\varepsilon_{i\cdot}(x) = (\varepsilon_{i1}(x), \ldots, \varepsilon_{iT_0}(x))' \in  \mathbb{R}^{T_0}$. 
\label{lem:factor_expression_cov}
\end{lem}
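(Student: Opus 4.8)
The plan is to follow the proof of Lemma \ref{lem:factor_expression} almost verbatim, simply carrying along the additional covariate term. First I would fix the convention that $\eta(x)$ denotes the $T_0 \times p$ matrix whose $(t,\ell)$ entry is $\eta_{\ell t}(x)$, for $t = 1, \ldots, T_0$ and $\ell = 1, \ldots, p$ (the covariate analogue of the matrix $\mu(x)$ introduced in Assumption \ref{ass_latent_cov}). With this notation, stacking the latent factor model with covariates \eqref{eq:latent_factor_cov} over the pre-treatment periods $t = 1, \ldots, T_0$ yields, for each $x \in \mathcal{X}$,
\[
Y_{i\cdot}(x) = \mu(x)\phi_i + \eta(x)Z_i + \varepsilon_{i\cdot}(x).
\]

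Next I would left-multiply both sides by $\mu(x)'$ to obtain
\[
\mu(x)'Y_{i\cdot}(x) = \mu(x)'\mu(x)\phi_i + \mu(x)'\eta(x)Z_i + \mu(x)'\varepsilon_{i\cdot}(x).
\]
By Assumption \ref{ass_latent_cov}, the minimum eigenvalue $\xi_{\text{min}}(x)$ of the $J \times J$ matrix $\mu(x)'\mu(x)$ satisfies $\xi_{\text{min}}(x) \ge M_2 > 0$, so $\mu(x)'\mu(x)$ is invertible; multiplying through by $\{\mu(x)'\mu(x)\}^{-1}$ and solving for $\phi_i$ gives
\[
\phi_i = \{\mu(x)'\mu(x)\}^{-1}\mu(x)'\{Y_{i\cdot}(x) - \eta(x)Z_i - \varepsilon_{i\cdot}(x)\},
\]
which is exactly the claimed identity.

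There is no real obstacle here: the argument is a single linear-algebra manipulation, and the only points needing care are pinning down the convention for $\eta(x)$ (so that row $t$ of $\eta(x)Z_i$ reproduces $\sum_{\ell=1}^p \eta_{\ell t}(x)Z_{i\ell}$) and invoking the uniform positive-definiteness of $\mu(x)'\mu(x)$ supplied by Assumption \ref{ass_latent_cov}. I would then use this lemma in the proof of Theorem \ref{thm:est_error_factor_cov} exactly as Lemma \ref{lem:factor_expression} is used for Theorem \ref{thm:est_error_factor}: substitute it into the decomposition of $Y_{1T}^N(x) - \hat{Y}_{1T}^N(x)$ so that the factor-loading difference $\phi_1 - \sum_{i=2}^N \hat{\gamma}_i \phi_i$ is re-expressed through differences of observed pre-treatment outcomes, covariates, and pre-treatment errors, after which each resulting term is bounded using the boundedness of $\mu$, the eigenvalue bound $M_2$, the covariate-fit term, and a Pinelis-type concentration inequality for the post-treatment errors.
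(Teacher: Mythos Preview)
Your proof is correct and follows essentially the same approach as the paper: stack the model \eqref{eq:latent_factor_cov} over pre-treatment periods, left-multiply by $\mu(x)'$, and invert. The only added detail in your write-up is the explicit definition of the $T_0 \times p$ matrix $\eta(x)$ and the appeal to $\xi_{\text{min}}(x) \ge M_2 > 0$ for invertibility, both of which the paper leaves implicit.
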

\begin{proof}
    Under the latent factor model \eqref{eq:latent_factor_cov}, we have 
    $
        Y_{i\cdot}(x) = \mu(x)\phi_i 
        + 
        \eta(x)Z_i
        +
        \varepsilon_{i\cdot}(x). 
    $
    Multiplying both sides by $\mu(x)'$ gives $\mu(x)'Y_{i\cdot}(x) = \mu(x)'\mu(x)\phi_i + 
    \mu(x)'\eta(x)Z_i
    +
    \mu(x)'\varepsilon_{i\cdot}(x)$, which immediately gives the desired result.
\end{proof}

We now prove Theorem \ref{thm:est_error_factor_cov}.
As in the proof of Theorem \ref{thm:est_error_auto}, we aim to bound $\|Y_{1T}^N - \hat{Y}_{1T}^N\|_{\mathcal{H}}$. 
Under the latent factor model \eqref{eq:latent_factor_cov} and applying Lemma \ref{lem:factor_expression_cov}, we can decompose the difference between $Y_{1T}^N$ and $\hat{Y}_{1T}^N$ as 
 \begin{align}
Y_{1T}^N(x) - \hat{Y}_{1T}^N(x)
&=
\left(\phi_1 - \sum_{i=2}^N \hat{\gamma}_i \phi_i\right){'}\mu_T(x)
+
\sum_{\ell=1}^p \eta_{\ell T}(x)\left(Z_{1\ell} - \sum_{i=2}^N \hat{\gamma}_i Z_{i\ell}\right) \\
& \phantom{=} +
\varepsilon_{1T}(x) - \sum_{i=2}^N \hat{\gamma}_i \varepsilon_{iT}(x) \\ 
& = 
\left\{Y_{1\cdot}(x) - \sum_{i=2}^N \hat{\gamma}_i Y_{i\cdot}(x)\right\}' \mu(x)\{\mu(x)'\mu(x)\}^{-1} \mu_T(x) \\ 
&\phantom{=}
+ 
\left(Z_1 - \sum_{i=2}^N \hat{\gamma}_iZ_i\right)'[\eta_T(x) + \eta(x)'\mu(x)\{\mu(x)'\mu(x)\}^{-1}\mu_T(x)]
\\
& \phantom{=}
+
\left\{\varepsilon_{1\cdot}(x) - \sum_{i=2}^N \hat{\gamma}_i \varepsilon_{i\cdot}(x)\right\}' \mu(x) \{\mu(x)'\mu(x)\}^{-1}\mu_T(x) 
+
\varepsilon_{1T}(x) - \sum_{i=2}^N \hat{\gamma}_i \varepsilon_{iT}(x).
\end{align}
This yields 
\begin{align}
    \|Y_{1T}^{N} - \hat{Y}_{1T}^N\|_{\mathcal{H}}
    \le \|\Delta_2\|_{\mathcal{H}} + \|\Delta_5\|_{\mathcal{H}}+ \|\Delta_3\|_{\mathcal{H}} + \left\|\varepsilon_{1T} - \sum_{i=2}^N \hat{\gamma}_i \varepsilon_{iT}\right\|_{\mathcal{H}}, 
    \label{eq:error_decompose_latent_cov}
\end{align}
where $\Delta_2$ and $\Delta_3$ are defined in \eqref{eq:delta_23_definition}, and $\Delta_5: \mathcal{X} \to \mathbb{R}$ is defined by
\[
\Delta_5(x) = 
\left(Z_1 - \sum_{i=2}^N \hat{\gamma}_iZ_i\right){'}[\eta_T(x) + \eta(x)'\mu(x)\{\mu(x)'\mu(x)\}^{-1}\mu_T(x)].
\]

Fix the pre-treatment errors $\varepsilon_{it}, i=1, \ldots N, t=1, \ldots T_0$. As shown in the proof of Theorem \ref{thm:est_error_factor}, 
\begin{align}
    \|\Delta_{2}\|_{\mathcal{H}} \le 
    \frac{M_1^2J^{3/2}}{M_2\sqrt{T_0}} \sqrt{\sum_{t=1}^{T_0} \left\|Y_{1t} - \sum_{i=2}^N Y_{it}\right\|_{\mathcal{H}}^2}, \quad
    \|\Delta_{3}\|_{\mathcal{H}}
    \le \frac{2\sigma M_1^2J^{3/2}}{M_2 \sqrt{T_0}} \|\hat{\gamma}\|_1,
    \label{eq:bound_delta_23_ver2}
\end{align}
and as shown in the proof of Theorem \ref{thm:est_error_auto}, it holds that 
\begin{equation}
    \left\|\varepsilon_{1T} - \sum_{i=2}^N \hat{\gamma}_i \varepsilon_{iT}\right\|_{\mathcal{H}}
    \le \delta \sigma( 1+ \|\hat{\gamma}\|_2)
    \label{eq:bound_error_ver4}
\end{equation}
with probability at least $1 - 2e^{-\delta ^2/2}$.

We now bound $\|\Delta_5\|_{\mathcal{H}}$. 
For each $x \in \mathcal{X}$, by the Cauchy–Schwarz inequality and the sub-multiplicativity of the Frobenius norm, 
\begin{align}
    |\Delta_5(x)| 
    &\le 
    \left\|Z_1 - \sum_{i=2}^N \hat{\gamma}_i Z_i \right\|_2
    \|\eta_T(x)+ \eta(x)'\mu(x)\{\mu(x)'\mu(x)\}^{-1}\mu_T(x)\|_2 \\ 
    & \le 
    [\|\eta_T(x)\|_2 + \|\eta(x)\|_F\|\mu(x)\|_F \|\{\mu(x)'\mu(x)\}^{-1}\|_F\|\mu_T(x)\|_2 ]
    \left\|Z_1 - \sum_{i=2}^N \hat{\gamma}_i Z_i \right\|_2 \\ 
    & \le 
    \left(\|\eta_T(x)\|_2 + \frac{M_1^2J^{3/2}}{M_2\sqrt{T_0}}\|\eta(x)\|_F \right)
    \left\|Z_1 - \sum_{i=2}^N \hat{\gamma}_i Z_i \right\|_2.
\end{align}
Hence, 
\begin{align}
    \|\Delta_5\|_{\mathcal{H}}
    &\le 
    \sqrt{\int_\mathcal{X}\left\{\|\eta_T(x)\|_2 + \frac{M_1^2J^{3/2}}{M_2\sqrt{T_0}}\|\eta(x)\|_F \right\}^2 d\mu(x)}\left\|Z_1 - \sum_{i=2}^N \hat{\gamma}_i Z_i \right\|_2 \\ 
    & \le 
    \sqrt{\int_\mathcal{X}2\left\{\|\eta_T(x)\|_2^2 + \left(\frac{M_1^2J^{3/2}}{M_2\sqrt{T_0}}\right)^2\|\eta(x)\|_F^2 \right\} d\mu(x)}\left\|Z_1 - \sum_{i=2}^N \hat{\gamma}_i Z_i \right\|_2 \\ 
    & \le 
    \sqrt{2} \max\left\{1, \frac{M_1^2J^{3/2}}{M_2\sqrt{T_0}}\right\} \sqrt{\int_\mathcal{X} \{\|\eta_T(x)\|_2^2 + \|\eta(x)\|_F^2\} d\mu(x)} \left\|Z_1 - \sum_{i=2}^N \hat{\gamma}_i Z_i \right\|_2 \\ 
    &= 
    \sqrt{2} \max\left\{1, \frac{M_1^2J^{3/2}}{M_2\sqrt{T_0}}\right\}\sqrt{\sum_{t=1}^T \sum_{\ell = 1}^p\|\eta_{\ell t}\|_{\mathcal{H}}^2}\left\|Z_1 - \sum_{i=2}^N \hat{\gamma}_i Z_i \right\|_2.
    \label{eq:bound_delta_5}
\end{align}
Combining 
\eqref{eq:error_decompose_latent_cov}, 
\eqref{eq:bound_delta_23_ver2}, \eqref{eq:bound_error_ver4} and \eqref{eq:bound_delta_5}, we have 
\begin{align}
    d(\nu_{1T}^N, \hat{\nu}_{1T}^N) 
        &\le \frac{M_1^2J^{3/2}}{M_2\sqrt{T_0}} \sqrt{\sum_{t=1}^{T_0} \left\|Y_{1t} - \sum_{i=2}^N \hat{\gamma}_iY_{it}\right\|_{\mathcal{H}}^2} \\
        & \phantom{\le}
        +
        \sqrt{2}\max\left\{1, \frac{M_1^2J^{3/2}}{M_2\sqrt{T_0}}\right\} \sqrt{\sum_{t=1}^T \sum_{\ell=1}^p \|\eta_{\ell t}\|_{\mathcal{H}}^2}\left\|Z_1 - \sum_{i=2}^N \hat{\gamma}_i Z_i\right\|_2 \\
        & \phantom{\le}
        +
        \frac{2\sigma M_1^2J^{3/2}}{M_2\sqrt{T_0}}\|\hat{\gamma}\|_1
        +
        \delta \sigma(1+\|\hat{\gamma}\|_2)
\end{align}
with probability at least $1 - 2e^{-\delta^2/2}$ conditionally on the pre-treatment errors. 
By applying the law of iterated expectations, we obtain the desired result. 

\paragraph*{Proof of Corollary \ref{cor:est_error_factor_cov}.}
The proof follows along similar lines as that of Corollary \ref{cor:est_error_factor}, and is therefore omitted.

\section{Supplementary Figures for Empirical Analysis}
\subsection{Supplementary Figures for Section \ref{susec:abortion}}
\label{subsec:supplementary_abortion}
Figure \ref{fig:data_asfr} displays the ASFR curves for East Germany and the control countries from $1956$ to $1975$. 
Figure \ref{fig:synthetic_asfr_until1963} compares the observed ASFR curves of East Germany with the corresponding ASFR curves of the FSC and ridge augmented FSC units over the periods from 1956 to 1963.
Figure \ref{fig:placebo_asfr} presents the detailed results of the placebo permutation test. 

\begin{figure}[t]
    \centering
    \includegraphics[width=\linewidth]{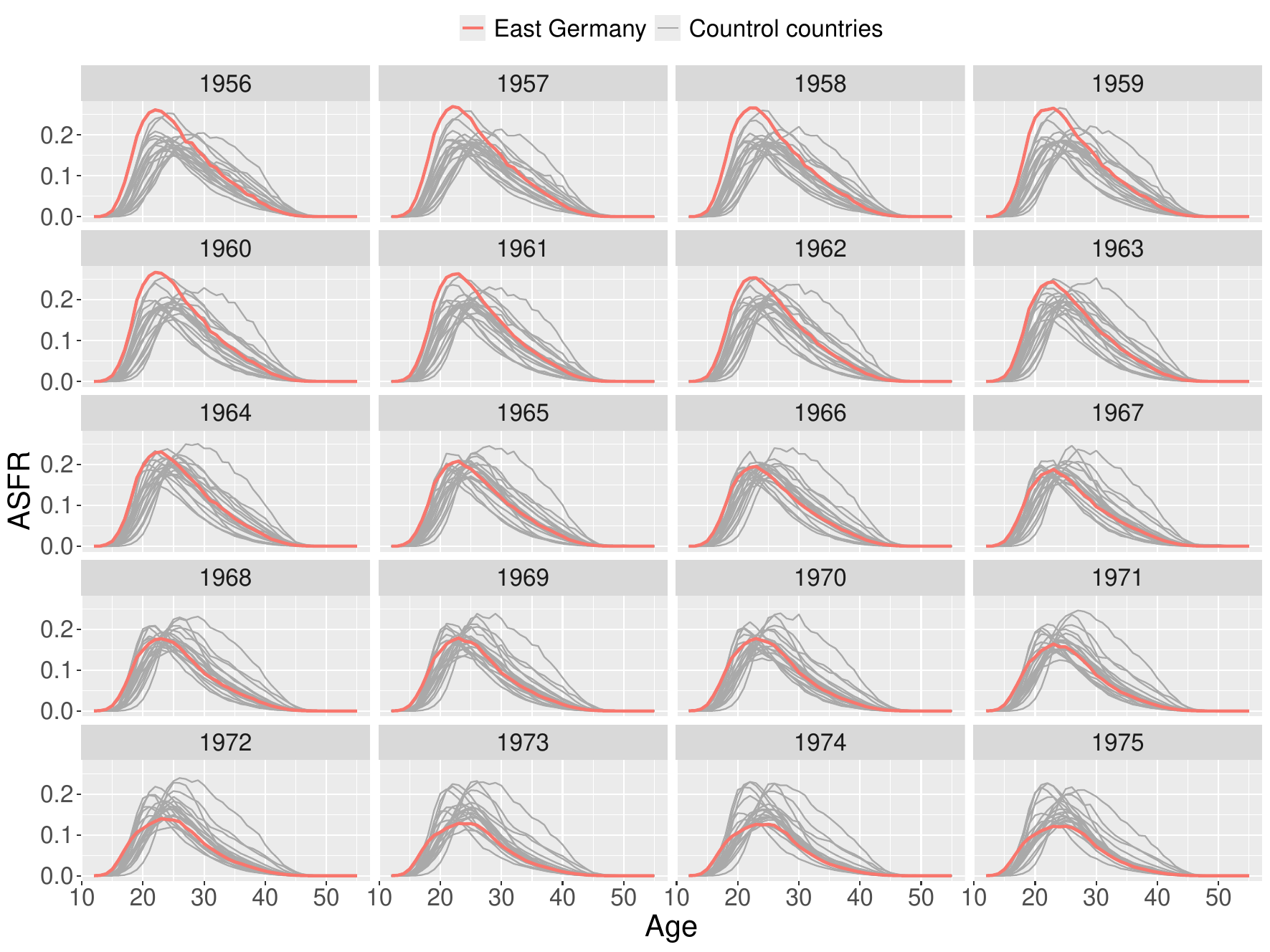}
    \caption{ASFR curves for East Germany and the control countries from 1956 to 1975.}
    \label{fig:data_asfr}
\end{figure}

\begin{figure}[t]
    \centering
    \includegraphics[width=\linewidth]{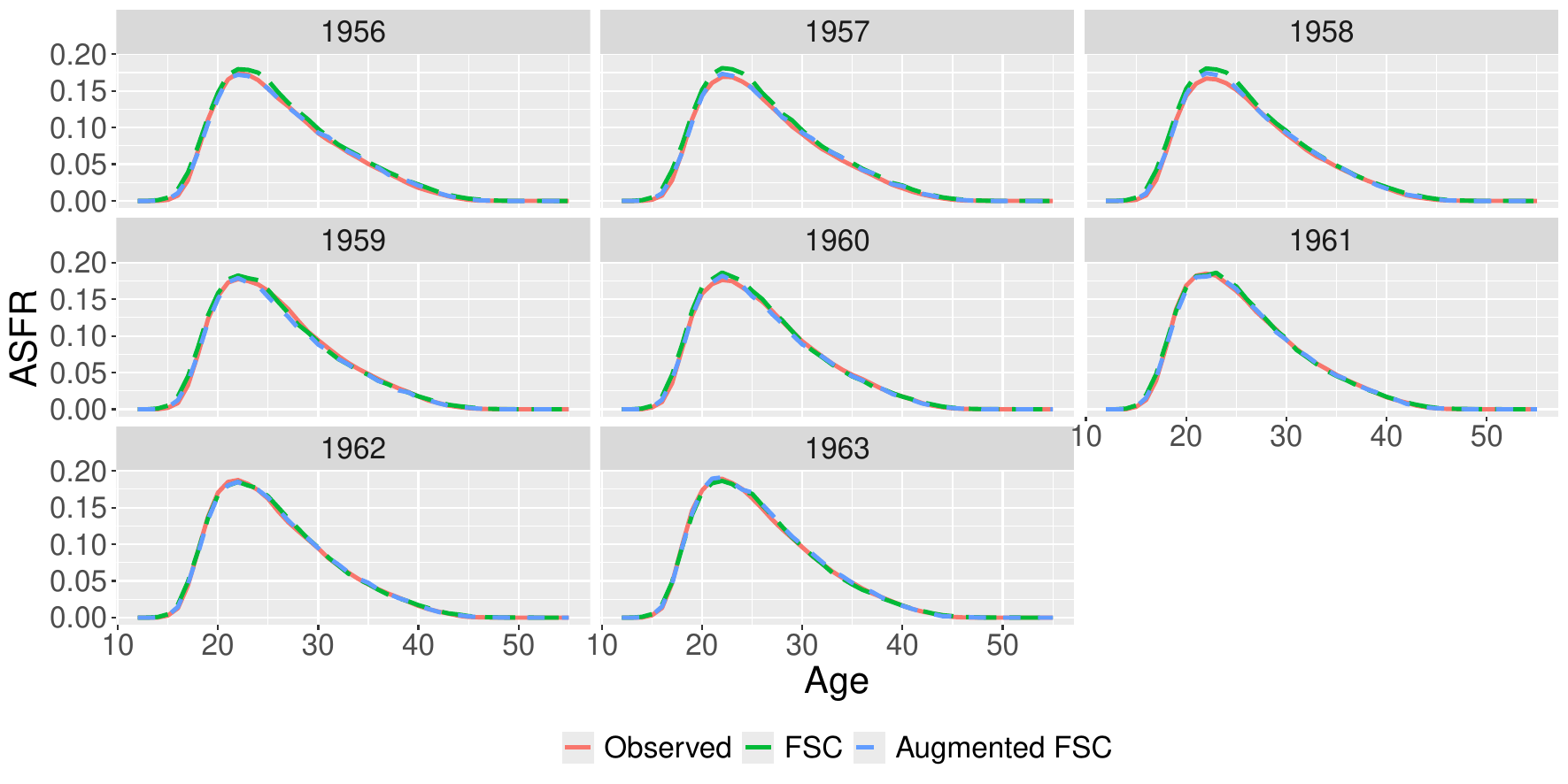}
    \caption{Observed ASFR curves for East Germany and the corresponding ASFR curves for the FSC and ridge augmented FSC units during pre-treatment periods (1956-1963).}
    \label{fig:synthetic_asfr_until1963}
\end{figure}

\begin{figure}[t]
    \centering
    \includegraphics[width=0.8\linewidth]{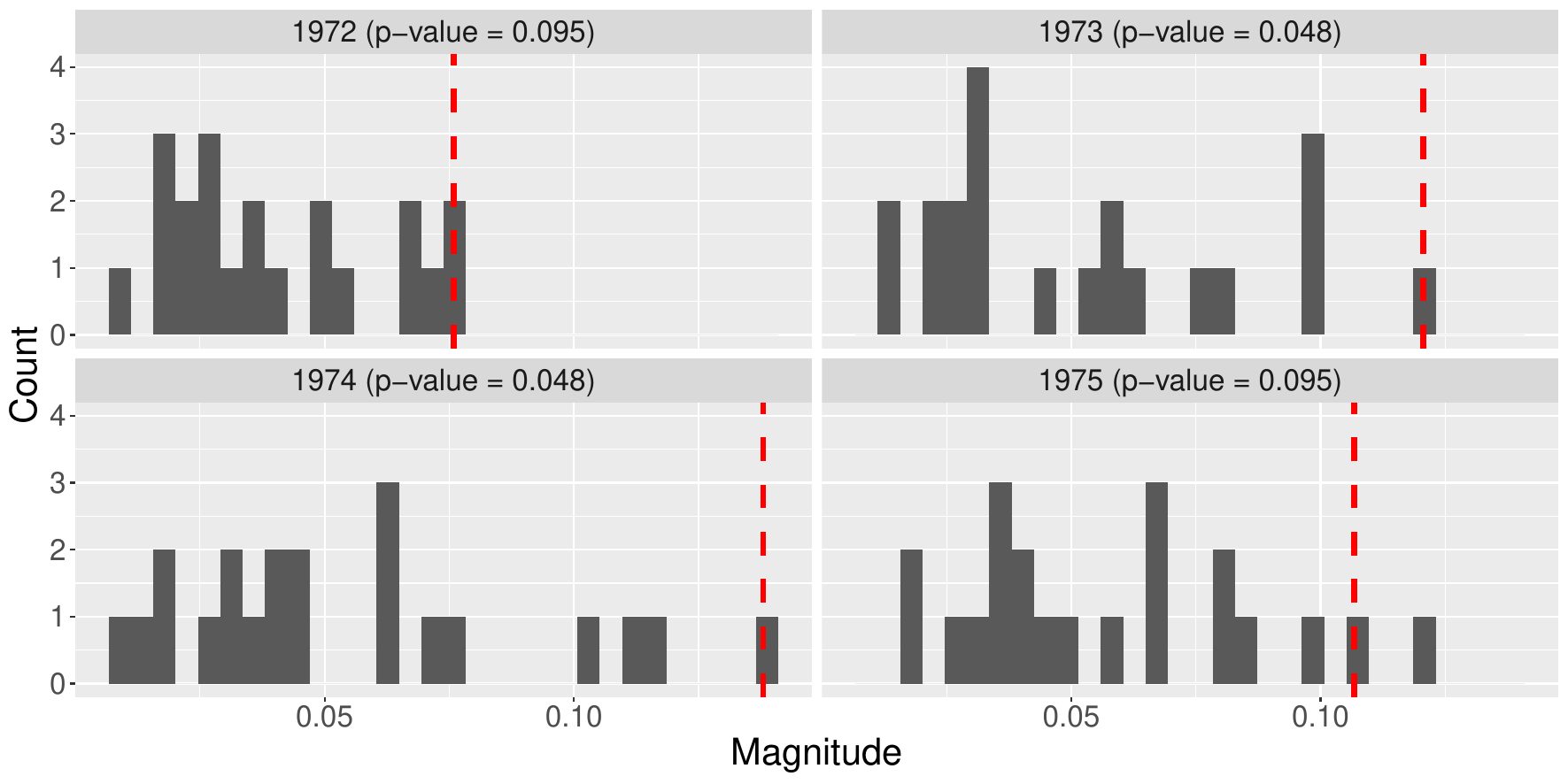}
    \caption{Results of the placebo permutation tests for the ASFR data based on the ridge augmented FSC method. The histograms depict the magnitudes of the causal effects for all units, with the red dashed lines indicating the corresponding magnitudes for the treated unit.}
    \label{fig:placebo_asfr}
\end{figure}

\subsection{Supplementary Figures for Section \ref{subsec:mortality}}
\label{subsec:supplementary_mortality}
Figure \ref{fig:data_aad_quantile} displays the quantile functions of the age-at-death distributions for Russia and the control countries from 1970 to 1999. 
Figure \ref{fig:data_aad_density} shows their density functions. 
Figure \ref{fig:synthetic_aad_until1984} compares the quantile functions of the observed distributions for Russia with the corresponding quantile functions of the FSC and augmented FSC units over the periods from 1970 to 1984.
Figure \ref{fig:placebo_aad} presents the detailed results of the placebo permutation test. 

\begin{figure}[h]
    \centering
    \includegraphics[width=\linewidth]{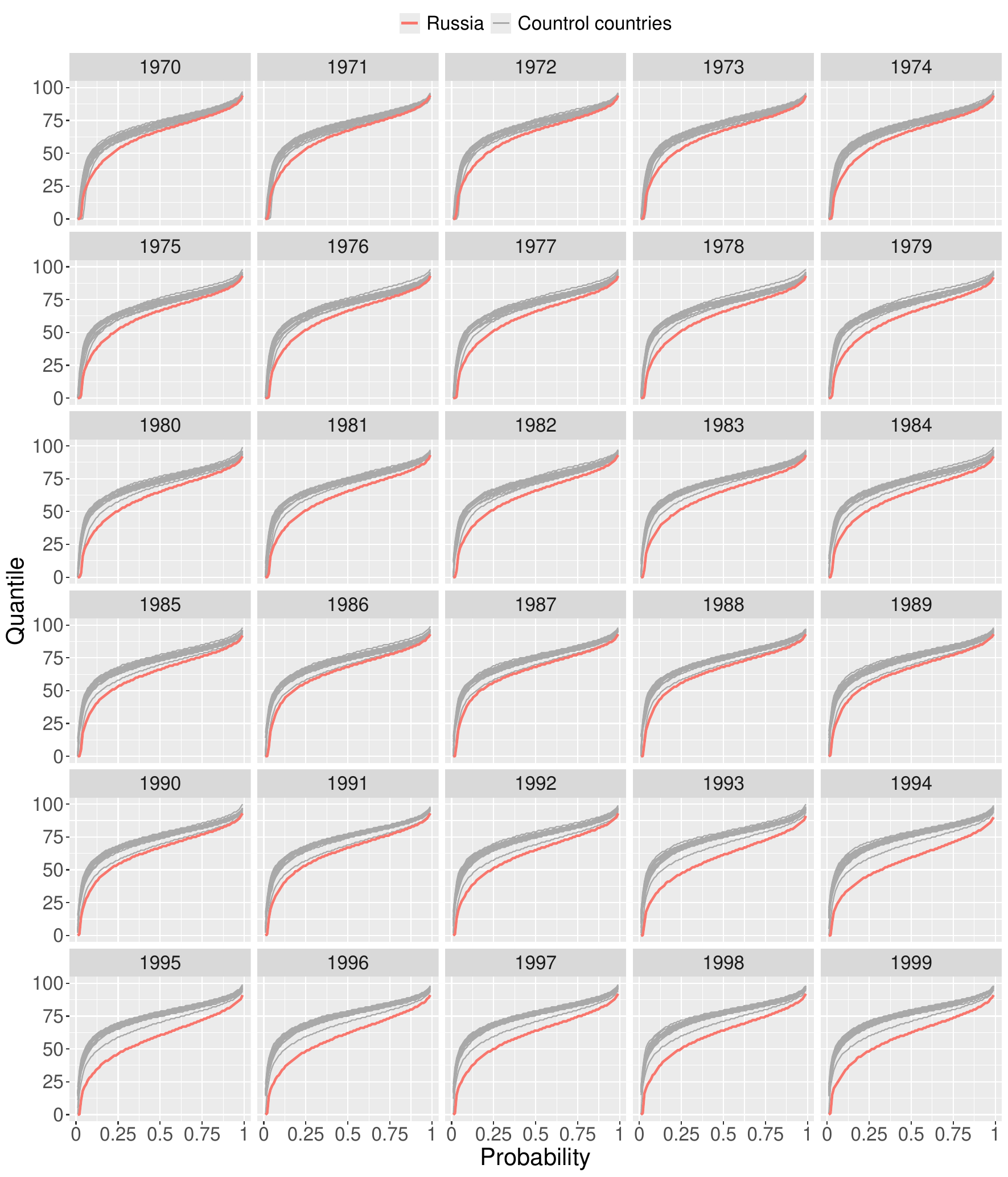}
    \caption{The quantile functions of the age-at-death distributions for Russia and the control countries from 1970 to 1999.}
    \label{fig:data_aad_quantile}
\end{figure}

\begin{figure}[h]
    \centering
    \includegraphics[width=\linewidth]{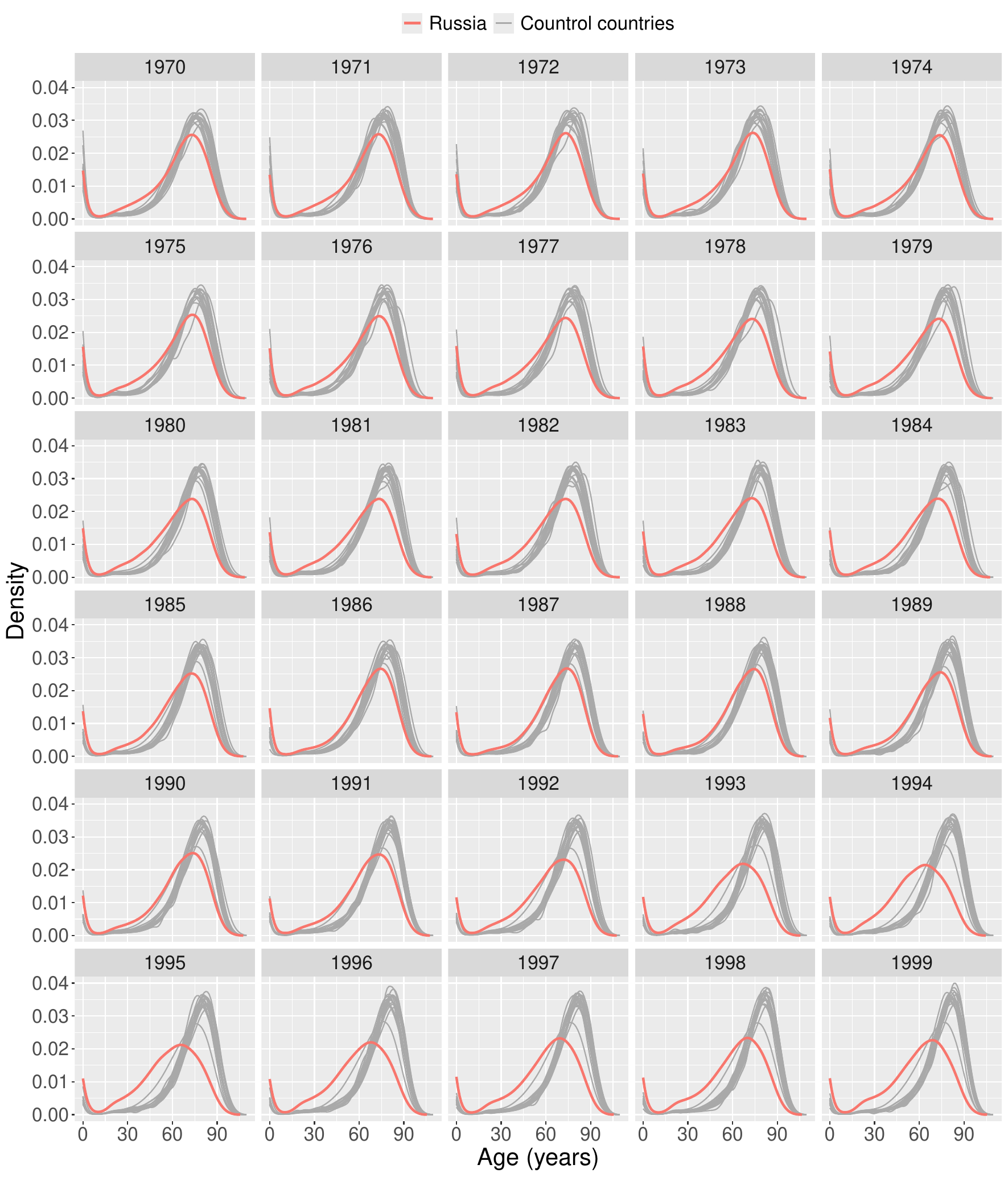}
    \caption{The density functions of age-at-death distributions for Russia and the control countries from 1970 to 1999.}
    \label{fig:data_aad_density}
\end{figure}

\begin{figure}[h]
    \centering
    \includegraphics[width=\linewidth]{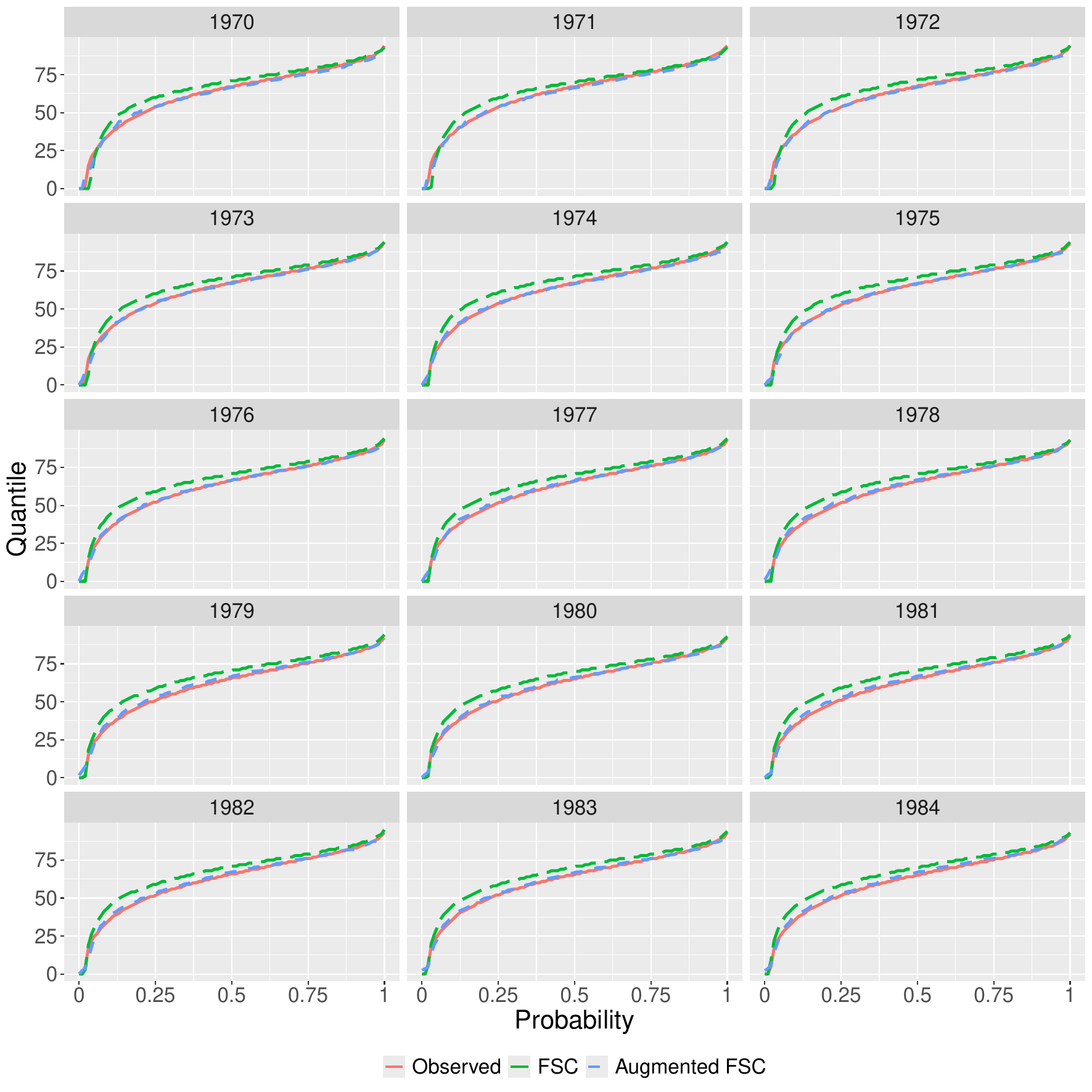}
    \caption{The quantile functions of the observed age-at-death distributions for Russia and the corresponding quantile functions obtained from the FSC and ridge augmented FSC units during pre-treatment periods (1970-1984)}
    \label{fig:synthetic_aad_until1984}
\end{figure}

\begin{figure}[h]
    \centering
    \includegraphics[width=\linewidth]{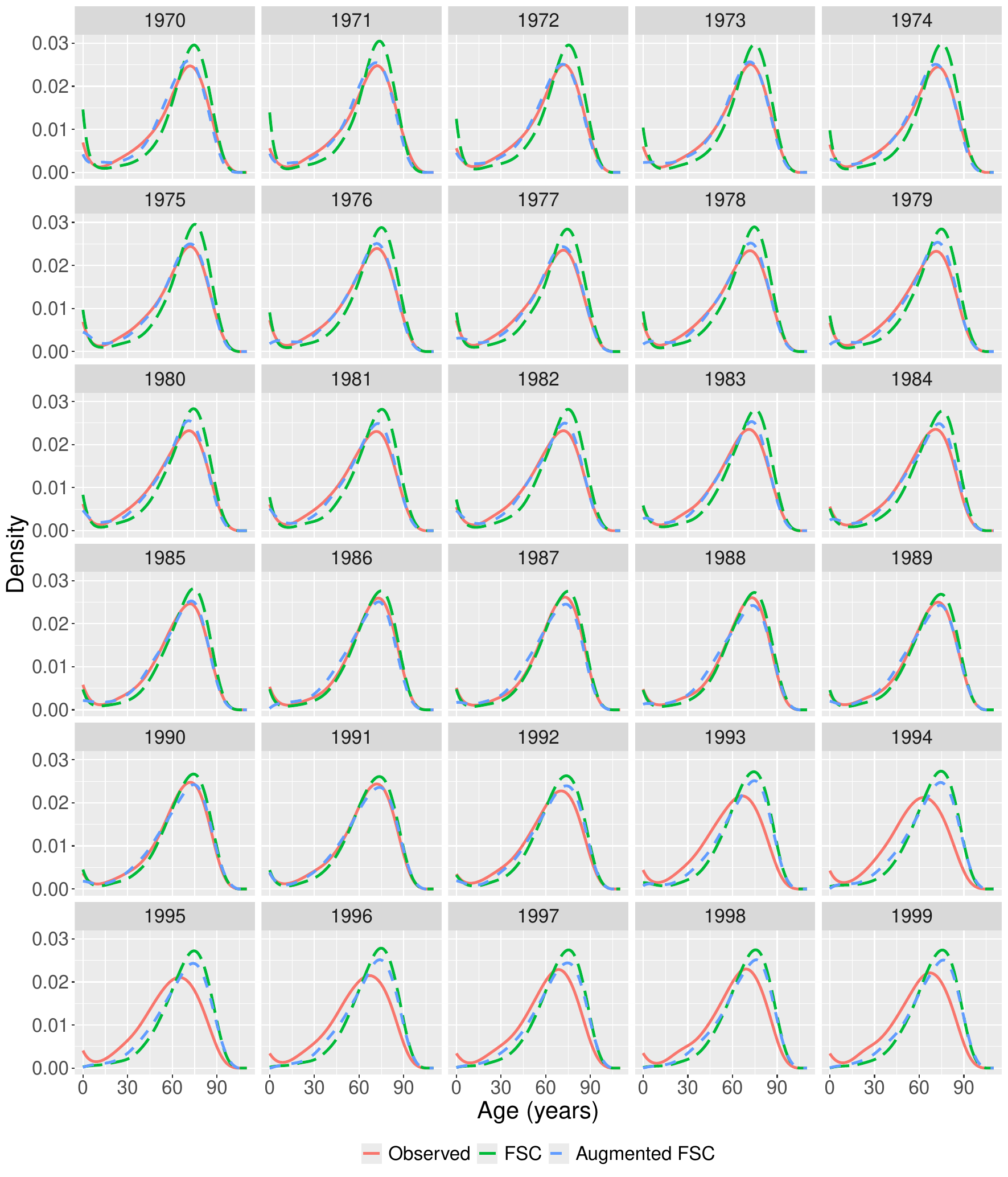}
    \caption{The observed density functions of the age-at-death distributions for Russia and the corresponding density functions obtained from the FSC and ridge augmented FSC units. The periods from 1980 to 1990 are the pre-treatment periods, while 1991 to 1999 are the post-treatment periods.}
    \label{fig:synthtic_outcome_aad_density}
\end{figure}

\begin{figure}[h]
    \centering
    \includegraphics[width=0.9\linewidth]{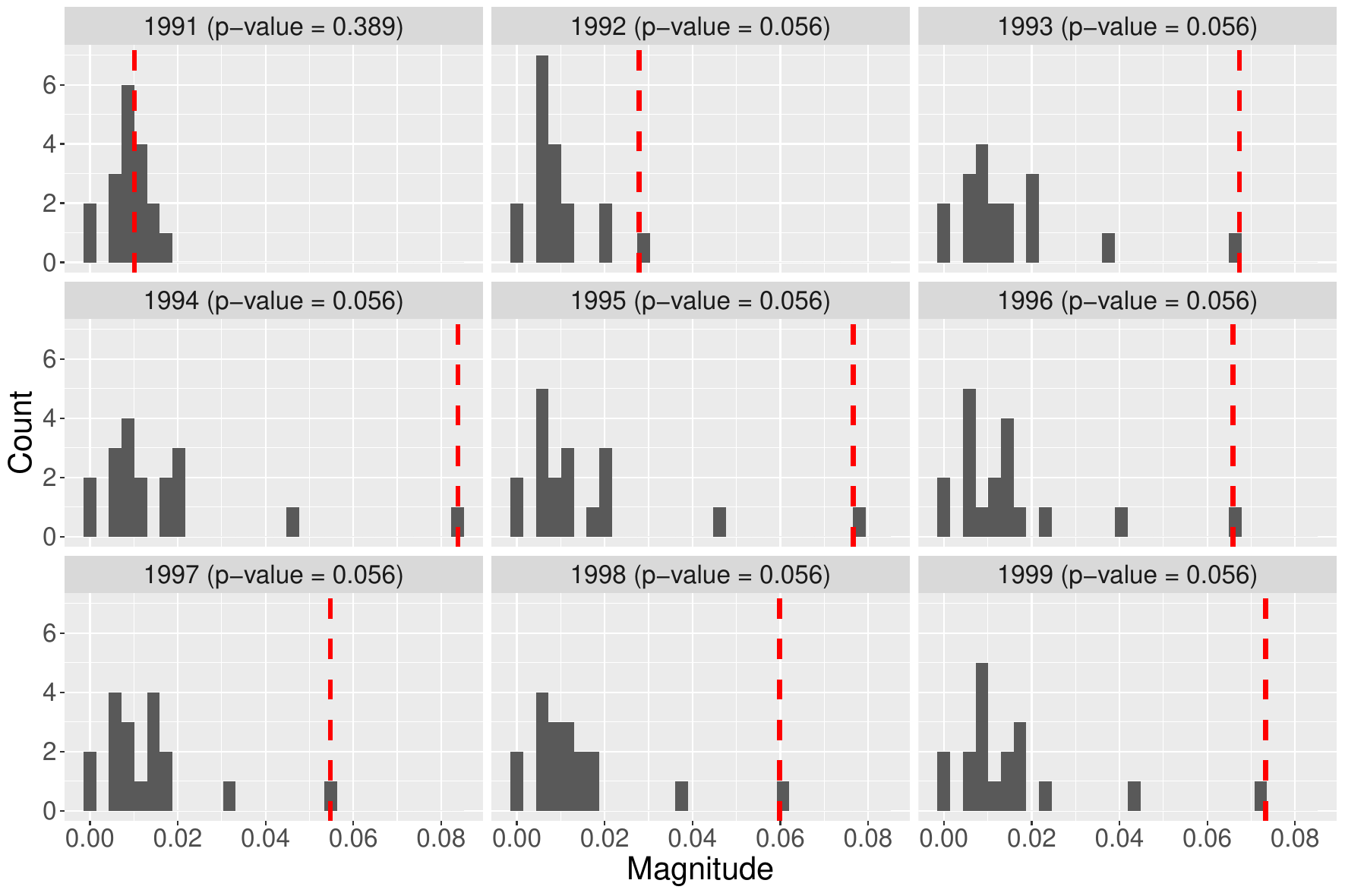}
    \caption{Results of the placebo permutation tests for the mortality data based on the ridge augmented FSC method. The histograms depict the magnitudes of the causal effects for all units, with the red dashed lines indicating the corresponding magnitudes for the treated unit.}
    \label{fig:placebo_aad}
\end{figure}

\subsection{Supplementary Figures for Section \ref{subsec:service}}
\label{subsec:supplementary_service}
Figure \ref{fig:service_difference_fsc_pre} displays heatmaps of the differences between the observed trade covariance matrices for the UK and the corresponding matrices for the FSC units during pre-treatment periods (2009 Q1--2014 Q4).
Figure \ref{fig:service_difference_afsc_pre} presents the corresponding results for the augmented FSC units over the same pre-treatment period (2009 Q1--2014 Q4).
Figures \ref{fig:service_difference_fsc_after} and \ref{fig:service_difference_afsc_after} present the corresponding results for the FSC and augmented FSC units during post-treatment period (2017 Q3--2018 Q2).
Figure \ref{fig:placebo_service} presents the detailed results of the placebo permutation test.

\begin{figure}[h]
    \centering
    \includegraphics[width=\linewidth]{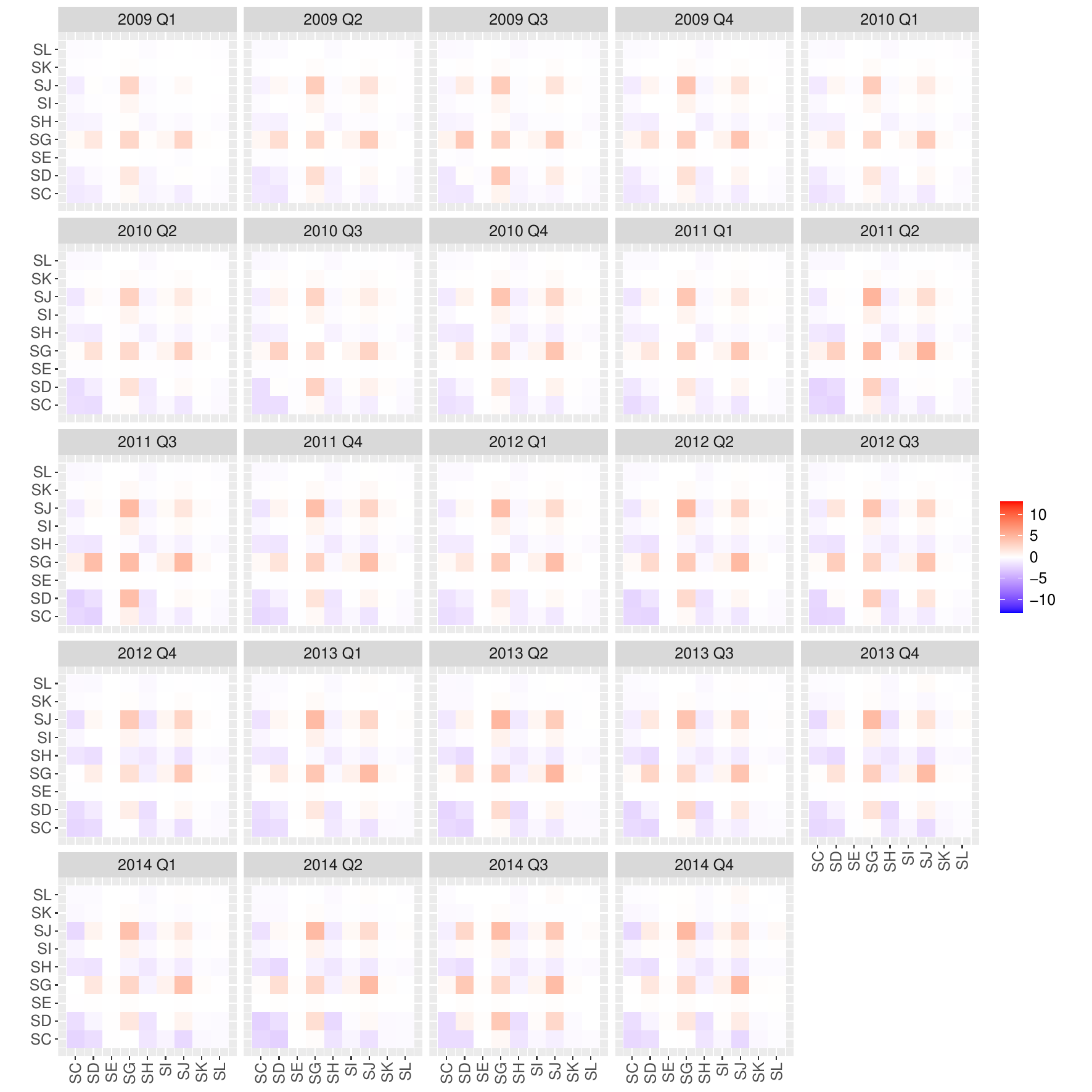}
    \caption{Heatmaps of the differences between the observed trade covariance matrices for the UK and the corresponding matrices for the FSC units during pre-treatment periods (2009 Q1--2014 Q4).}
    \label{fig:service_difference_fsc_pre}
\end{figure}

\begin{figure}[h]
    \centering
    \includegraphics[width=\linewidth]{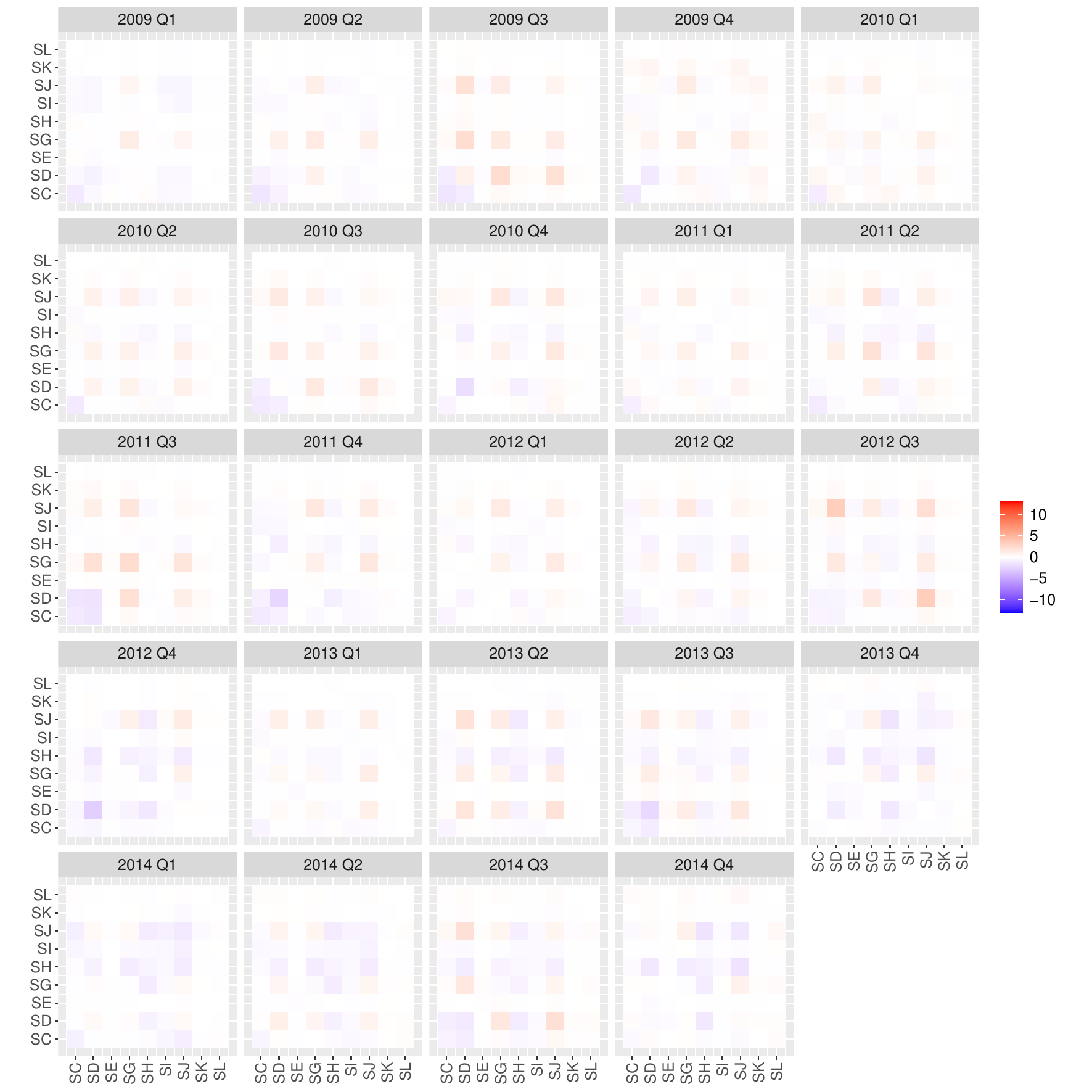}
    \caption{Heatmaps of the differences between the observed trade covariance matrices for the UK and the corresponding matrices for the  ridge augmented FSC units during pre-treatment periods (2009 Q1--2014 Q4).}
    \label{fig:service_difference_afsc_pre}
\end{figure}

\begin{figure}[h]
    \centering
    \includegraphics[width=\linewidth]{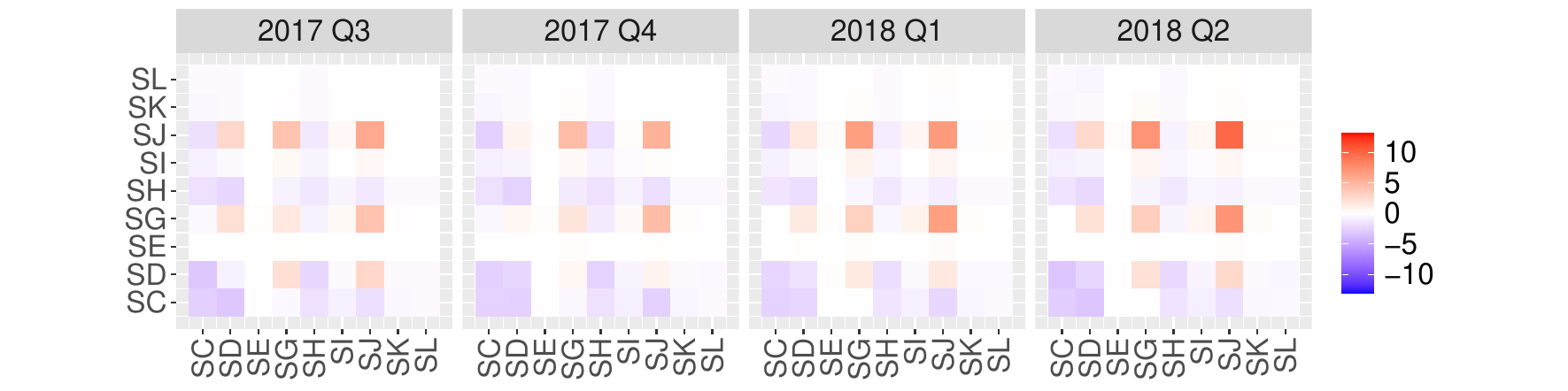}
    \caption{Heatmaps of the differences between the observed trade covariance matrices for the UK and the corresponding matrices for the FSC units during post-treatment periods (2017 Q3--2018 Q2).}
    \label{fig:service_difference_fsc_after}
\end{figure}

\begin{figure}[h]
    \centering
    \includegraphics[width=\linewidth]{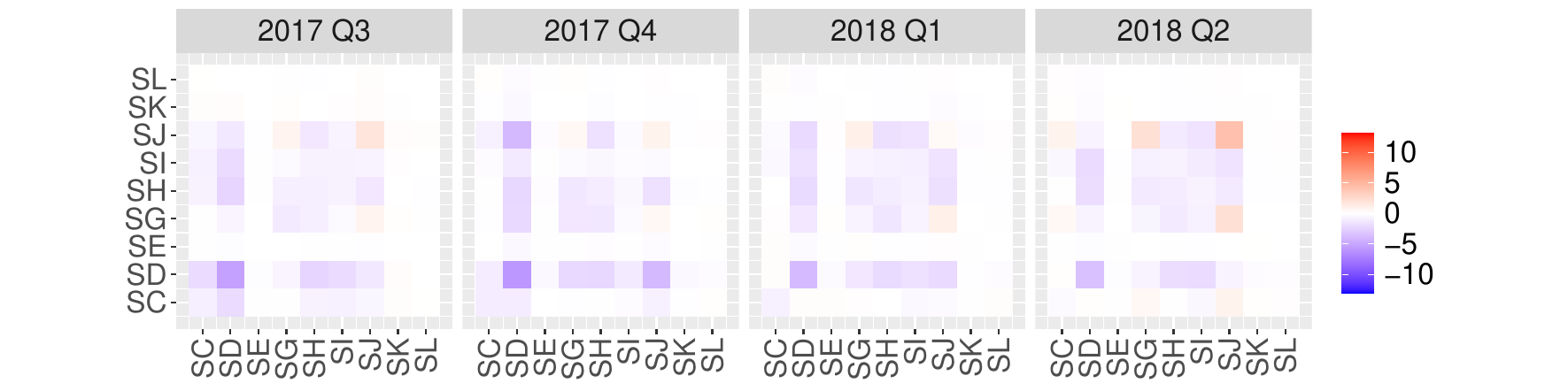}
    \caption{Heatmaps of the differences between the observed trade covariance matrices for the UK and the corresponding matrices for the ridge augmented FSC units during post-treatment periods (2017 Q3--2018 Q2).}
    \label{fig:service_difference_afsc_after}
\end{figure}

\begin{figure}[h]
    \centering
    \includegraphics[width=0.9\linewidth]{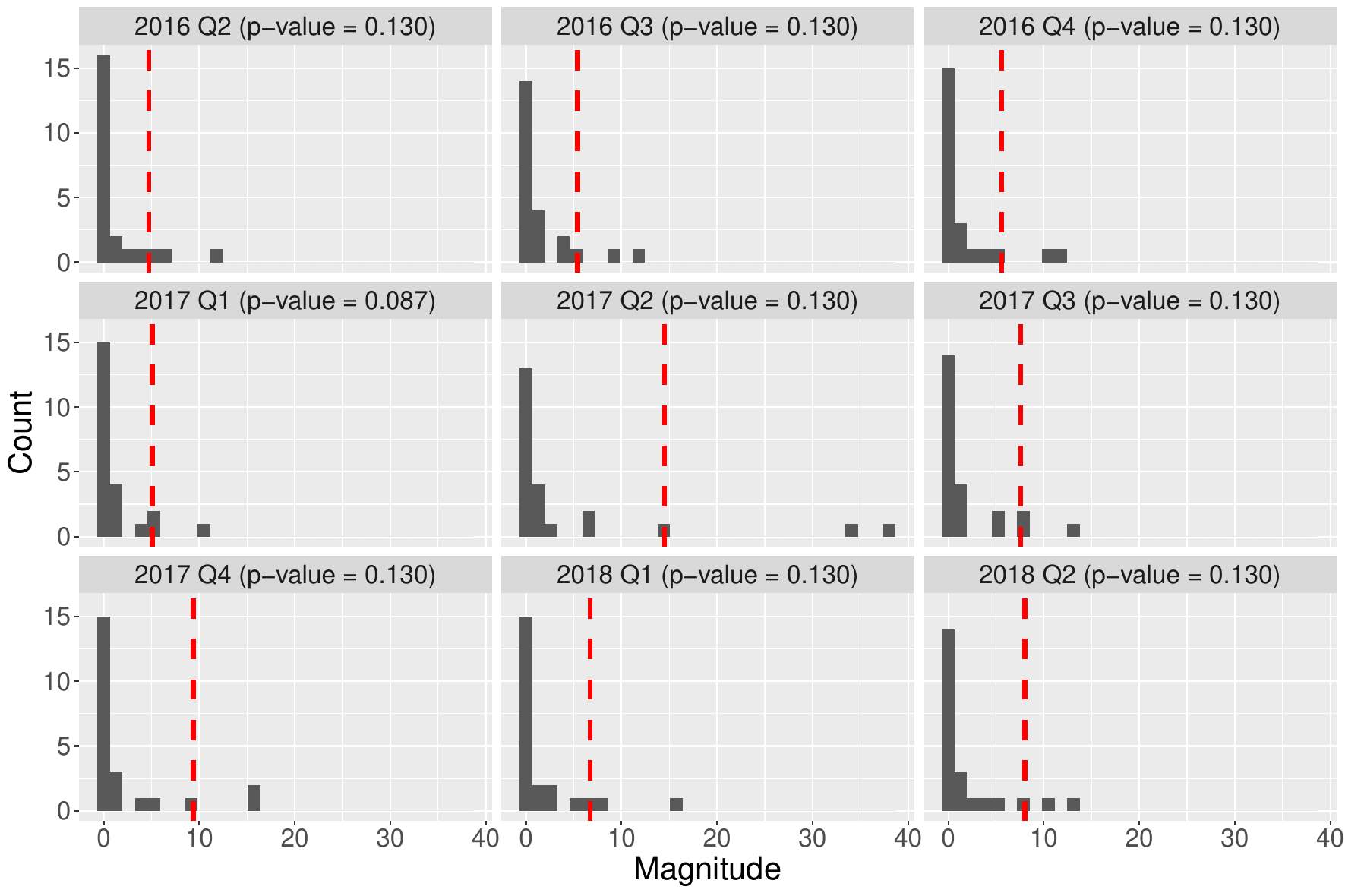}
    \caption{Results of the placebo permutation tests for the service trade data based on the ridge augmented FSC method. The histograms depict the magnitudes of the causal effects for all units, with the red dashed lines indicating the corresponding magnitudes for the treated unit.}
    \label{fig:placebo_service}
\end{figure}

\end{document}